\documentclass[11pt]{article}

\usepackage[margin=1in]{geometry}
\usepackage{amsmath,amssymb,amsthm,mathtools}
\usepackage{aliascnt}
\usepackage{bbm}
\usepackage{enumitem}
\usepackage{array,booktabs}
\usepackage{placeins}
\usepackage{xcolor}
\usepackage[colorlinks=true,linkcolor=magenta,citecolor=magenta,urlcolor=magenta]{hyperref}
\usepackage{microtype}
\usepackage[basic]{complexity}
\usepackage{tikz}
\usetikzlibrary{arrows.meta,positioning,fit,calc,shapes.misc,backgrounds}

\usepackage[numbers]{natbib}

\usepackage[nameinlink,capitalise,noabbrev]{cleveref}

\theoremstyle{plain}
\newtheorem{theorem}{Theorem}[section]
\newaliascnt{lemma}{theorem}
\newtheorem{lemma}[lemma]{Lemma}
\aliascntresetthe{lemma}
\newaliascnt{proposition}{theorem}
\newtheorem{proposition}[proposition]{Proposition}
\aliascntresetthe{proposition}
\newaliascnt{corollary}{theorem}
\newtheorem{corollary}[corollary]{Corollary}
\aliascntresetthe{corollary}
\theoremstyle{definition}
\newaliascnt{definition}{theorem}
\newtheorem{definition}[definition]{Definition}
\aliascntresetthe{definition}
\theoremstyle{remark}
\newaliascnt{remark}{theorem}

\aliascntresetthe{remark}
\theoremstyle{plain}
\newaliascnt{conjecture}{theorem}

\aliascntresetthe{conjecture}

\crefname{theorem}{Theorem}{Theorems}
\Crefname{theorem}{Theorem}{Theorems}
\crefname{lemma}{Lemma}{Lemmas}
\Crefname{lemma}{Lemma}{Lemmas}
\crefname{proposition}{Proposition}{Propositions}
\Crefname{proposition}{Proposition}{Propositions}
\crefname{corollary}{Corollary}{Corollaries}
\Crefname{corollary}{Corollary}{Corollaries}
\crefname{definition}{Definition}{Definitions}
\Crefname{definition}{Definition}{Definitions}
\crefname{remark}{Remark}{Remarks}
\Crefname{remark}{Remark}{Remarks}
\crefname{conjecture}{Conjecture}{Conjectures}
\Crefname{conjecture}{Conjecture}{Conjectures}
\crefname{equation}{Equation}{Equations}
\Crefname{equation}{Equation}{Equations}

\newcommand{\GammaPI}{\Gamma^{\operatorname{PI}}}
\newcommand{\GammaPR}{\Gamma^{\operatorname{PR}}}
\newcommand{\defeq}{\coloneqq}
\newcommand{\OPT}{\operatorname{OPT}}
\newcommand{\Ind}{\operatorname{Ind}}
\newcommand{\VR}{\operatorname{VR}}
\newcommand{\pistrat}{{\boldsymbol{\pi}}}
\newcommand{\hatpistrat}{{\widehat{\boldsymbol{\pi}}}}
\newcommand{\bstrat}{{\boldsymbol{b}}}

\usepackage{thm-restate}

\title{What You Earn from What You Learn:\\
Characterization and Complexity of the Value of Information}
\author{%
Ratip Emin Berker\textsuperscript{1,2}\thanks{These authors contributed equally to this work.}\quad
Ioannis Anagnostides\textsuperscript{1}\footnotemark[1]\quad
Emanuel Tewolde\textsuperscript{1,2}\\
Brian Hu Zhang\textsuperscript{3}\quad
Tuomas Sandholm\textsuperscript{1}\quad
Vincent Conitzer\textsuperscript{1,2}\\[0.5em]
\small \textsuperscript{1}Carnegie Mellon University, Pittsburgh PA 15213, USA\\
\small \textsuperscript{2}Foundations of Cooperative AI Lab (FOCAL)\\
\small \textsuperscript{3}Massachusetts Institute of Technology, Cambridge MA 02139, USA\\
\small \texttt{\{rberker,ianagnos,etewolde,sandholm,conitzer\}@cs.cmu.edu}\\
\small \texttt{zhangbh@csail.mit.edu}}
\date{\today}

\begin{document}
\pagenumbering{roman}
\maketitle

\begin{abstract}
We examine the \emph{value of information} in extensive-form decision problems: the largest multiplicative gain that a decision maker can obtain from an \emph{information refinement}, taken over all payoff environments. This measures the structural importance of information disentangled from any particular payoff specification, and generalizes the notion of {\em value of recall} recently introduced by Berker et al. (AAAI ‘25). We establish a unifying characterization of the value of information under any information refinement, uncovering an intriguing connection to the maximum \emph{fractional chromatic number} over a certain family of conflict graphs. We then employ our characterization to elucidate the complexity of two central information refinements.

First, for the \emph{value of perfect information} in perfect-recall decision problems, we find that the relevant conflict graphs are cographs, so the fractional chromatic number collapses to the clique number. Even though the clique number is tractable for a given cograph, we show that the additional layer of optimization (over the family of graphs) makes computing the value of perfect information $\mathsf{NP}$-complete. On the positive side, we prove fixed-parameter tractability in terms of the \emph{degree} of imperfect information. Second, for the value of recall in imperfect-recall decision problems (with no absentmindedness), we first show that the value of recall can realize the fractional chromatic number of any graph. Interestingly, this implies that, unlike the value of perfect information, the value of recall can take fractional values. Building on this reduction, we then establish $\mathsf{\Sigma^p_2}$-completeness. This addresses questions left open by Berker et al. (AAAI ‘25)---both on the characterization and the complexity of the value of recall---and shows that computing the value of recall is strictly harder than the value of perfect information.

Taken together, our results shed new light on decision making under information constraints, which is at the heart of economics and optimization.
\end{abstract}

\clearpage
\tableofcontents

\clearpage
\pagenumbering{arabic}

\section{Introduction}

Few problems have commanded as much sustained attention in economics as the problem of information. Its role has permeated microeconomics---so much so that the entire branch of~\emph{information economics} revolves around it---and underpins several pathbreaking developments: Akerlof's incisive analysis of markets with asymmetric information~\citep{Akerlof70:Market}, and the resulting theory of \emph{adverse selection}, where one side of a transaction possesses information the other lacks; \emph{moral hazard} and the crystallization of contract theory~\citep{Grossman83:Analysis}, where an agent's hidden actions after a contract is signed may impose adverse consequences on the principal; Spence's theory of \emph{signaling}~\citep{Spence73:Job} by which an informed party can credibly convey information; Stiglitz's pioneering analysis of~\emph{screening}~\citep{Stiglitz75:Theory}, whereby the underinformed party can induce the other party to reveal private information; and \emph{mechanism design}~\citep{Myerson08:Mechanism}, which studies how to design allocation rules to elicit and aggregate agents' private information. Indeed, the 2001 Nobel Prize in economics~\citep{Nobel01:Asymmetric} was awarded to Akerlof, Spence, and Stiglitz ``for their analyses of markets with asymmetric information.'' More recently, the control of information has become a central object of study in \emph{Bayesian persuasion}~\citep{Kamenica11:Bayesian,Bergemann19:Information}, where an informed sender commits to an information structure in order to influence a receiver's action. These questions are also becoming increasingly central in theoretical computer science, shaping the study of information design~\citep{Dughmi16:Algorithmic,Dughmi19:Persuasion,Celli20:Private,Castiglioni20:Online,Su2021:Bayesian,Yang24:Computational,Babichenko25:Information,Chen25:Explainable,Duetting25:Information,Ding23:Competitive,Guo25:Value}, algorithmic contract theory~\citep{Dutting24:Algorithmic,Gan22:Generalized,Alon23:Bayesian,Castiglioni25:Reduction}, and strategic machine learning~\citep{Hardt16:Strategic,Cai15:Optimum,Caragiannis16:Truthful,Feng22:Bias,Kong20:Information}.

\emph{Extensive-form} decision problems provide a canonical formalism for capturing the role of information, encoding not just the order of moves and the available actions, but also the information available at each decision point. The decision maker may lack information because it was never observed (such as the cards of other players in poker), or because it was observed but was later forgotten. The former concerns \emph{imperfect information}; the latter \emph{imperfect recall}.

While not as widely studied as other forms of imperfect information, the role of recall has still been a mainstay of game theory tracing back to the foundational work of~\citet{Kuhn53:Extensive} in the 1950s. Research on imperfect recall has recently been invigorated by many applications ranging from \emph{game abstraction}~\citep{Waugh09:Abstraction,Kroer14:Extensive,Kroer16:Imperfect,Lanctot12:No,Heymann24:Learning}, \emph{team games} with limited communication~\citep{VonStengel97:Team,Basilico17:Team,Zhang22:Correlation}, and problems in trustworthy AI and AI safety \citep{Tewolde26:Decision}, where imperfect recall arises in connection to simulating agents~\citep{Conitzer23:Foundations,Pan23:Do,Kovarik25:AI} and privacy concerns~\citep{Arrow62,Stephenson25:NDAI}, exemplified by the EU's ``right to be forgotten''~\citep{gdpr}. Against this backdrop, a key reference point for our work is the recent framework of~\citet{Berker25:Value}, who introduced the \emph{value of recall} of an imperfect-recall decision problem as the ratio between the utility that a player could obtain if endowed with perfect recall and the utility they can obtain in the original problem.

In this paper, building on~\citet{Berker25:Value}, we introduce a broader formulation that captures the \emph{value of information}, with the value of recall arising as a special case. Specifically, we are given as input an extensive-form decision problem, specifying a tree, chance probabilities, and an \emph{information partition}, which groups together nodes the decision maker cannot distinguish; the elements of this partition are known as~\emph{information sets}. We are additionally given an \emph{information refinement}, obtained by splitting some of the original information sets, hence enhancing the decision maker's ability to distinguish nodes. The value of information measures how much the decision maker \emph{could} gain from the added information. Following prior work on related problems in algorithmic game theory~\citep{Koutsoupias99:Worst,Anshelevich08:Price,Value08:Value,Letchford14:Value}, we define this gain as a \emph{ratio}. Unlike the ground model of \citet{Berker25:Value}, payoffs are \emph{not} part of the input; instead, we let the value of information be the largest possible gain over \emph{all} (nonnegative) payoff environments. This allows us to disentangle the structural importance of the information partition from the contribution of a particular payoff specification.

This perspective is motivated by many applications in which the information structure is the primary object of interest, whereas the payoff specification is uncertain, ephemeral, or application-dependent. For example, a privacy constraint may be fixed before one knows the exact objective with respect to which the resulting decision problem will be evaluated, or a mechanism designer may need to construct an information structure that performs well across many possible agent preferences. Similarly, in settings involving humans or large language models, the relevant payoff function may be difficult to specify or unstable across contexts. In such cases, a small value of information provides a strong guarantee: no matter what payoff environment is ultimately realized, the chosen refinement can improve the decision maker's value by only a limited amount. Conversely, if the value of information is large, then the payoff environment witnessing this large gap identifies the kinds of objectives for which the added information is most consequential. Overall, in many economic settings, the information structure is more stable than the payoff environment.

\subsection{Our results}
\label{sec:contributions}

We provide a unifying characterization together with new complexity results for the value of information in extensive-form decision problems. Our approach revolves around an intriguing connection we make between the value of information and the maximum \emph{fractional chromatic number} over a certain family of \emph{conflict graphs} (\Cref{def:conflict-graph}) induced by the information structure (\Cref{thm:fractional-formula}). We then leverage this general graph-theoretic characterization to shed light on the complexity of two important information refinements: i) the value of \emph{perfect information} in perfect-recall decision problems and ii) the value of \emph{recall} in imperfect-recall decision problems.

\paragraph{Value of perfect information.} First, for the value of perfect information, we show that every conflict graph is a \emph{cograph} (\Cref{lem:cograph}). As a result, the fractional chromatic number collapses to the clique number, yielding a more refined characterization. Interestingly, this shows that the value of perfect information is always an integer. Moreover, even though the clique number is tractable for a given cograph, we establish that the additional layer of optimization (over the family of graphs) makes the problem $\mathsf{NP}$-complete through a reduction from \textsc{MaximumCoverage}.

\begin{theorem}
    Deciding whether the value of perfect information in perfect-recall decision problems is at least a threshold $\tau \in \mathbb{Q}$ is \NP-complete.
\end{theorem}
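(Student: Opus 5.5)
We prove membership and hardness separately. Both rest on the characterization of \Cref{thm:fractional-formula}, which expresses the value of information as the maximum fractional chromatic number over the family of conflict graphs of \Cref{def:conflict-graph}, together with \Cref{lem:cograph}. For perfect information in perfect-recall problems, \Cref{lem:cograph} says every conflict graph is a cograph, so its fractional chromatic number equals its clique number. The value of perfect information is therefore the largest clique number of any conflict graph in the family.

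\textbf{Membership in \NP.} We would reformulate this as the maximum size of a set of pairwise-conflicting objects. Concretely, the candidate objects are the vertices of a conflict graph, determined by the perfect-information strategy and tree structure. The certificate is such a set of size at least $\lceil \tau \rceil$, which suffices since the value is an integer, together with whatever choice selects the relevant member of the family. Verifying pairwise conflict is local: it checks whether two terminal paths or pure plans are jointly realizable under the original information partition. It runs in polynomial time as long as the members of the family have polynomial-size descriptions. The one point to check is that a clique is certified by polynomially many vertices rather than exponentially many pure plans. We expect this to hold because a clique has size at most the number of leaves.

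\textbf{\NP-hardness.} We reduce from \textsc{MaximumCoverage}: given sets $S_1,\dots,S_m$ over a universe $U$ and a budget $k$, decide whether some $k$ sets cover at least $t$ elements. The tree has a chance node selecting an element $u\in U$ uniformly. After that, the player faces a gadget of information sets whose original partition lumps together nodes corresponding to different elements. The player's choices at a common information set, taken $k$ times in a perfect-recall chain, encode picking $k$ sets. Leaves for element $u$ are ``good'' exactly when some chosen set contains $u$. The aim is that the conflict-graph members in the family correspond to selections of $k$ sets. For each such selection, the clique of pairwise-incompatible winning leaves should have size equal to the number of covered elements, perhaps up to an additive constant. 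Perfect information lets the player tailor to each element, while imperfect information forces a single selection, so the induced graph's clique number matches the coverage. Setting $\tau=t$ completes the reduction.

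\textbf{Main obstacle.} The hardest step is designing the gadget so that the clique number of every conflict graph equals exactly a coverage value, with no spurious larger cliques. This requires perfect recall to be maintained and the cograph structure to align with the union/intersection pattern of the sets. Our first attempt would make each element's subtree a sequence of $k$ information sets, each shared across all elements, with actions indexed by sets. We would then verify by direct inspection of \Cref{def:conflict-graph} that two element-branches conflict iff they are simultaneously winnable only through a common selection. If this overcounts, we would add padding with chance probabilities to equalize contributions.
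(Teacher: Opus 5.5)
Your membership argument is fine and is essentially the paper's. The paper certifies only $\hatpistrat$ and computes $\omega(G_{\hatpistrat})$ using the cograph structure; putting the clique into the certificate, as you do, works just as well.

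The hardness half has a genuine gap: no gadget is actually given, and the design principle guiding your search cannot work. You speak of leaves for $u$ being ``good'' when a chosen set contains $u$, and of padding chance probabilities to equalize contributions. But $\operatorname{VoPI}$ is a supremum over \emph{all} nonnegative payoff vectors, so the reduction has no say over which leaves are valuable. By \Cref{thm:fractional-formula} and \Cref{thm:voi-integer}, it equals $\max_{\hatpistrat}\omega(G_{\hatpistrat})$. This depends only on the tree and the information partition; chance probabilities matter only through their support. Padding probabilities therefore changes nothing, and the set system has to be written into the information partition itself. Your concrete first attempt shows the failure. Take $k$ consecutive set-choosing information sets shared by all elements; perfect recall forces them to be split according to the earlier choices. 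Then two element branches conflict under a perfect-information strategy exactly when their chosen $k$-tuples differ. So $G_{\hatpistrat}$ is complete multipartite, and the value is $\min\{|U|,m^k\}$, independent of $S_1,\ldots,S_m$.

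The missing idea is the paper's two-stage gadget. After chance picks $e$, a single shared information set $I$ offers $k$ ``slot'' actions. Then \emph{chance} (not the player) picks $j\in M(e)$, and the nodes $h_{e,a,j}$ with $e\in S_j$ share an information set $J_{a,j}$ at which the player names an element $b\in U$. Fix a perfect-information strategy with slot classes $P_a$:
\begin{itemize}
\item leaves in different slots conflict at $I$, giving a join;
\item leaves with different $j$ under the same slot never conflict, giving a disjoint union;
\item inside $J_{a,j}$, two elements conflict only by naming different $b$.
\end{itemize}
Hence $\omega(G_{\hatpistrat})\le\sum_a\max_j|P_a\cap S_j|\le\OPT$, because the sets $P_a\cap S_{j_a}$ are disjoint subsets of $\bigcup_a S_{j_a}$. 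Equality is attained by assigning covered elements to slots according to an optimal cover and playing $b=e$. So the conflict graphs are indexed by element-to-slot assignments, not by selections of $k$ sets as you envisioned; the $k$ sets only emerge from the inner maxima. The paper verifies $\operatorname{VoPI}(\Gamma_{\mathcal{C}})=\OPT$ in \Cref{thm:voi-coverage-exact} directly. It uses an explicit payoff vector for the lower bound and a slot-by-slot comparison of payoffs for the upper bound.
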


Moreover, our reduction is exact (\Cref{thm:voi-coverage-exact}), thereby also implying $\class{APX}$-hardness (\Cref{cor:voi-approx-hardness}) by virtue of the well-known inapproximability of \textsc{MaximumCoverage}~\citep{Feige98:Threshold}.

On the positive side, we show that the value of perfect information can be computed in polynomial time via dynamic programming if the size of each information set---which can be thought of as the \emph{degree} of imperfect information---is bounded (\Cref{thm:bounded-width-dp}).

\paragraph{Value of recall.} For the value of recall, we begin by constructing a class of decision problems whose value of recall is precisely equal to the fractional chromatic number of any input graph (\Cref{thm:graph-embedding}), thereby implying \coNP-hardness and inapproximability (\Cref{cor:VoR-inapprox}). This reduction is particularly natural in view of our unifying characterization based on fractional coloring (\Cref{thm:fractional-formula}), and shows that---unlike the value of perfect information---the value of recall can take fractional (non-integer) values.

Building on this reduction, we establish that, as before, the additional layer of optimization over the conflict graphs lifts the complexity in the polynomial hierarchy, now yielding $\class{\Sigma^p_2}$-completeness.

\begin{theorem}
    Deciding whether the value of recall of imperfect-recall decision problems with no absentmindedness\footnote{Throughout this paper, we assume the decision problems are free of absentmindedness, which is a particular form of imperfect recall wherein the same information set can be entered multiple times on a path (\Cref{def:no-absentmindedness}).} is at least a threshold $\tau \in \mathbb{Q}$ is $\class{\Sigma^p_2}$-complete.
\end{theorem}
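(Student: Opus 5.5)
We prove membership and hardness separately. Both rest on the characterization of \Cref{thm:fractional-formula}: the value of recall equals the maximum, over the family of conflict graphs induced by the information structure, of the fractional chromatic number. Roughly, a conflict graph arises from fixing a pure strategy profile, or a selection of terminal nodes/paths reachable under the refined partition. Its vertices are terminal nodes, and two vertices are adjacent when no single strategy of the original imperfect-recall problem can reach both.

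\textbf{Membership in $\Sigma^p_2$.} The existential quantifier guesses the object indexing a conflict graph $G$, of polynomial size. It then guesses a fractional clique of $G$: weights $y_v \ge 0$ on vertices with $\sum_{v \in I} y_v \le 1$ for every independent set $I$ and $\sum_v y_v \ge \tau$. By LP duality, $\chi_f(G) \ge \tau$ if and only if such a $y$ exists. A basic optimal solution of the dual LP has rational entries of polynomial bit-length, since it is determined by a subsystem of at most $|V|$ tight constraints with $0/1$ coefficients. The universal quantifier then ranges over all independent sets $I$ of $G$ and checks $\sum_{v\in I} y_v \le 1$, which is a polynomial-time check. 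Independence in $G$ means a single original strategy is consistent with all of $I$, so it can be verified in polynomial time given the problem description. This yields an $\exists\forall$ formula, so the problem is in $\Sigma^p_2$.

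\textbf{Hardness.} We reduce from a $\Sigma^p_2$-complete graph problem of the form ``$\exists$ choice, $\forall$ \ldots''. The natural candidate, matching the remark that the extra optimization over conflict graphs lifts the complexity, is a problem of the following kind. Given a graph $H$ whose vertices are partitioned into groups, decide whether one can select a vertex from each group, or a subgraph from a succinct family, so that the induced subgraph has chromatic number, or clique-type parameter, above a threshold. Known $\Sigma^p_2$-complete variants include generalized Ramsey and coloring problems (Schaefer--Umans), and $\exists\forall$ versions of non-$k$-colorability of the chosen induced subgraph. The reduction proceeds in three steps. First, we start from the gadget of \Cref{thm:graph-embedding}, which realizes $\chi_f(G)$ of a given $G$ as a value of recall. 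Second, we prepend decision nodes, which are split by the refinement, that let the perfect-recall player choose which vertices and edges are ``active''. In this way the family of conflict graphs is exactly the family of induced subgraphs of $H$ selectable by the existential choice. Third, we ensure the maximum of $\chi_f$ over the family is at least $\tau$ exactly in yes-instances.

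\textbf{Main obstacle.} The hard part is that we need a $\Sigma^p_2$-hard source problem about \emph{fractional} chromatic number, whereas the known results concern $\chi$. A gap is needed: $\chi_f$ is $\mathsf{NP}$-hard to distinguish at thresholds, for example via the Lund--Yannakakis style hardness that underlies \Cref{cor:VoR-inapprox}. We would first try to lift a $\forall\exists$-3SAT (i.e.\ $\exists\forall$ complement) reduction through a Lund--Yannakakis or graph-product construction in which $\chi_f$ and $\chi$ coincide up to a controlled gap. The existential variables select among gadget branches, and the universal part is encoded in colorability. The second difficulty is to prevent the construction from inadvertently creating unintended conflict graphs, which would come from strategies mixing branches, with larger $\chi_f$. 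We handle this by making branch-mixing paths disjoint in information so that they yield only subgraphs of the intended ones, and monotonicity of $\chi_f$ then controls them.
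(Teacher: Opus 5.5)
Your membership argument is correct and matches the paper's. Guessing $\hatpistrat$ together with a vertex of the dual LP is the paper's guess of $\hatpistrat$ and a polynomial-bit payoff vector (via $w_\ell=p_c(\ell)v_\ell$). Quantifying universally over independent sets is equivalent, by \Cref{lem:independent-compatible}, to quantifying over original pure strategies.

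\textbf{The hardness half is a plan rather than a proof.} The step you call ``the hard part'' is exactly the missing idea. You list candidate source problems but never commit to one. More importantly, you never give the transformation that turns a chromatic-number condition into a threshold on $\max\chi_f$. The paper does this in two concrete steps.
\begin{enumerate}
\item From $\exists X\,\forall Y\,\Phi$ with $\Phi$ in DNF, it encodes the CNF $\neg\Phi$ (after a restriction-preserving conversion to 3-CNF) as a standard 3-colorability graph $G_0$ with palette triangle $\mathsf T,\mathsf F,\mathsf B$. It then adds one \emph{optional edge}, $\{x_i,\mathsf T\}$ or $\{\bar x_i,\mathsf T\}$, per existential variable, forcing $x_i=a_i$. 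Then $G_{\boldsymbol a}$ is non-3-colorable iff $\forall Y\,\Phi(\boldsymbol a,Y)$.
\item The FGLSS-type lift $\mathcal F(G)$ on $W\times\{0,1,2\}$ converts this into a fractional gap. If $\phi$ properly 3-colors $G$, then $(w,c)\mapsto c+\phi(w)\bmod 3$ properly 3-colors $\mathcal F(G)$, so $\chi_f=3$. Otherwise every independent set has size at most $r-1$, so $\chi_f\ge 3r/(r-1)$.
\end{enumerate}
Your sketch misses two features of this route. First, an inverse-polynomial gap already suffices, because $\tau$ is an arbitrary rational; no Lund--Yannakakis/PCP machinery or graph products are needed. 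Second, the lift commutes with optional edges: each optional edge of $G_{\boldsymbol a}$ becomes three optional edges of $\mathcal F_{\boldsymbol a}$. That is exactly what lets a selector gadget realize the family $\{\mathcal F_{\boldsymbol a}\}$.

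\textbf{The selector mechanism as you describe it would fail.} You say the prepended decision nodes are ``split by the refinement'' and let the perfect-recall player choose which vertices and edges are active. Since $\chi_f$ is monotone under adding vertices and edges, per-branch freedom lets the refined player simply activate everything. The maximum over the family then collapses to a single graph, and the $\exists$ quantifier disappears. In the paper's setting, activating both $\{x_i,\mathsf T\}$ and $\{\bar x_i,\mathsf T\}$ makes $G$ trivially non-3-colorable. Differing choices at one original information set would also create extra conflict edges at the selector itself.

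Two properties are needed instead.
\begin{itemize}
\item Each existential choice must be made at an information set that $\GammaPR$ leaves \emph{intact}. The paper places $S_i$ as the first player information set on the six branches $(\lambda_i^{(b)},c)$, so all its nodes share the empty sequence.
\item The choice must be a genuine trade-off. Action $b$ routes the branches $(\lambda_i^{(b)},c)$ into the optional test $I_{i,b,c}$, while the branches $(\lambda_i^{(1-b)},c)$ bypass theirs.
\end{itemize}
Private tags placed after the selector then split all later shared tests in $\GammaPR$. Your monotonicity argument supplies the matching upper bound: every $G_\hatpistrat$ is a spanning subgraph of $\mathcal F_{\boldsymbol a}$, where $\boldsymbol a$ is the selector bits chosen by $\hatpistrat$.
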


This gives a complete characterization of its complexity, addressing a key question left open by~\citet{Berker25:Value}. They observed that for decision problems with absentmindedness but no chance nodes there is a simple (polynomial-time) characterization for the value of recall~\citep[Corollary 25]{Berker25:Value}. However, they did not obtain a tight characterization for the more challenging case where chance nodes are present, which is the crux of this problem, deriving only a loose bound~\citep[Corollary 27]{Berker25:Value}. Our paper closes this gap both in terms of providing a characterization and resolving its complexity.

From a technical standpoint, our reduction establishing $\class{\Sigma^p_2}$-hardness is especially natural, and builds on the aforementioned reduction based on fractional coloring. In particular, the reduction route goes from the canonical $\class{\Sigma^p_2}$-complete problem $\exists \forall$ DNF satisfiability, to 3-colorability \emph{with optional edges} (\Cref{lem:optional-edge-3col}), fractional colorability again with optional edges (\Cref{lem:fglss-gap}), and finally to the value of recall (\Cref{lem:vor-graph-realization}). The basic idea in this last step is that we start from our previous reduction based on fractional coloring, and then construct on top of that additional \emph{selection} gadgets, which allow us to encode the presence of optional edges.

Taken as a whole, the main takeaways from our paper are as follows.

\begin{itemize}
    \item Determining the value of information is hard even in ostensibly simple environments. In particular, since our hardness results only require a single decision maker, they certainly imply intractability for games---settings with multiple players. Also, our results apply even under perfect recall (for value of perfect information) or no absentmindedness (for value of recall). At the same time, more structured decision problems elude these barriers, as shown by our fixed-parameter tractability result in terms of the degree of imperfect information.
    \item Computing the value of perfect information is strictly easier than the value of recall (unless there is a complexity collapse in the polynomial hierarchy), demonstrating that there are fundamental separations between different types of information refinements. From a more practical standpoint, problems in \NP~are amenable to highly effective tools such as SAT solvers, whereas our results show that such techniques are not applicable to value of recall.
    \item Our characterization in terms of fractional chromatic number applies to \emph{any} information refinement and uncovers a new connection between imperfect-information decision making and graph-theoretic notions.
\end{itemize}

\paragraph{Illustrative example.} In Akerlof's ``market for lemons''~\citep{Akerlof70:Market}, a seller observes a car's quality while a prospective buyer does not. Conditional on a car being offered at a fixed price, suppose its two possible qualities are equally likely. The buyer must make the same purchase decision at both nodes of the information set below; a perfectly informative inspection splits this set and permits a quality-contingent decision. The two desirable outcomes---buying a good car and declining a lemon---require different actions at the original information set, so their conflict graph is $K_2$ and~\Cref{thm:fractional-formula} gives a value of perfect information of $\chi_f(K_2)=2$. More broadly, our framework isolates the value of inspection, certification, or screening in such environments.

\begin{center}
\begin{tikzpicture}[
    x=1cm,y=1cm,scale=0.9,transform shape,font=\small,
    chance/.style={circle,draw,inner sep=2pt},
    buyer/.style={circle,draw,fill=white,inner sep=2pt},
    infoset/.style={draw=blue!70,dashed,rounded corners,inner xsep=3pt,inner ysep=1.5pt},
    outcome/.style={circle,draw,inner sep=2pt},
    desirable/.style={outcome,fill=blue!12},
    terminal/.style={rectangle,rounded corners=2pt}
]
\foreach \shift/\prefix/\title in {0/u/Before inspection,5/r/After inspection} {
    \begin{scope}[xshift=\shift cm]
    \node at (0,0.55) {\title};
    \node[chance] (\prefix-c) at (0,0) {$c$};
    \node[buyer] (\prefix-g) at (-1.1,-1.1) {$b$};
    \node[buyer] (\prefix-l) at (1.1,-1.1) {$b$};
    \draw (\prefix-c) -- node[above left,font=\scriptsize] {good: $\frac12$} (\prefix-g);
    \draw (\prefix-c) -- node[above right,font=\scriptsize] {lemon: $\frac12$} (\prefix-l);
    \node[desirable,terminal] (\prefix-gb) at (-1.8,-2.5) {$1$};
    \node[outcome,terminal] (\prefix-gd) at (-0.4,-2.5) {$0$};
    \node[outcome,terminal] (\prefix-lb) at (0.4,-2.5) {$0$};
    \node[desirable,terminal] (\prefix-ld) at (1.8,-2.5) {$1$};
    \draw (\prefix-g) -- node[pos=0.65,sloped,above,font=\scriptsize] {buy} (\prefix-gb);
    \draw (\prefix-g) -- node[pos=0.65,sloped,above,font=\scriptsize] {decline} (\prefix-gd);
    \draw (\prefix-l) -- node[pos=0.65,sloped,above,font=\scriptsize] {buy} (\prefix-lb);
    \draw (\prefix-l) -- node[pos=0.65,sloped,above,font=\scriptsize] {decline} (\prefix-ld);
    \end{scope}
}
\node[infoset,fit=(u-g)(u-l)] {};
\node[infoset,fit=(r-g)] {};
\node[infoset,fit=(r-l)] {};
\node at (0,-3.05) {Optimal payoff: $\frac12$};
\node at (5,-3.05) {Optimal payoff: $1$};
\begin{scope}[xshift=9.2cm]
    \node[align=center] at (0,0.55) {Conflict graph $K_2$};
    \node[desirable,label={[font=\scriptsize]above:good, buy}] (v-g) at (0,-0.65) {};
    \node[desirable,label={[font=\scriptsize]below:lemon, decline}] (v-l) at (0,-1.65) {};
    \draw (v-g) -- (v-l);
    \node at (0,-3.05) {$\chi_f(K_2)=2$};
\end{scope}
\end{tikzpicture}
\end{center}

\subsection{Related work}
\label{sec:related-work}

From an algorithmic point of view, \citet{Shachter99:Efficient} considered the problem of computing the value of information based on \emph{influence diagrams}~\citep{Howard05:Influence}. Compared to the extensive form, an influence diagram is an alternative formalism for capturing decision making problems under uncertainty: it is a compact \emph{graphical model} that encodes probabilistic dependencies among random variables capturing uncertainty, rather than an explicit tree with information sets. In a related direction, \citet{Krause05:Near} considered (nonmyopically) selecting the most informative subset of variables in a graphical model, a problem tied to several practical applications, such as sensor network deployment. Prior work typically evaluates observations relative to fixed payoffs, whereas we measure the largest multiplicative gain over all payoff environments. From a complexity point of view, our results connect to research on \emph{bilevel optimization} and \emph{interdiction}; we refer to~\citet{Grune25:Completeness,Grune26:Complexity} for recent progress in such problems and further discussion.

Closer to our work is the recent framework of~\citet{Berker25:Value}, which introduced the value of recall. Their main focus was on bounding the value of recall in single-player decision problems, but they also considered settings with multiple players---games. For example, they observed that more recall \emph{may hurt} in two-player games; this counterintuitive behavior cannot arise in single-player decision problems. \citet{Alqithami26:Dynamic} operationalized value of recall in homophily-based network models so as to improve resilience under adversarial disruptions. Relatedly, \citet{Levy25:Information} study information design for AI proxies whose memories can be erased between periods, establishing a revelation principle. Following~\citet{Berker25:Value}, we define the value of information as a \emph{ratio}. This adheres to a long tradition in computer science in general, and algorithmic game theory in particular; for example, this is the case in notions such as the price of anarchy~\citep{Koutsoupias99:Worst}, the price of stability~\citep{Anshelevich08:Price}, the mediation value~\citep{Value08:Value}, and the value of commitment~\citep{Letchford14:Value}.

\paragraph{Broader context in economics.} Two additional foundational works concerning the role of information in economics are Blackwell's comparison of experiments~\citep{Blackwell53:Equivalent} and Hayek's influential account of prices as vessels of dispersed knowledge~\citep{Hayek45:Use}. Other early work continued viewing information through its effect on decision quality~\citep{Feltham68:Value}, while subsequent papers revealed subtle behavior, including the nonconcavity phenomenon of~\citet{Radner84:Nonconcavity} and, more recently, conditions under which concavity can be recovered~\citep{Whitmeyer25:Concavity}. Other work compares information structures or studies their value in different economic environments such as correlated equilibria in games~\citep{Gossner00:Comparison}, stochastic information structures~\citep{Azrieli08:Value}, production economies~\citep{Eckwert01:Value}, and teams with strategic information sharing~\citep{Gershkov16:How}. Related work in economics studies how retaining or deleting past records affects market outcomes. \citet{Liberman18:Equilibrium} examine a large-scale deletion of default records in Chile, finding that it redistributed access to credit and reduced aggregate borrowing. \citet{Kovbasyuk24:Memory} show that appropriately deleting records in a dynamic market can sustain trade when retaining unlimited histories would lead to market breakdown.

\section{Decision problems and information refinements}
\label{sec:model}

Before we proceed with our results, we provide some background. \Cref{subsec:extensive-form-decision-problems} begins by describing extensive-form decision problems and the notion of  perfect recall.  \Cref{subsec:refinements} defines the perfect-information and perfect-recall refinements.  \Cref{subsec:value-ratio} introduces a unifying value-ratio objective.

\subsection{Extensive-form decision problems}

\label{subsec:extensive-form-decision-problems}
A \emph{finite extensive-form decision problem} $\Gamma$ consists of a finite rooted tree~\cite{Kuhn53:Extensive,FudenbergTirole91:Game,OsborneRubinstein94:Course}. Its \emph{terminal nodes} (or leaves) are denoted by $L$. Every nonterminal node is either a \emph{chance node} or a \emph{player node}.  At each chance node $c$, chance chooses an outgoing action $a$ with rational probability $p_c(a)$, and the probabilities over the outgoing actions of $c$ sum to one. Player nodes are partitioned into \emph{information sets} $I\in\mathcal{I}(\Gamma)$; intuitively, an information set captures the set of nodes that the decision-maker cannot distinguish between. All nodes in a common information set have the same available \emph{action set}, denoted $A_I$, which bijectively maps to the immediate descendants of each node in $I$.

We write $H$ for the set of player nodes.  A \emph{pure strategy} $\pistrat$ of $\Gamma$ chooses one action $\pi_I\in A_I$ at every information set $I$ of $\Gamma$.  The finite set of pure strategies is denoted by $\Pi\defeq\Pi(\Gamma)$. A \emph{behavioral strategy} $\bstrat$ chooses a probability distribution over $A_I$ at every information set $I$.

For a terminal leaf $\ell\in L$, $p_c(\ell)$ denotes the product of chance probabilities on the root-to-$\ell$ path, excluding player choices.  We can eliminate all terminal leaves with $p_c(\ell)=0$; this does not change any expected payoff, so throughout the paper every leaf has $p_c(\ell) > 0$. We next introduce a property regarding the information structure of a decision problem. 

\begin{definition}[No absentmindedness]
\label{def:no-absentmindedness}
The decision problem $\Gamma$ has \emph{no absentmindedness} if every root-to-leaf path visits each information set at most once.
\end{definition}

No absentmindedness rules out a single play path entering the same information set twice, but it still allows the player to forget which information set or action was encountered earlier. Absentmindedness is a particular manifestation of imperfect recall. Throughout this paper, our blanket assumption is that the problem is free of absentmindedness.

\begin{definition}[Perfect recall]
\label{def:perfect-recall}
The decision problem $\Gamma$ has \emph{perfect recall} if any two player nodes in the same information set have the same sequence of the player's own previously visited information sets and own previously chosen actions.  Equivalently, information sets never merge histories that differ in the player's own past observations or actions.
\end{definition}

Perfect recall implies no absentmindedness. \Cref{fig:efg-concepts} illustrates these different concepts.

\paragraph{Payoff vectors and optimal strategies.} So far, we have not specified the underlying payoff structure. For a pure strategy $\pistrat\in\Pi$, let $\boldsymbol{x}^\pistrat\in\mathbb{R}_{\ge 0}^L$ denote the induced distribution over leaves:
\[
        x^\pistrat_\ell
        \defeq
        p_c(\ell)\,
        \mathbbm{1}\{\pistrat\text{ chooses every player action on the path to }\ell\}.
\]

Whenever $x^\pistrat_\ell>0$, we say $\pistrat$ is \emph{compatible} with $\ell$. A payoff vector $\boldsymbol{v}\in\mathbb{R}_{\ge 0}^L$ assigns a nonnegative payoff $v_\ell$ to each leaf, representing how much the decision maker values reaching outcome $\ell$. Importantly, we do \emph{not} take $\boldsymbol{v}$ to be part of the decision-problem description, allowing us to study the same decision problem $\Gamma$ under various payoff structures. For a given payoff vector $\boldsymbol{v}$, we define
\[
        V_{\Gamma}(\boldsymbol{v})
        \defeq
        \max_{\pistrat\in\Pi(\Gamma)}
        \sum_{\ell\in L} x^\pistrat_\ell v_\ell
\]
as the maximum attainable expected payoff; without absentmindedness, the maximum over behavioral strategies is always attained in pure strategies. For more background on imperfect recall and absentmindedness, we refer to~\citet{OsborneRubinstein94:Course} and~\citet{PiccioneRubinstein97:Interpretation}.

\begin{figure}[!t]
\centering
\begin{tikzpicture}[
    x=1cm,y=1cm,
    >=Latex,
    panel/.style={font=\small},
    chance/.style={circle,draw,minimum size=6mm,inner sep=0pt},
    player/.style={circle,draw,minimum size=6mm,inner sep=0pt},
    leaf/.style={rectangle,rounded corners,draw,minimum width=7mm,minimum height=4.5mm,inner sep=1pt},
    infoset/.style={rounded corners,dashed,draw=blue!70},
    edge/.style={->}
]

\begin{scope}[panel,shift={(-3.15,0)}]
\node[chance] (pi-r) at (0,0) {$c$};
\node[player] (pi-h1) at (-1.05,-1.15) {$h_1$};
\node[player] (pi-h2) at (1.05,-1.15) {$h_2$};
\draw[edge] (pi-r) -- node[left] {$p$} (pi-h1);
\draw[edge] (pi-r) -- node[right] {$1-p$} (pi-h2);
\node[leaf] (pi-l1) at (-1.65,-2.35) {$\ell_1$};
\node[leaf] (pi-l2) at (-0.45,-2.35) {$\ell_2$};
\node[leaf] (pi-l3) at (0.45,-2.35) {$\ell_3$};
\node[leaf] (pi-l4) at (1.65,-2.35) {$\ell_4$};
\draw[edge] (pi-h1) -- node[left] {$\textsf{L}$} (pi-l1);
\draw[edge] (pi-h1) -- node[right] {$\textsf{R}$} (pi-l2);
\draw[edge] (pi-h2) -- node[left] {$\textsf{L}$} (pi-l3);
\draw[edge] (pi-h2) -- node[right] {$\textsf{R}$} (pi-l4);
\node[align=center,font=\small] at (0,-3.05) {perfect\\information};
\end{scope}

\begin{scope}[panel,shift={(3.15,0)}]
\node[chance] (pr-r) at (0,0) {$c$};
\node[player] (pr-h1) at (-1.05,-1.15) {$h_1$};
\node[player] (pr-h2) at (1.05,-1.15) {$h_2$};
\draw[edge] (pr-r) -- node[pos=.35,left] {$p$} (pr-h1);
\draw[edge] (pr-r) -- node[pos=.35,right] {$1-p$} (pr-h2);
\node[leaf] (pr-l1) at (-1.65,-2.35) {$\ell_1$};
\node[leaf] (pr-l2) at (-0.45,-2.35) {$\ell_2$};
\node[leaf] (pr-l3) at (0.45,-2.35) {$\ell_3$};
\node[leaf] (pr-l4) at (1.65,-2.35) {$\ell_4$};
\draw[edge] (pr-h1) -- node[left] {$\textsf{L}$} (pr-l1);
\draw[edge] (pr-h1) -- node[right] {$\textsf{R}$} (pr-l2);
\draw[edge] (pr-h2) -- node[left] {$\textsf{L}$} (pr-l3);
\draw[edge] (pr-h2) -- node[right] {$\textsf{R}$} (pr-l4);
\node[infoset,fit=(pr-h1)(pr-h2),inner xsep=1mm,inner ysep=1mm] {};
\node[blue!70,font=\small,anchor=west] at (1.75,-1.15) {$I$};
\node[align=center,font=\small] at (0,-3.05) {imperfect information\\with perfect recall};
\end{scope}

\begin{scope}[panel,shift={(-3.15,-4.35)}]
\node[player] (ir-r) at (0,0) {$h_0$};
\node[player] (ir-h1) at (-1.05,-1.15) {$h_1$};
\node[player] (ir-h2) at (1.05,-1.15) {$h_2$};
\draw[edge] (ir-r) -- node[pos=.35,left] {$\textsf{L}_1$} (ir-h1);
\draw[edge] (ir-r) -- node[pos=.35,right] {$\textsf{R}_1$} (ir-h2);
\node[leaf] (ir-l1) at (-1.65,-2.35) {$\ell_1$};
\node[leaf] (ir-l2) at (-0.45,-2.35) {$\ell_2$};
\node[leaf] (ir-l3) at (0.45,-2.35) {$\ell_3$};
\node[leaf] (ir-l4) at (1.65,-2.35) {$\ell_4$};
\draw[edge] (ir-h1) -- node[left] {$\textsf{L}_2$} (ir-l1);
\draw[edge] (ir-h1) -- node[right] {$\textsf{R}_2$} (ir-l2);
\draw[edge] (ir-h2) -- node[left] {$\textsf{L}_2$} (ir-l3);
\draw[edge] (ir-h2) -- node[right] {$\textsf{R}_2$} (ir-l4);
\node[infoset,fit=(ir-h1)(ir-h2),inner xsep=1mm,inner ysep=1mm] {};
\node[blue!70,font=\small,anchor=west] at (1.75,-1.15) {$I$};
\node[align=center,font=\small] at (0,-3.05) {imperfect\\recall};
\end{scope}

\begin{scope}[panel,shift={(3.15,-4.35)}]
\node[player] (am-h1) at (0,0) {$h_1$};
\node[leaf] (am-l1) at (-1.25,-1.35) {$\ell_1$};
\node[player] (am-h2) at (1.05,-1.15) {$h_2$};
\draw[edge] (am-h1) -- node[left] {$\textsf{L}$} (am-l1);
\draw[edge] (am-h1) -- node[right] {$\textsf{R}$} (am-h2);
\node[leaf] (am-l2) at (0.45,-2.35) {$\ell_2$};
\node[leaf] (am-l3) at (1.65,-2.35) {$\ell_3$};
\draw[edge] (am-h2) -- node[left] {$\textsf{L}$} (am-l2);
\draw[edge] (am-h2) -- node[right] {$\textsf{R}$} (am-l3);
\node[infoset,fit=(am-h1)(am-h2),inner xsep=1mm,inner ysep=1mm] {};
\node[blue!70,font=\small,anchor=west] at (1.75,-0.57) {$I$};
\node[align=center,font=\small] at (0,-3.05) {absentmindedness};
\end{scope}

\end{tikzpicture}
\caption{Four information structures in extensive-form decision problems.  Outgoing arrow labels indicate action labels (for player nodes) or probabilities (for chance nodes). The dashed boxes indicate non-singleton information sets.}
\label{fig:efg-concepts}
\end{figure}

\subsection{Perfect-information and perfect-recall refinements}
\label{subsec:refinements}

We now introduce the notion of an \emph{(information) refinement} of a decision problem.

\begin{definition}[Refinement]
\label{def:refinement}
A \emph{refinement} of $\Gamma$ is a new decision problem $\widehat\Gamma$ obtained by splitting information sets of $\Gamma$ while leaving the tree, terminal leaves, action labels, and chance probabilities unchanged.  Thus, each refined information set is contained in an original information set and inherits its action set.
\end{definition}

Intuitively, $\widehat\Gamma$ provides the decision maker with more information than $\Gamma$, as they can now distinguish between more nodes. We now introduce two specific refinements central to our work.

\begin{definition}[Perfect-information refinement]
\label{def:pi-refinement}
For any decision problem $\Gamma$, its \emph{perfect-information refinement} $\GammaPI$ is the refinement in which every player node is placed in a singleton information set.
\end{definition}

\begin{definition}[Coarsest perfect-recall refinement]
\label{def:pr-refinement}
When $\Gamma$ has no absentmindedness, $\operatorname{seq}(h)$ denotes, for a player node $h$, the sequence of the player's own previously visited information sets and own actions on the root-to-$h$ path, measured in the original decision problem $\Gamma$.  The \emph{coarsest perfect-recall refinement} $\GammaPR$ partitions any original information set $I$ into new information sets defined by the equivalence relation
\[
        h\sim h'\quad\Longleftrightarrow\quad
        \operatorname{seq}(h)=\operatorname{seq}(h'),
        \quad h,h'\in I.
\]
\end{definition}
Equivalently, $\GammaPR$ distinguishes exactly the information that is required to restore perfect recall, and no more. In particular, \Cref{def:pr-refinement} is the \emph{coarsest refinement} used in the value-of-recall model of~\citet{Berker25:Value}: any other perfect recall refinement of $\Gamma$ is also a refinement of $\GammaPR$. If $\Gamma$ already has perfect recall, then $\GammaPR=\Gamma$.

\subsection{Value ratios}
\label{subsec:value-ratio}

Let $\widehat\Gamma$ be any refinement of $\Gamma$ in the sense of \Cref{def:refinement}. The \emph{value ratio} of refining $\Gamma$ to $\widehat\Gamma$ is defined as the supremum gain (in terms of terminal payoffs) over all possible payoff environments.

\begin{definition}[Value ratio]
\label{def:common-ratio}
The \emph{value ratio} of the refinement $\widehat\Gamma/\Gamma$ is
\[
        \VR(\widehat\Gamma/\Gamma)
        \defeq
        \sup_{\boldsymbol{v}\in\mathbb{R}_{\ge 0}^L,\ \boldsymbol{v}\ne 0}
        \frac{V_{\widehat\Gamma}(\boldsymbol{v})}{V_\Gamma(\boldsymbol{v})}.
\]
The \emph{value of perfect information} is $\operatorname{VoPI}(\Gamma)\defeq\VR(\GammaPI/\Gamma)$,  where $\Gamma$ is assumed to have perfect recall.  The \emph{value of recall} is $\operatorname{VoR}(\Gamma)\defeq\VR(\GammaPR/\Gamma)$, where $\Gamma$ is assumed to have no absentmindedness.
\end{definition}

Since we assume $p_c(\ell) > 0$ for all $\ell \in L$ and every root-to-leaf path is compatible with some pure strategy (under no absentmindedness), $V_\Gamma(\boldsymbol{v})>0$ for every nonzero $\boldsymbol{v}\ge 0$. Thus, the ratios are well-defined. Splitting information sets can only enlarge the set of pure strategies, so $V_{\widehat\Gamma}(\boldsymbol{v})\ge V_\Gamma(\boldsymbol{v})$ and every ratio is at least one.\footnote{Interestingly, this is not necessarily the case in settings with multiple players~\citep{Berker25:Value}.} For reference, our main notation is gathered in \Cref{tab:model-notation}.

\begin{table}[!htbp]
\caption{Notation used for the model and value ratio.}
\centering
\small
\renewcommand{\arraystretch}{1.15}
\begin{tabular}{@{}>{\raggedright\arraybackslash}p{0.27\linewidth}p{0.65\linewidth}@{}}
\toprule
\textbf{Notation} & \textbf{Meaning} \\
\midrule
\multicolumn{2}{@{}l}{\textit{Decision problems}} \\
\addlinespace[0.15em]
$\Gamma$ & Original finite extensive-form decision problem. \\
$L$ & Set of terminal leaves. \\
$H$ & Set of player nodes. \\
$I$ & Information set; the collection of information sets is $\mathcal{I}(\Gamma)$. \\
$A_I$ & Common action set at information set $I$. \\
$p_c(\ell)$ & Product of chance probabilities on the root-to-$\ell$ path.\\
$\Sigma(\Gamma)$ & Player sequences in $\Gamma$, including the empty sequence $\varnothing$. \\
$\sigma_I$ & Player sequence at information set $I$. \\
\midrule
\multicolumn{2}{@{}l}{\textit{Strategies}} \\
\addlinespace[0.15em]
$\Pi(\Gamma)$ & Set of pure strategies of decision problem $\Gamma$. \\
$\pistrat$ & Pure strategy of the original decision problem. \\
$\hatpistrat$ & Pure strategy of a refinement. \\
$\bstrat$ & Behavioral strategy. \\
\midrule
\multicolumn{2}{@{}l}{\textit{Leaf distributions}} \\
\addlinespace[0.15em]
$\boldsymbol{x}^\pistrat$, $\widehat{\boldsymbol{x}}^\hatpistrat$ & Leaf distributions induced by $\pistrat$ and $\hatpistrat$, respectively. \\
$\mathcal{X}_{\Gamma}$, $\mathcal{X}_{\widehat\Gamma}$ & Convex hulls of original and refined leaf-distribution vectors. \\
\midrule
\multicolumn{2}{@{}l}{\textit{Refinements and values}} \\
\addlinespace[0.15em]
$\GammaPI$ & Perfect-information refinement of $\Gamma$. \\
$\GammaPR$ & Coarsest perfect-recall refinement of $\Gamma$. \\
$\widehat\Gamma$ & Generic refinement of $\Gamma$ obtained by splitting information sets. \\
$V_{\Gamma'}(\boldsymbol{v})$ & Maximum (pure-strategy) expected payoff in $\Gamma'$ under payoff vector $\boldsymbol{v}$. \\
$\VR(\widehat\Gamma/\Gamma)$ & Value ratio of the refinement $\widehat\Gamma$ over $\Gamma$. \\
\bottomrule
\end{tabular}
\label{tab:model-notation}
\end{table}

\clearpage
\section{A unifying characterization through fractional coloring}
\label{sec:geometry}

The key result in this section is a characterization of the value ratio from \Cref{def:common-ratio} as the maximum fractional chromatic number of a certain family of graphs (\Cref{thm:fractional-formula}). This establishes a graph-theoretic interpretation not just of the value of perfect information or of perfect recall, but of any information refinement per~\Cref{def:refinement}.

In what follows, we fix a pure strategy $\hatpistrat\in\Pi(\widehat\Gamma)$ of the refinement. $L(\hatpistrat)\defeq\{\ell\in L:\widehat{x}^\hatpistrat_\ell>0\}$ denotes the set of leaves that $\hatpistrat$ reaches with nonzero probability. For a leaf $\ell\in L(\hatpistrat)$ and an information set $I \in\mathcal{I}(\Gamma)$ of the original game, $a_\ell(I)$ denotes the action taken at the visit to $I$ on the root-to-$\ell$ path, when that visit exists.\footnote{Since $\Gamma$ has no absentmindedness, no $I$ can be visited multiple times on the path to $\ell$, so $a_\ell(I)$ is well defined.}

\begin{definition}[Conflict graph]
\label{def:conflict-graph}
The \emph{conflict graph} $G_\hatpistrat$ has vertex set $L(\hatpistrat)$. Two distinct leaves $\ell,\ell'\in L(\hatpistrat)$ are adjacent if there exists an original information set $I$ such that both root-to-leaf paths visit $I$ and $a_\ell(I)\ne a_{\ell'}(I)$.
\end{definition}

That is, $\ell$ and $\ell'$ are adjacent if no pure strategy of the original decision problem is compatible with both leaves. The next lemma points out the key role of independent sets in the conflict graph. 

\begin{restatable}[Independent sets are compatible leaves]{lemma}{rstlemmaLemIndependentCompatible}
\label{lem:independent-compatible}
A set $S\subseteq L(\hatpistrat)$ is independent in $G_\hatpistrat$ if and only if there exists a pure original strategy $\pistrat\in\Pi$ compatible with every leaf in $S$.
\end{restatable}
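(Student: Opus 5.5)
We prove the two directions separately. Throughout, we use the following observation, which follows directly from the definition of $x^\pistrat_\ell$ and the standing assumption $p_c(\ell)>0$: a pure original strategy $\pistrat$ is compatible with a leaf $\ell$ if and only if $\pi_I = a_\ell(I)$ for every original information set $I$ visited on the root-to-$\ell$ path. The information sets of $\Gamma$ are disjoint, and a pure strategy assigns a single action to each of them. Compatibility with a set of leaves is therefore a collection of unary constraints, one information set at a time. The only way these constraints can fail to be jointly satisfiable is for two of them to prescribe different actions at the same information set.

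\emph{($\Leftarrow$)} Suppose $\pistrat\in\Pi$ is compatible with every leaf in $S$, and take distinct $\ell,\ell'\in S$. Let $I$ be any original information set visited by both root-to-leaf paths. By the observation, $a_\ell(I)=\pi_I=a_{\ell'}(I)$. Hence no information set witnesses an edge between $\ell$ and $\ell'$ in \Cref{def:conflict-graph}, and $S$ is independent in $G_\hatpistrat$.

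\emph{($\Rightarrow$)} Suppose $S$ is independent. Construct $\pistrat$ as follows.
\begin{itemize}
\item If some leaf $\ell\in S$ visits the original information set $I$, set $\pi_I \defeq a_\ell(I)$.
\item Otherwise, let $\pi_I$ be an arbitrary action in $A_I$.
\end{itemize}
This is well defined. Because $\Gamma$ has no absentmindedness, $a_\ell(I)$ is a single action for each $\ell$. Because $S$ is independent, any two leaves of $S$ that both visit $I$ agree on $a_\ell(I)$, so the choice does not depend on which $\ell$ is picked. By construction, $\pi_I=a_\ell(I)$ for every $\ell\in S$ and every $I$ on the path to $\ell$. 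The observation then gives $x^\pistrat_\ell = p_c(\ell) > 0$, so $\pistrat$ is compatible with every leaf of $S$.

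The only step needing care is the well-definedness of this strategy. It rests on two facts: no absentmindedness, so that $a_\ell(I)$ is a single action, and independence, so that different leaves of $S$ agree at shared information sets. Note also that membership in $L(\hatpistrat)$ plays no role beyond fixing the vertex set. The argument works for an arbitrary set of leaves.
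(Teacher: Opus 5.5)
Your proposal is correct and matches the paper's proof: compatibility of a single $\pistrat$ forces agreement at every shared information set, and conversely you set $\pi_I$ to the common action required by the leaves of $S$ that visit $I$ (arbitrary elsewhere). This choice is well defined by independence and by no absentmindedness. Your remark that only the vertex set depends on $L(\hatpistrat)$ is also accurate.
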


(All proofs are deferred to \Cref{app:proofs}.) In this context, the \emph{fractional chromatic number} (\emph{e.g.}~\citep{ScheinermanUllman13:Fractional,Lovasz75:Ratio,LundYannakakis94:Hardness}) of a graph $G=(V,E)$ can be defined as
\begin{equation}
    \label{eq:dual-color}
    \max\left\{\sum_{v\in V}w_v:
        w_v\ge0,
        \sum_{v\in S}w_v\le1
        \text{ for every independent }S\right\}.
\end{equation}
While this is a linear program, it has exponentially many constraints. Indeed, deciding whether $\chi_f(G) > \tau$ is \coNP-complete~\citep{Grotschel81:Ellipsoid,ScheinermanUllman13:Fractional}. We now state the main characterization of this section.
\begin{restatable}{theorem}{rsttheoremThmFractionalFormula}
\label{thm:fractional-formula}
If $\Gamma$ has no absentmindedness,
$\VR(\widehat\Gamma/\Gamma) = \max_{\hatpistrat\in\Pi(\widehat\Gamma)}\chi_f(G_\hatpistrat)$.
\end{restatable}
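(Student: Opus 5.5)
We prove the two inequalities separately, with \Cref{lem:independent-compatible} as the bridge between original pure strategies and independent sets of $G_\hatpistrat$. The key reduction is that, for a fixed $\boldsymbol{v}$, the maximizer of $V_{\widehat\Gamma}(\boldsymbol{v})$ is attained by some pure $\hatpistrat\in\Pi(\widehat\Gamma)$ (the refinement also has no absentmindedness, since splitting information sets cannot create repeated visits). So it suffices to compare $\sum_\ell \widehat{x}^\hatpistrat_\ell v_\ell$ against $V_\Gamma(\boldsymbol{v})$, and only leaves in $L(\hatpistrat)$ matter.

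\textbf{Lower bound ($\ge$).} Fix $\hatpistrat$ and an optimal weighting $w$ in \eqref{eq:dual-color} for $G_\hatpistrat$. Define the payoff $v_\ell \defeq w_\ell/p_c(\ell)$ for $\ell\in L(\hatpistrat)$ and $v_\ell\defeq 0$ otherwise; this uses $p_c(\ell)>0$.

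1. Since $\hatpistrat$ chooses every action on the path to each $\ell\in L(\hatpistrat)$, we have $\widehat{x}^\hatpistrat_\ell = p_c(\ell)$ there. Hence $V_{\widehat\Gamma}(\boldsymbol{v})\ge \sum_\ell w_\ell = \chi_f(G_\hatpistrat)$.
2. For any original pure $\pistrat$, the set $S$ of leaves in $L(\hatpistrat)$ compatible with $\pistrat$ is independent by \Cref{lem:independent-compatible}. Its payoff is $\sum_{\ell\in S} p_c(\ell)v_\ell=\sum_{\ell\in S}w_\ell\le 1$, so $V_\Gamma(\boldsymbol{v})\le 1$.
3. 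If $w=0$ then $\chi_f=0$ and there is nothing to prove; otherwise $\boldsymbol{v}\neq 0$. Either way the ratio is at least $\chi_f(G_\hatpistrat)$.

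\textbf{Upper bound ($\le$).} Fix $\boldsymbol{v}\ne0$ and an optimal pure $\hatpistrat$ for $\widehat\Gamma$. Set $w_\ell\defeq p_c(\ell)v_\ell/V_\Gamma(\boldsymbol{v})$ for $\ell\in L(\hatpistrat)$.

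1. For any independent set $S$ of $G_\hatpistrat$, \Cref{lem:independent-compatible} gives an original $\pistrat$ compatible with every leaf of $S$, so $x^\pistrat_\ell=p_c(\ell)$ for $\ell\in S$.
2. Nonnegativity of $\boldsymbol{v}$ then gives $\sum_{\ell\in S}p_c(\ell)v_\ell\le \sum_\ell x^\pistrat_\ell v_\ell\le V_\Gamma(\boldsymbol{v})$, i.e.\ $\sum_{\ell\in S}w_\ell\le1$. So $w$ is feasible in \eqref{eq:dual-color}.
3. Therefore $V_{\widehat\Gamma}(\boldsymbol{v})/V_\Gamma(\boldsymbol{v})=\sum_\ell w_\ell\le\chi_f(G_\hatpistrat)$, which is at most the maximum over $\hatpistrat$.

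Taking the supremum over $\boldsymbol{v}$ gives the inequality. Since $\Pi(\widehat\Gamma)$ is finite, the supremum is attained and equals the maximum.

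\textbf{Where care is needed.} The argument rests on two facts.

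1. Compatibility of a pure strategy with a leaf means the leaf is reached with exactly mass $p_c(\ell)$, never a fractional amount. This is where no absentmindedness enters: each information set is visited at most once per path, so a pure strategy either follows the path entirely or not at all.
2. \Cref{lem:independent-compatible} requires simultaneous compatibility on the whole set $S$, not only pairwise. This is what the lemma provides, via a consistent assignment at each original information set. Nodes with no constraints get arbitrary actions, and well-definedness of $a_\ell(I)$ also relies on no absentmindedness.

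The main obstacle is the upper bound's reliance on pure optimal strategies in $\widehat\Gamma$. I would justify it by noting that $\widehat\Gamma$ inherits no absentmindedness, so the paper's remark that optima are attained in pure strategies applies to it as well.
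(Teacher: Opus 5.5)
Your proposal is correct and is essentially the paper's proof. Both rescale to weights $w_\ell=p_c(\ell)v_\ell$ on $L(\hatpistrat)$, use \Cref{lem:independent-compatible} to identify $V_\Gamma(\boldsymbol{v})$ with the maximum $w$-weight of an independent set of $G_\hatpistrat$, and read off the dual LP~\eqref{eq:dual-color}; the paper packages your two inequalities as the single identity $\rho(\widehat{\boldsymbol{x}}^\hatpistrat)=\chi_f(G_\hatpistrat)$ via \Cref{prop:support} and homogeneity. The ``main obstacle'' you flag is not one: $V_{\widehat\Gamma}$ is defined as a maximum over $\Pi(\widehat\Gamma)$, and a pure strategy's leaf mass is $0$ or $p_c(\ell)$ by definition, so the no-absentmindedness hypothesis enters only through \Cref{lem:independent-compatible}, via the well-definedness of $a_\ell(I)$.
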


Fractional coloring appears because, after rescaling payoffs by chance probabilities, the fixed refined strategy $\hatpistrat$ collects the total weight of all reached leaves (the weight of all other leaves can be set to zero WLOG), whereas any pure strategy in the original decision problem can collect only the weight of a compatible subset. By~\Cref{lem:independent-compatible}, these compatible subsets are exactly the independent sets of $G_\hatpistrat$. The maximum multiplicative discrepancy between those two quantities is precisely captured by~\eqref{eq:dual-color}. This characterization immediately gives us several observations about $\VR(\widehat\Gamma/\Gamma)$.

\begin{restatable}{corollary}{rstcorollaryCorCommonFractionalConsequences}
\label{cor:common-fractional-consequences}
For every refinement $\widehat\Gamma$ of a decision problem $\Gamma$ with no absentmindedness, 
\begin{enumerate}[label=(\roman*)]
\item $\VR(\widehat\Gamma/\Gamma)$ is rational,
\item the supremum in \Cref{def:common-ratio} is attained, and
\item $1\le\VR(\widehat\Gamma/\Gamma)\le |L|$.
\end{enumerate}
\end{restatable}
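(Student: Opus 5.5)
We derive all three items directly from \Cref{thm:fractional-formula}, which expresses $\VR(\widehat\Gamma/\Gamma)$ as $\max_{\hatpistrat\in\Pi(\widehat\Gamma)}\chi_f(G_\hatpistrat)$. Since $\Pi(\widehat\Gamma)$ is finite, the maximum is attained at some $\hatpistrat^\star$, and it suffices to study the single graph $G^\star\defeq G_{\hatpistrat^\star}$.

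For (i), we use the linear program \eqref{eq:dual-color}. It has rational (indeed $0/1$) constraint data, finitely many variables $|L(\hatpistrat^\star)|$, and finitely many constraints (one per independent set). It is feasible ($w=0$) and bounded, since each singleton is independent and so $w_v\le 1$. Hence its optimum is attained at a vertex of the feasible polytope. That vertex is the solution of a nonsingular rational linear system, so by Cramer's rule $\chi_f(G^\star)$ is rational.

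For (ii), we extract an optimal payoff from an optimal LP solution $w^\star$. We set $v_\ell\defeq w^\star_\ell/p_c(\ell)$ for $\ell\in L(\hatpistrat^\star)$ and $v_\ell\defeq 0$ otherwise. This is well defined because $p_c(\ell)>0$, and $v\neq 0$ because $\chi_f\ge 1$. Then $V_{\widehat\Gamma}(v)\ge \sum_\ell \widehat x^{\hatpistrat^\star}_\ell v_\ell=\sum_\ell w^\star_\ell=\chi_f(G^\star)$, using that $\widehat x_\ell=p_c(\ell)$ on reached leaves. For any original pure strategy $\pistrat$, the leaves of $L(\hatpistrat^\star)$ compatible with $\pistrat$ form an independent set by \Cref{lem:independent-compatible}. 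Therefore $\sum_\ell x^\pistrat_\ell v_\ell\le 1$, and so $V_\Gamma(v)\le 1$. The ratio at $v$ is thus at least $\chi_f(G^\star)=\VR$, so this $v$ attains the supremum. This is the one step requiring care: we must check that the rescaling by $p_c(\ell)$ correctly converts leaf-distribution weights into LP weights, and that leaves outside $L(\hatpistrat^\star)$ contribute nothing.

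For (iii), the lower bound $1$ was already noted after \Cref{def:common-ratio}: refinement only enlarges the strategy set. Alternatively, it follows because $L(\hatpistrat^\star)$ is nonempty, so the weight $1$ on a single vertex is feasible. For the upper bound, every feasible $w$ satisfies $w_v\le 1$ since singletons are independent. Hence $\chi_f(G^\star)\le |V(G^\star)|=|L(\hatpistrat^\star)|\le |L|$.
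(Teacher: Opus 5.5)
Your proposal is correct and follows essentially the same route as the paper. All three items are read off from \Cref{thm:fractional-formula}: rationality comes from the rational LP \eqref{eq:dual-color}, attainment comes from rescaling optimal dual weights via $v_\ell = w^\star_\ell/p_c(\ell)$ on $L(\hatpistrat^\star)$ and $v_\ell=0$ elsewhere, and the bounds come from $1\le\chi_f(G_{\hatpistrat})\le|V(G_{\hatpistrat})|\le|L|$. Your direct check of attainment, which shows $V_\Gamma(\boldsymbol{v})\le 1$ via independence of compatible leaves, just spells out what the paper cites from the theorem's proof.
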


Having established our unifying characterization for the value of information of \emph{any} information refinement, we next employ \Cref{thm:fractional-formula} to investigate the computational complexity of two natural refinements: the \emph{value of perfect information} and the \emph{value of recall}.

\section{Value of perfect information}
\label{sec:voi}

We first examine the case where $\Gamma$ has perfect recall and $\widehat\Gamma=\GammaPI$ is the perfect-information refinement thereof. This section begins by proving that, under perfect recall, the fixed-strategy conflict graphs are \emph{cographs}, yielding a more fine-grained characterization of $\operatorname{VoPI}$ in terms of the clique number (\Cref{thm:voi-integer}). \Cref{subsec:voi-np-hardness} then proves \NP-completeness by a reduction from \textsc{MaximumCoverage}. Finally, \Cref{sec:FPT} provides a fixed-parameter tractability result.

\subsection{Cographs and an integer formula}
\label{subsec:voi-cographs}

A cograph is a graph obtained from single vertices by repeated \emph{disjoint union} and \emph{join} operations.  The join $G\vee G'$ of vertex-disjoint graphs $G$ and $G'$ is obtained from their disjoint union by adding all edges between $V(G)$ and $V(G')$. We use induction to show the following structural lemma.

\begin{restatable}[Cograph structure]{lemma}{rstlemmaLemCograph}
\label{lem:cograph}
If $\Gamma$ has perfect recall, then for every pure perfect-information strategy $\hatpistrat\in\Pi(\GammaPI)$, the conflict graph $G_\hatpistrat$ is a cograph.
\end{restatable}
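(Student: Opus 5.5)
We prove the claim by structural induction on the game tree: for every node $h$, the conflict graph restricted to the leaves of $L(\hatpistrat)$ below $h$ is a cograph. Write $L_h(\hatpistrat)$ for the set of reached leaves in the subtree rooted at $h$, and $G_h$ for the subgraph of $G_\hatpistrat$ induced by $L_h(\hatpistrat)$. Adjacency in $G_\hatpistrat$ depends only on the two root-to-leaf paths, so $G_h$ is an induced subgraph and the induction is well posed. Taking $h$ to be the root gives the lemma. For the base case, a leaf gives a single vertex, which is a cograph.

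\textbf{Player nodes.} Since $\hatpistrat$ is a perfect-information strategy, at a player node $h$ it selects exactly one action. All reached leaves below $h$ therefore lie in a single child subtree $h'$, so $G_h = G_{h'}$, which is a cograph by induction.

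\textbf{Chance nodes.} Let $h$ be a chance node with children $h_1,\dots,h_k$, and put $S_i = L_{h_i}(\hatpistrat)$. Each $G_{h_i}$ is a cograph by induction. I would show that between any two distinct children, either every pair of leaves is adjacent or no pair is. Then $G_h$ arises from the $G_{h_i}$ by joins and disjoint unions. More precisely, let $Q$ be the quotient graph on $\{1,\dots,k\}$ whose edges record the ``all adjacent'' pairs. Then $G_h$ is $Q$ with each vertex $i$ substituted by $G_{h_i}$. Since cographs are closed under substitution, it remains to check that $Q$ is a cograph.

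\textbf{Key step: perfect recall.} Take leaves $\ell\in S_i$ and $\ell'\in S_j$ with $i\neq j$, and suppose they conflict at an information set $I$ visited on both paths. Let $u$ and $u'$ be the nodes of $I$ on the two paths; they lie below $h_i$ and $h_j$ respectively. By perfect recall, $\operatorname{seq}(u)=\operatorname{seq}(u')$. Since the player's actions below $h$ are fixed by $\hatpistrat$ separately in each subtree, this says that the player histories inside the two subtrees agree up to $I$.

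I would use this to describe conflicts via ``first divergence'' in player sequences. For each leaf, consider its sequence of (information set, action) pairs below $h$. Above $h$, all leaves share a common prefix, which can never be the location of a conflict. Two leaves conflict if and only if at some point their sequences have the same information set but different actions. Because of perfect recall, this happens exactly when their sequences first diverge at a common information set, as opposed to diverging by visiting different information sets.

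Thus conflicts are determined by a tree, namely a trie of player sequences, branching at information sets. Leaves split into classes by the action taken at the first information set where they differ, and classes that differ in action are fully joined. This suggests a cleaner route than the quotient argument: induct directly on the sequence trie. At a trie node $(\sigma, I)$, the leaves continuing with different actions $a\in A_I$ are pairwise fully adjacent, which gives a join. Leaves whose next information set differs, or which end, are pairwise non-adjacent at this level, which gives a disjoint union. Within each branch we recurse.

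\textbf{Main obstacle.} The hardest part is making the trie argument rigorous. One must show that leaves diverging by visiting different information sets never conflict later. Perfect recall rules this out: any later shared information set would have to carry identical prior sequences, contradicting the earlier divergence. One must also check that fixing an action at a node gives full adjacency between branches, which is immediate from \Cref{def:conflict-graph}. I would therefore present the final proof as an induction on the sequence trie rather than on the game tree.
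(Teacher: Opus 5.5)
Your final plan is essentially the paper's proof. You induct on the trie of player sequences: a join over the different actions taken at a shared information set, a disjoint union over distinct next information sets (or termination), and perfect recall to rule out later cross-conflicts. The paper inducts on pairs $(I,S)$ with $S\subseteq I$, splitting $S$ into joined action classes $S_a$ and then into the disjoint pieces $T_a$ and $S_{a,J}$.

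The game-tree route you abandoned is false as stated, not merely less clean. Suppose a root chance node has children $h_1$ (itself a chance node leading to $u_1\in I$ and $u_1'\in J$) and $h_2=u_2\in I$, and $\hatpistrat$ plays different actions at $u_1$ and $u_2$. Then the reached leaf below $u_2$ is adjacent to the one below $u_1$ but not to the one below $u_1'$. So chance children need not be modules, and the switch to the trie is necessary.
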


Intuitively, different actions at the same information set create joins in the conflict graph, while chance-separated continuations after the same action create disjoint unions; perfect recall ensures that two subsequent information-set components cannot later merge and create cross conflicts.

For a graph $G$, we let $\omega(G)$ denote its clique number, that is, the size of the largest clique in $G$. We point out the following well-known fact.

\begin{restatable}[Fractional chromatic number of cographs~\citep{ScheinermanUllman13:Fractional}]{lemma}{rstlemmaLemCographFractional}
\label{lem:cograph-fractional}
If $G$ is a cograph, then $\chi_f(G)=\omega(G)$.
\end{restatable}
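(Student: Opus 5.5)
We need to prove that for a cograph $G$, $\chi_f(G)=\omega(G)$. The plan is to show the stronger statement that cographs are perfect in the sense that $\chi(G)=\omega(G)$, and then sandwich the fractional chromatic number: $\omega(G)\le\chi_f(G)\le\chi(G)$. The lower bound comes from the program~\eqref{eq:dual-color}. Put weight $w_v=1$ on the vertices of a maximum clique $K$ and $0$ elsewhere. Every independent set meets $K$ in at most one vertex, so the weight is feasible and has value $\omega(G)$. The upper bound holds because a proper coloring with $k$ colors partitions $V$ into $k$ independent sets $S_1,\dots,S_k$. Any feasible $w$ then satisfies $\sum_v w_v=\sum_{j}\sum_{v\in S_j}w_v\le k$, so $\chi_f(G)\le\chi(G)$.

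It therefore remains to prove $\chi(G)\le\omega(G)$ for cographs, by structural induction on the cograph construction. A single vertex has $\chi=\omega=1$.

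\emph{Disjoint union.} For $G\cup G'$, color the two parts independently with $\chi(G)$ and $\chi(G')$ colors, reusing the same palette. This gives $\chi(G\cup G')=\max\{\chi(G),\chi(G')\}$. Every clique lies in one part, so $\omega(G\cup G')=\max\{\omega(G),\omega(G')\}$, and the induction hypothesis closes this case.

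\emph{Join.} For $G\vee G'$, use disjoint palettes on the two sides, giving $\chi(G\vee G')\le\chi(G)+\chi(G')$. Since every vertex of $G$ is adjacent to every vertex of $G'$, the union of a maximum clique of $G$ and a maximum clique of $G'$ is a clique, so $\omega(G\vee G')\ge\omega(G)+\omega(G')$. Combining these with the induction hypothesis gives $\chi(G\vee G')\le\omega(G\vee G')$.

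There is no real obstacle here. The only point needing care is the join step, where we must observe that cliques add, rather than taking a maximum as in the union step. Since the statement is standard, we could alternatively cite~\citep{ScheinermanUllman13:Fractional} and the perfection of cographs.
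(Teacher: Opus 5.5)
Your proof is correct, but it takes a somewhat different route from the paper's.

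The paper inducts directly on $\chi_f$. After noting $\chi_f(G)\ge\omega(G)$, it uses the identities $\chi_f(G\cup G')=\max\{\chi_f(G),\chi_f(G')\}$ and $\chi_f(G\vee G')=\chi_f(G)+\chi_f(G')$. The join identity holds because every independent set of a join lies entirely on one side, so the LP decomposes. It then checks that $\omega$ obeys the same union and join rules.

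You instead induct on the integer chromatic number, proving $\chi(G)\le\omega(G)$ for cographs, and then sandwich $\omega(G)\le\chi_f(G)\le\chi(G)$.

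What your route buys:
\begin{itemize}
\item It needs only elementary coloring facts and one-sided inequalities at each inductive step. You never have to argue that the fractional LP splits exactly across a join or a union.
\item It yields the stronger integral conclusion $\chi(G)=\omega(G)$. This is essentially the ``cographs are perfect'' argument the paper cites before giving its direct proof.
\end{itemize}

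What the paper's route buys:
\begin{itemize}
\item It stays inside the LP framework.
\item Its union and join identities for $\chi_f$ hold for arbitrary graphs, not just cographs, so they form a reusable calculus.
\item The cost is that these identities need slightly more justification than the paper writes out.
\end{itemize}
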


As a result, we obtain a more fine-grained characterization for the value of perfect information, establishing as a byproduct that it is always an integer. As we shall see, this is not always the case for the value of recall (\Cref{thm:graph-embedding}).

\begin{restatable}{corollary}{rsttheoremThmVoiInteger}
\label{thm:voi-integer}
For every perfect-recall decision problem $\Gamma$, $\operatorname{VoPI}(\Gamma) = \max_{\hatpistrat\in\Pi(\GammaPI)}\omega(G_\hatpistrat)$. In particular, $\operatorname{VoPI}(\Gamma)$ is an integer between $1$ and $\left\lceil\frac{|L|}{3}\right\rceil$.
\end{restatable}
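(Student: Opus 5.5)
The identity $\operatorname{VoPI}(\Gamma)=\max_{\hatpistrat\in\Pi(\GammaPI)}\omega(G_\hatpistrat)$ should follow by chaining three earlier results. Perfect recall implies no absentmindedness, so \Cref{thm:fractional-formula} applies with $\widehat\Gamma=\GammaPI$ and gives $\operatorname{VoPI}(\Gamma)=\max_{\hatpistrat}\chi_f(G_\hatpistrat)$. By \Cref{lem:cograph} every $G_\hatpistrat$ is a cograph. By \Cref{lem:cograph-fractional}, $\chi_f(G_\hatpistrat)=\omega(G_\hatpistrat)$. The value is therefore a maximum of clique numbers, hence an integer. Since $L(\hatpistrat)\neq\emptyset$, every clique number is at least $1$, which gives the lower bound.

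The remaining work, and the main obstacle, is the upper bound $\lceil |L|/3\rceil$. I would prove the stronger claim that a clique of size $k$ in $G_\hatpistrat$ forces $|L|\ge 3k-2$, which is equivalent to $k\le (|L|+2)/3$. The proof would be by induction on $k$, following the join structure of \Cref{lem:cograph}. The base case $k=1$ is trivial.

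For $k\ge2$, take a clique $K$. Choose the first original information set $I$, in tree order along the relevant paths, at which members of $K$ split by action; perfect recall should guarantee such a common ``splitting'' information set exists. Group $K$ by the action taken at $I$ into classes $K_1,\dots,K_r$ with $r\ge2$ and sizes $k_i$. Each $K_i$ is again a clique, and it lies below nodes of $I$ at which $\hatpistrat$ chooses action $a_i$. At a fixed node of $I$ the refined strategy chooses only one action, so the leaves of distinct classes lie below distinct nodes $h\in I$.

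At any node $h$ that carries class $i$, each of the other $r-1$ used actions $a_{i'}$ leads to a subtree. That subtree contains at least one leaf outside $L(\hatpistrat)$ and disjoint from everything counted so far. Applying induction inside each class, with disjoint leaf sets, I would get
\[
|L|\ \ge\ \sum_{i=1}^r (3k_i-2) + r(r-1) \ =\ 3k-2 + (r-1)(r-2)\ \ge\ 3k-2 .
\]

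The delicate points are two. First, the inductive bound for $K_i$ must be applied to a set of leaves disjoint from the other classes and from the ``foil'' leaves; I would do this by arguing in the subtrees below the distinct nodes of $I$. Second, choosing $I$ correctly uses perfect recall: all nodes of $I$ share the same sequence, so conflicts among members of one class cannot arise at $I$ itself and must come from deeper information sets within that class's subtrees. The lemons example, with $k=2$ and $|L|=4$, shows that this bound is tight.
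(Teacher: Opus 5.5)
Your proposal is correct and follows the paper's proof. The identity and integrality come from chaining \Cref{thm:fractional-formula}, \Cref{lem:cograph}, and \Cref{lem:cograph-fractional}, and the upper bound is the same $|L|\ge 3\omega-2$ argument as in \Cref{prop:unaccounted-leaves}: split by action at an information set, count $r(r-1)$ off-strategy leaves, and use $r^2-3r\ge -2$. The only difference is bookkeeping. You induct on the size of one witnessing clique split at its topmost conflict information set, which perfect recall does guarantee exists. The paper instead inducts on tree height over the whole subgraph $G(I,S)$, using the cograph join/disjoint-union structure. In either form the inductive claim must be stated locally, for the leaves below the relevant nodes, as you anticipate.
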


We get a tighter upper bound than the trivial bound of $|L|$, as edges in the conflict graph $G_\hatpistrat$ indicate the existence of other leaves that $\hatpistrat$ does \emph{not} reach; we formalize this in \Cref{prop:unaccounted-leaves}.

\subsection{\NP-completeness of \textsc{VoPI}: reduction from \textsc{MaximumCoverage}}
\label{subsec:voi-np-hardness}

Next, we characterize the complexity of computing the value of perfect information. We examine \textsc{VoPI}-\textsc{Threshold}, the natural decision version of the problem: given a perfect-recall decision problem $\Gamma$ (without specifying terminal payoffs) and a rational threshold $\tau > 0$, decide whether $\operatorname{VoPI}(\Gamma)\ge \tau$. A strategy $\hatpistrat$ in the refined game yields a polynomial-time certificate for $\operatorname{VoPI}(\Gamma)\ge \tau$.

\begin{restatable}{proposition}{rstpropositionPropVoiInNp}
\label{prop:voi-in-np}
\textsc{VoPI}-\textsc{Threshold} is in \NP.
\end{restatable}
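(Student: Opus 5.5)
The plan is to use the pure refined strategy $\hatpistrat\in\Pi(\GammaPI)$ together with a clique in its conflict graph as the certificate. By \Cref{thm:voi-integer}, $\operatorname{VoPI}(\Gamma)=\max_{\hatpistrat\in\Pi(\GammaPI)}\omega(G_\hatpistrat)$. Hence $\operatorname{VoPI}(\Gamma)\ge\tau$ holds if and only if there exist a pure perfect-information strategy $\hatpistrat$ and a clique $K$ in $G_\hatpistrat$ with $|K|\ge\tau$. Since $\omega$ is an integer, this is equivalent to $|K|\ge\lceil\tau\rceil$. The certificate is therefore the pair $(\hatpistrat,K)$. Here $\hatpistrat$ lists one action per player node, so its size is linear in the tree, and $K\subseteq L$ is a set of leaves.

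To verify the certificate, I first compute $L(\hatpistrat)$ by traversing the tree. At player nodes I follow only the chosen action; at chance nodes I follow every child, since all chance probabilities are positive. I then check that $K\subseteq L(\hatpistrat)$ and that $|K|\ge\tau$.

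Next I check that $K$ is a clique. For each pair $\ell,\ell'\in K$, I look at the root-to-leaf paths in $\Gamma$. I record, for every original information set visited along each path, the action taken there; this is well defined because perfect recall implies no absentmindedness. The pair is adjacent exactly when some information set $I$ is visited on both paths with $a_\ell(I)\ne a_{\ell'}(I)$, which is \Cref{def:conflict-graph}. Each pairwise test takes time polynomial in the depth of the tree, and there are $O(|L|^2)$ pairs, so verification runs in polynomial time.

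For soundness, if verification succeeds then $\omega(G_\hatpistrat)\ge|K|\ge\tau$, and \Cref{thm:voi-integer} gives $\operatorname{VoPI}(\Gamma)\ge\tau$. For completeness, if $\operatorname{VoPI}(\Gamma)\ge\tau$, the maximizer in \Cref{thm:voi-integer} supplies a strategy $\hatpistrat$ whose maximum clique has size at least $\tau$, and that pair is a valid certificate.

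The only real subtlety is that the clique number is hard for general graphs, so we cannot afford to compute it. Placing the clique itself in the certificate avoids this and keeps the argument elementary. (Alternatively, one could exploit the cograph structure from \Cref{lem:cograph} to compute $\omega$ in polynomial time, but this is unnecessary.)
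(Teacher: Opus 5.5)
Your proof is correct and follows the paper's argument: both use \Cref{thm:voi-integer} to reduce $\operatorname{VoPI}(\Gamma)\ge\tau$ to the existence of a strategy $\hatpistrat\in\Pi(\GammaPI)$ with $\omega(G_\hatpistrat)\ge\lceil\tau\rceil$. The only difference is that you put the clique into the certificate and check it pairwise, whereas the paper certifies only $\hatpistrat$ and computes $\omega(G_\hatpistrat)$ in polynomial time via the cograph structure of \Cref{lem:cograph}; both work, and the cograph property still enters your argument implicitly through \Cref{thm:voi-integer}.
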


We next sketch our approach for showing \NP-hardness; the proof can be found in~\Cref{appendix:voi}. We reduce from \textsc{MaximumCoverage}. An instance consists of a universe $U=\{1,\ldots,n\}$, sets $S_1,\ldots,S_m\subseteq U$, an integer $k$, and a threshold $\tau$.  The question is whether $k$ sets cover at least $\tau$ elements.  We may assume that every element belongs to at least one set.

For an element $e\in U$, we write $M(e) \defeq \{j:e\in S_j\}$ and $d_e\defeq|M(e)|$. We construct a decision problem $\Gamma_{\mathcal{C}}$ from the coverage instance $\mathcal{C}$ as follows (\Cref{fig:coverage-reduction}).

\begin{enumerate}[label=\arabic*.]
\item Chance chooses an element $e\in U$ with probability $1/n$.  The resulting player node is $h_e$.
\item All nodes $h_e$ lie in one information set $I$.  The available first-stage actions are $a\in[k]$.
\item After action $a$, chance chooses $j\in M(e)$ with probability $1/d_e$.
\item The resulting player node is $h_{e,a,j}$.  For each pair $(a,j)$, all nodes $\{h_{e,a,j}:e\in S_j\}$ lie in one information set $J_{a,j}$.
\item At $J_{a,j}$ the player chooses an action $b\in U$.  Play then terminates at a leaf $\ell(e,a,j,b)$.
\end{enumerate}

For $j\in M(e)$, the joint chance probability of reaching $h_{e,a,j}$ after selecting $a$ as the first-stage action is $p_{e,j}\defeq\frac{1}{n d_e}$. The size of the decision problem is polynomial in $n$, $m$, $k$, and $\sum_{j=1}^m|S_j|$. Moreover, it is easy to see that $\Gamma_{\mathcal{C}}$ has perfect recall (\Cref{lem:coverage-decision-problem-pr}).

\begin{figure}[t]
\centering
\begin{tikzpicture}[
    x=1cm,y=1cm,
    >=Latex,
    chance/.style={circle,draw,minimum size=7.5mm,inner sep=0pt,font=\small,fill=gray!7},
    player/.style={circle,draw,minimum size=7.5mm,inner sep=0pt,font=\small,fill=white},
    leaf/.style={rectangle,draw,rounded corners=2pt,minimum width=18mm,minimum height=6mm,inner sep=2pt,font=\scriptsize,fill=green!7},
    infoset/.style={draw=blue!70,dashed,rounded corners=3pt,inner xsep=6pt,inner ysep=5pt},
    infofont/.style={font=\footnotesize,blue!70!black,align=center},
    edge/.style={->,semithick},
    chanceedge/.style={->,semithick,gray!85},
    note/.style={font=\footnotesize,align=center},
    edgelabel/.style={font=\footnotesize,align=center,fill=white,inner sep=1pt},
    faint/.style={gray!60,densely dashed}
]
\node[chance] (root) at (0,0) {$c$};

\node[player] (x1) at (-3.2,-1.55) {$h_1$};
\node[player] (xe) at (0,-1.55) {$h_e$};
\node[player] (xn) at (3.2,-1.55) {$h_n$};
\node[note] at (-1.6,-1.55) {$\cdots$};
\node[note] at (1.6,-1.55) {$\cdots$};
\draw[chanceedge] (root) -- (x1);
\draw[chanceedge] (root) -- (xe);
\draw[chanceedge] (root) -- (xn);
\node[edgelabel] at (-1.05,-0.65) {$e\in U$ uniformly};
\node[infoset,fit=(x1)(xe)(xn)] (Ibox) {};
\node[infofont,anchor=west] at (Ibox.east) {information set $I$\\contains nodes $\{h_e\}_{e \in U}$};

\node[chance] (cea) at (0,-3.25) {$c$};
\node[note] (othera) at (-2.15,-2.9) {other actions};
\draw[edge] (xe) -- node[right,font=\scriptsize] {$a\in[k]$} (cea);
\draw[faint,->] (xe) to[out=-145,in=75] (othera);

\node[player] (yeaj) at (0,-4.95) {$h_{e,a,j}$};
\node[player] (yeprimeaj) at (3.15,-4.95) {$h_{e',a,j}$};
\node[note] at (1.55,-4.95) {$\cdots$};
\draw[chanceedge] (cea) -- node[right,font=\scriptsize] {$j\in M(e)$} (yeaj);
\draw[faint,->] (3.15,-3.55) -- (yeprimeaj);
\node[infoset,fit=(yeaj)(yeprimeaj)] (Jbox) {};
\node[infofont,anchor=west] at (Jbox.east) {information set $J_{a,j}$\\contains nodes $\{h_{e,a,j}\}_{e \in S_j}$};

\node[leaf] (succ) at (-1.05,-6.75) {$\ell(e,a,j,e)$};
\node[leaf] (fail) at (1.05,-6.75) {$\ell(e,a,j,b)$};
\node[leaf] (succp) at (3.15,-6.75) {$\ell(e',a,j,e')$};
\draw[edge] (yeaj) -- node[left,font=\scriptsize] {$b=e$} (succ);
\draw[edge] (yeaj) -- node[right,font=\scriptsize] {$b\ne e$} (fail);
\draw[edge] (yeprimeaj) -- node[right,font=\scriptsize] {$b=e'$} (succp);

\end{tikzpicture}
\caption{Extensive-form decision problem used in the reduction from \textsc{MaximumCoverage} to the value of perfect information. Chance first selects an element $e$, the player chooses an action $a$ at the shared information set $I$, chance then selects an incident set index $j\in M(e)$, and the player finally picks an element $b$ at the shared information set $J_{a,j}$.}
\label{fig:coverage-reduction}
\end{figure}

In this construction, the $k$ first-stage actions serve as coverage slots. A perfect-information strategy assigns each element to one of the $k$ chosen sets. On the other hand, the original strategy cannot condition this assignment on the element. The upshot is that, when optimizing the payoffs in this construction, the gap reveals the optimal value of~\textsc{MaximumCoverage}. Indeed, if $\OPT\defeq\max_{j_1,\ldots,j_k\in[m]} \left|\bigcup_{a=1}^k S_{j_a}\right|$ is the optimum coverage value (with repetitions allowed), we claim that the value of perfect information in $\Gamma_{\mathcal{C}}$ matches the optimal value of $\mathcal{C}$.

\begin{restatable}{theorem}{rsttheoremThmVoiCoverageExact}
\label{thm:voi-coverage-exact}
For the decision problem $\Gamma_{\mathcal{C}}$ constructed above, $\operatorname{VoPI}(\Gamma_{\mathcal{C}})=\OPT$.
\end{restatable}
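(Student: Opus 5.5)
The plan is to invoke \Cref{thm:voi-integer}. Since $\Gamma_{\mathcal{C}}$ has perfect recall (\Cref{lem:coverage-decision-problem-pr}), it gives $\operatorname{VoPI}(\Gamma_{\mathcal{C}})=\max_{\hatpistrat}\omega(G_\hatpistrat)$. The theorem therefore reduces to two inequalities about cliques in conflict graphs. We exhibit a refined strategy whose conflict graph contains a clique of size $\OPT$, and we show that every clique in every conflict graph has size at most $\OPT$.

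\emph{Lower bound.} Fix indices $j_1,\ldots,j_k$ attaining $\OPT$, and let $C=\bigcup_a S_{j_a}$. Define $\hatpistrat$ in $\GammaPI$ as follows.
\begin{itemize}
\item At $h_e$ with $e\in C$, choose some $a_e$ with $e\in S_{j_{a_e}}$. At $h_e$ with $e\notin C$, choose an arbitrary action.
\item At every node $h_{e,a,j}$, choose $b=e$.
\end{itemize}
For each $e\in C$, the leaf $\ell_e\defeq\ell(e,a_e,j_{a_e},e)$ is reached with positive probability. We claim $\{\ell_e\}_{e\in C}$ is a clique. Take $e\neq e'$ in $C$.
\begin{itemize}
\item If $a_e\ne a_{e'}$, both paths visit $I$ with different actions, so $\ell_e$ and $\ell_{e'}$ are adjacent.
\item If $a_e=a_{e'}=a$, both paths visit the same information set $J_{a,j_a}$ and take the actions $e\ne e'$ there, so again they are adjacent.
\end{itemize}
Hence $\omega(G_\hatpistrat)\ge|C|=\OPT$.

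\emph{Upper bound.} Fix any $\hatpistrat\in\Pi(\GammaPI)$ and a clique $K$ in $G_\hatpistrat$. Write each leaf of $K$ as $\ell(e,a,j,b)$. The information sets on the path to this leaf are exactly $I$ and $J_{a,j}$. We establish two structural facts.
\begin{enumerate}[label=(\alph*)]
\item \emph{No two leaves of $K$ share the same element $e$.} Such leaves pass through the same node $h_e$, where the pure refined strategy takes a single action $a$. So they agree at $I$, and they could conflict only at a common $J_{a,j}$. That would force them to share $j$ as well. But then they pass through the same node $h_{e,a,j}$, where $\hatpistrat$ also takes a single action, so the two leaves coincide.
\item \emph{Leaves of $K$ with the same first action $a$ share the same $j$.} Two such leaves agree at $I$, so their conflict must occur at a common $J_{a,j}$.
\end{enumerate}
By (b), each first-stage action $a$ used in $K$ determines a single index $j_a$; for unused $a$, pick $j_a$ arbitrarily. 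Every leaf $\ell(e,a,j_a,b)\in K$ has $e\in S_{j_a}$, because the node $h_{e,a,j_a}$ exists only when $j_a\in M(e)$. By (a), the map sending a leaf of $K$ to its element $e$ is injective into $\bigcup_a S_{j_a}$. Therefore $|K|\le\OPT$.

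The main obstacle is the upper bound, specifically fact (a). One must check that distinct chance branches $j\ne j'$ below the same $(e,a)$ really produce no conflict edge, since they lead to different information sets $J_{a,j}$ and $J_{a,j'}$. Once that is settled, the remaining steps are bookkeeping about which information sets each root-to-leaf path visits.
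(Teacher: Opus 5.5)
Your proof is correct, but it takes a different route from the paper's. You invoke \Cref{thm:voi-integer}, which turns the theorem into a statement about cliques. The lower bound is then the explicit clique $\{\ell_e\}_{e\in C}$, and the upper bound comes from your facts (a) and (b).

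The paper never uses the clique characterization. It argues directly with payoffs and behavioral strategies.
\begin{itemize}
\item For the lower bound, it puts payoff $1/p_{e,j_a}=nd_e$ on exactly your leaves $\ell_e$. This is the dual weight vector $w_\ell=1$ on your clique, rescaled by chance probabilities. It then checks by hand that the refined value is $\OPT$, while every behavioral strategy of $\Gamma_{\mathcal{C}}$ earns at most $\sum_a b_I(a)=1$.
\item For the upper bound, it fixes an arbitrary $\boldsymbol{v}$ and writes $V_{\Gamma_{\mathcal{C}}}(\boldsymbol{v})=\max_a\sum_j M_{a,j}$, where $M_{a,j}=\max_b\sum_{e\in S_j}p_{e,j}v_{\ell(e,a,j,b)}$. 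Let $P_a$ be the set of elements at which the refined strategy plays $a$. The paper bounds that strategy's payoff by $\bigl(\sum_a\max_j|P_a\cap S_j|\bigr)V_{\Gamma_{\mathcal{C}}}(\boldsymbol{v})$, and disjointness of the $P_a$ gives the factor $\OPT$.
\end{itemize}

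The combinatorial core is shared. Your facts (a) and (b) say precisely that the clique leaves with first action $a$ number at most $|P_a\cap S_{j_a}|$, which is the paper's per-class quantity.

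The dependencies differ, though. Your upper bound only controls $\omega(G_\hatpistrat)$, so it relies on the cograph collapse $\chi_f=\omega$ from \Cref{lem:cograph,lem:cograph-fractional}. For general conflict graphs, as in $\Gamma_{C_5}$, $\omega$ does not bound the value ratio. The paper's argument is self-contained: it in effect proves $\chi_f(G_\hatpistrat)\le\OPT$ directly and exhibits the extremal payoff vector explicitly.

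Your version is shorter and makes the structure more transparent, while the paper's is independent of the cograph machinery.
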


We thus arrive at the following \NP-completeness result.

\begin{restatable}{theorem}{rsttheoremThmVoiNpComplete}
\label{thm:voi-np-complete}
\textsc{VoPI}-\textsc{Threshold} is \NP-complete.
\end{restatable}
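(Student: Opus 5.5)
Membership in \NP{} is exactly \Cref{prop:voi-in-np}, so I only need \NP-hardness, and I will reduce from \textsc{MaximumCoverage}, which is \NP-complete in its decision form: given $U$, $S_1,\ldots,S_m$, $k$ and $\tau$, decide whether some $k$ sets cover at least $\tau$ elements. Given such an instance $\mathcal{C}$, the reduction outputs the decision problem $\Gamma_{\mathcal{C}}$ built above together with the same threshold $\tau$. Before that, I preprocess the instance so that every element lies in at least one set, simply discarding uncovered elements, since they never contribute to any cover.

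I then check the basic properties of the output. By construction $\Gamma_{\mathcal{C}}$ has size polynomial in $n$, $m$, $k$ and $\sum_j |S_j|$. All chance probabilities ($1/n$ and $1/d_e$) are rationals computable in polynomial time. By \Cref{lem:coverage-decision-problem-pr}, $\Gamma_{\mathcal{C}}$ has perfect recall, so it is a valid instance of \textsc{VoPI}-\textsc{Threshold}.

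For correctness, \Cref{thm:voi-coverage-exact} gives $\operatorname{VoPI}(\Gamma_{\mathcal{C}})=\OPT$, where $\OPT$ is the best coverage by $k$ sets with repetitions allowed. Allowing repetitions does not change the decision question when $k\le m$. Choosing $k$ sets with repeats covers no more than some choice of at most $k$ distinct sets, and padding with further distinct sets never decreases coverage. If $k>m$, I replace $k$ by $m$ beforehand, which is harmless. Hence $\operatorname{VoPI}(\Gamma_{\mathcal{C}})\ge\tau$ holds if and only if $\mathcal{C}$ is a yes-instance.

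The only potential obstacle is this bookkeeping about repetitions and degenerate parameters, together with confirming the encoding size. All of the substantive work sits in \Cref{thm:voi-coverage-exact}, which I take as given.
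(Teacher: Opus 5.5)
Your proposal is correct and follows the paper's proof: membership comes from \Cref{prop:voi-in-np}, and hardness comes from mapping a \textsc{MaximumCoverage} instance to $\Gamma_{\mathcal{C}}$ with the same threshold and invoking \Cref{thm:voi-coverage-exact}. Your extra bookkeeping only makes explicit what the paper leaves implicit: discarding uncovered elements, reconciling repetitions, and capping $k$ at $m$ (which also keeps $|\Gamma_{\mathcal{C}}|$ polynomial when $k$ is given in binary).
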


\paragraph{Approximation consequence.} The reduction above is exact, in that it preserves the optimum value. Thus, the standard inapproximability for \textsc{MaximumCoverage} due to~\citet{Feige98:Threshold} translates to inapproximability for the optimization version of the value of perfect information.

\begin{corollary}[Approximation hardness]
\label{cor:voi-approx-hardness}
For every constant $\epsilon>0$, unless $\P=\NP$, the optimization version of the value of perfect information admits no polynomial-time $(1-1/e+\epsilon)$-approximation.
\end{corollary}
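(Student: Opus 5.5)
The corollary is a direct transfer of Feige's inapproximability for \textsc{MaximumCoverage} through the exact reduction of \Cref{thm:voi-coverage-exact}. Suppose, towards a contradiction, that for some constant $\epsilon>0$ there is a polynomial-time algorithm $\mathcal{A}$ that, given a perfect-recall decision problem $\Gamma$, outputs a number $\alpha$ with $(1-1/e+\epsilon)\operatorname{VoPI}(\Gamma)\le \alpha\le \operatorname{VoPI}(\Gamma)$. We will use $\mathcal{A}$ to approximate \textsc{MaximumCoverage} within the same factor.

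Given a coverage instance $\mathcal{C}$, we build $\Gamma_{\mathcal{C}}$ in polynomial time. By \Cref{thm:voi-coverage-exact}, $\operatorname{VoPI}(\Gamma_{\mathcal{C}})=\OPT$, where $\OPT$ is the optimum coverage with repetitions allowed. Repetitions do not change the optimum, since a repeated set can be replaced by any unused set (or $k\le m$ may be assumed WLOG) without decreasing coverage. Running $\mathcal{A}$ on $\Gamma_{\mathcal{C}}$ therefore gives a value $\alpha\ge(1-1/e+\epsilon)\OPT$ with $\alpha\le\OPT$. This contradicts Feige's theorem that, unless $\P=\NP$, the optimal value of \textsc{MaximumCoverage} cannot be approximated within $1-1/e+\epsilon$ in polynomial time.

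The remaining question is whether ``approximation'' is meant as estimating the value or as producing a certifying solution.

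\begin{itemize}
\item \emph{Value estimation.} Feige's hardness is a gap result, and the argument above suffices as written.
\item \emph{Producing a witness.} The natural witness for $\operatorname{VoPI}$ is a refined strategy $\hatpistrat$ together with a clique in $G_\hatpistrat$, by \Cref{thm:voi-integer}. Here I would convert such a witness back into $k$ sets covering at least as many elements. The expected mechanism, following the proof of \Cref{thm:voi-coverage-exact}, is that each clique vertex corresponds to an element $e$ routed through some slot $a$ and set $j$. Vertices in the same slot must have been forced apart by a common $J_{a,j}$. Picking, for each slot $a$, the set index $j$ appearing most often should recover coverage at least the clique size.
\end{itemize}

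The main obstacle is only this witness-recovery direction. If it is delicate, I would simply state the corollary for the value version, which is what the gap-preserving exactness of \Cref{thm:voi-coverage-exact} yields directly.
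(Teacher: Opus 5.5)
Your argument is correct and matches the paper's: the exact reduction $\operatorname{VoPI}(\Gamma_{\mathcal{C}})=\OPT$ from \Cref{thm:voi-coverage-exact}, applied to Feige's gap hardness for \textsc{MaximumCoverage}, directly rules out a $(1-1/e+\epsilon)$-approximation. Your worry about the witness version does not need the recovery argument. The value of any returned strategy $\hatpistrat$, namely $\omega(G_\hatpistrat)$, is computable in polynomial time because $G_\hatpistrat$ is a cograph (see \Cref{prop:voi-in-np}), so value-estimation hardness already implies hardness for algorithms that output witnesses.
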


\subsection{Fixed-parameter tractability}
\label{sec:FPT}

The foregoing hardness results notwithstanding, a natural question is whether the value of perfect information can be computed efficiently in more structured classes of decision problems. We positively answer this question for perfect-recall settings in which the largest information-set size is bounded, capturing decision problems where the amount of imperfect information is circumscribed.

\begin{restatable}[Fixed-parameter tractability]{theorem}{voidp}
\label{thm:bounded-width-dp}
There is an algorithm that computes $\operatorname{VoPI}(\Gamma)$ in time
\[
        \operatorname{poly}(|\Gamma|)
        \sum_{I\in\mathcal I(\Gamma)}
        (1+|A_I|)^{|I|}.
\]
In particular, if $\max_{I \in \mathcal{I}(\Gamma)} |I| = O(1)$, the algorithm is polynomial in the description size $|\Gamma|$.
\end{restatable}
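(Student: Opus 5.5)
By \Cref{thm:voi-integer}, $\operatorname{VoPI}(\Gamma)=\max_{\hatpistrat}\omega(G_\hatpistrat)$. Since $\hatpistrat$ picks an action independently at every node, the quantity to compute is the largest set $C$ of leaves satisfying two conditions: $C$ is pairwise conflicting in the sense of \Cref{def:conflict-graph}, and $C$ is simultaneously reachable, meaning that no two leaves of $C$ pass through a common node $h$ using different actions at $h$. I would compute this maximum by dynamic programming over the information sets of $\Gamma$, processed bottom-up along the sequence structure given by perfect recall.

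\textbf{Structural facts.} I would first establish two facts, refining the induction behind \Cref{lem:cograph}.

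(a) Take leaves $\ell\neq\ell'$ whose paths first diverge at a chance node. Let $\sigma$ be their longest common player sequence. Then $\ell$ and $\ell'$ conflict if and only if their paths both visit the next information set $J$ after $\sigma$ and take different actions there, or they agree at $J$ and conflict recursively later. In particular, if they continue from $\sigma$ into different next information sets $J\neq J'$ (or one of them terminates), they never conflict. The reason is that perfect recall forces every later information set to have a unique sequence, and hence to lie on only one side.

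(b) Consequently, a clique below a set $R$ of nodes that all share the sequence $(J,a)$ is of one of two kinds. It is either a single leaf reached from $R$ through chance only. Or it lies entirely below a single information set $K$ whose sequence is $(J,a)$, and uses only those nodes of $K$ that descend from $R$.

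\textbf{The recurrence.} For an information set $J$ and a subset $T\subseteq J$, let $F(J,T)$ be the maximum size of a reachable clique whose leaves all lie below nodes of $T$. For a map $\varphi\colon J\to A_J\cup\{\bot\}$, the nodes with $\varphi(h)=\bot$ contribute no leaves, and the other nodes follow action $\varphi(h)$. Define
\[
\operatorname{val}(\varphi)=\sum_{a\in A_J} G\bigl(J,a,\varphi^{-1}(a)\bigr),
\]
where $G(J,a,R)$ is $0$ if $R=\emptyset$, and otherwise equals
\[
\max\Bigl(1,\ \max_{K:\,\sigma_K=(J,a)} F\bigl(K,\{k\in K: k \text{ descends from } R\}\bigr)\Bigr).
\]
The sum is correct because leaves in different action groups all conflict at $J$, and these groups are disjoint sets of nodes, so reachability is preserved. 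Within a single group, fact (b) applies. Then set $F(J,T)=\max\{\operatorname{val}(\varphi):\varphi^{-1}(A_J)\subseteq T\}$, and the answer is $G$ evaluated at the empty sequence with the root. The main technical step, and the one I expect to be the obstacle, is proving this recurrence exactly. The upper-bound direction needs fact (a) carefully: two same-action groups cannot create cross conflicts outside a single next information set. The lower-bound direction needs that the pieces chosen in different groups and subtrees combine into one consistent perfect-information strategy. Both should follow from perfect recall, by the same induction as in \Cref{lem:cograph}.

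\textbf{Running time.} For each $J$, I enumerate all $(1+|A_J|)^{|J|}$ maps $\varphi$. Each $\operatorname{val}(\varphi)$ takes $\operatorname{poly}(|\Gamma|)$ time, given the already computed tables $F(K,\cdot)$ for the children $K$. I then obtain $F(J,T)$ for all $2^{|J|}$ subsets $T$ by a superset-maximum (zeta-transform) pass, which costs $O(|J|\,2^{|J|})\le \operatorname{poly}(|\Gamma|)(1+|A_J|)^{|J|}$. Summing over all $J$ gives the claimed bound.
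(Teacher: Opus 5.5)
Your proposal is correct and follows essentially the same route as the paper. It uses the same dynamic-programming state $F(I,S)$ indexed by an information set and a subset of its nodes. It uses the same recurrence: a sum over action classes at $I$ (these form a join), where each class contributes the better of a terminal leaf or a single next-information-set component (components are disjoint by perfect recall). It also gives the same $(1+|A_I|)^{|I|}$ count of partial maps. The only difference is your $\bot$ option together with the subset-maximum pass, which is harmless but unnecessary: $F(J,\cdot)$ is already monotone in $T$, and the paper simply maximizes over maps $\phi\colon S\to A_I$ for each $S$.
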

Moreover, the algorithm is \emph{fixed-parameter tractable} in terms of the largest information-set size and the largest action set. Our algorithm is based on dynamic programming and relies on the fact that the conflict graph is a cograph (\Cref{lem:cograph}) in order to set up a suitable recursion, as we formalize in~\Cref{sec:voi-algorithms}.

\section{Value of recall}
\label{sec:vor}

We next turn to the setting where $\Gamma$ may have imperfect recall (but no absentmindedness) and $\widehat\Gamma=\GammaPR$ is the coarsest perfect-recall refinement. We write $\operatorname{VoR}(\Gamma)\defeq\VR(\GammaPR/\Gamma)$; \Cref{thm:fractional-formula} establishes that $\operatorname{VoR}(\Gamma) = \max_{\hatpistrat\in\Pi(\GammaPR)}\chi_f(G_\hatpistrat)$, where $G_\hatpistrat$ is the conflict graph (\Cref{def:conflict-graph}).

\subsection{\texorpdfstring{Membership in $\class{\Sigma^p_2}$}{Membership in Sigma-2-p}}
\label{subsec:vor-membership}

We examine~\textsc{VoR}-\textsc{Threshold}, the associated decision version of the problem.

\begin{definition}[\textsc{VoR}-\textsc{Threshold}]
\label{def:vor-threshold}
An instance consists of an explicitly given finite extensive-form decision problem $\Gamma$ with rational chance probabilities and no absentmindedness, together with a rational threshold $\tau \in \mathbb
Q_{>0}$. The question is whether $\operatorname{VoR}(\Gamma) \geq \tau$.
\end{definition}

Computing the value of recall can only be harder than the value of perfect information: as noted by \citet[Proposition~4]{Berker25:Value}, adding dummy player nodes to any decision problem $\Gamma$ is sufficient to construct a strategically equivalent problem $\Gamma'$ such that giving perfect recall in $\Gamma'$ amounts to giving perfect information in $\Gamma$. Thus, $\operatorname{VoR}(\Gamma')=\operatorname{VoPI}(\Gamma)$ and the complexity result from the preceding section (\Cref{thm:voi-np-complete}) implies \NP-hardness for the analogous problem of \textsc{VoR}-\textsc{Threshold} (\Cref{def:vor-threshold}).

The proof of \NP~membership for \textsc{VoPI}-\textsc{Threshold} (\Cref{prop:voi-in-np}), on the other hand, does not carry over: even if we were given the refined strategy $\hatpistrat$ that attains the maximum in~\Cref{thm:fractional-formula}, ascertaining whether the fractional coloring (of an arbitrary graph) is at least a certain threshold is \coNP-hard. This tension between the inner and outer optimizations suggests that $\class{\Sigma^p_2}$ is the suitable complexity class for this problem.
\begin{restatable}[Membership]{proposition}{rstpropositionPropVorInSigmatwo}
\label{prop:vor-in-sigma2}
\textsc{VoR}-\textsc{Threshold} belongs to $\class{\Sigma^p_2}$.
\end{restatable}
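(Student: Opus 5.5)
We will use the characterization of \Cref{thm:fractional-formula}, which gives $\operatorname{VoR}(\Gamma)=\max_{\hatpistrat\in\Pi(\GammaPR)}\chi_f(G_\hatpistrat)$. So $\operatorname{VoR}(\Gamma)\ge\tau$ if and only if there exist a refined pure strategy $\hatpistrat$ and a nonnegative weight vector $\boldsymbol{w}$ on $L(\hatpistrat)$ such that $\sum_\ell w_\ell\ge\tau$ and $\sum_{\ell\in S}w_\ell\le 1$ for every independent set $S$ of $G_\hatpistrat$. The maximum in \eqref{eq:dual-color} is attained at a vertex of the polytope, and the maximum over the finitely many $\hatpistrat$ is attained, so this equivalence holds with an optimal $\boldsymbol{w}$.

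This gives an $\exists\forall$ formulation. The existential certificate is the pair $(\hatpistrat,\boldsymbol{w})$. Here $\hatpistrat$ lists one action per information set of $\GammaPR$, which has size polynomial in $|\Gamma|$; $\GammaPR$ itself is computable in polynomial time by computing $\operatorname{seq}(h)$ for every node. The universal part ranges over pure original strategies $\pistrat\in\Pi(\Gamma)$, each of polynomial size. By \Cref{lem:independent-compatible}, the independent sets of $G_\hatpistrat$ are exactly the subsets of leaves of $L(\hatpistrat)$ compatible with a common $\pistrat$. It therefore suffices to check $\sum_{\ell\in L(\hatpistrat)\cap L(\pistrat)}w_\ell\le 1$ for each $\pistrat$, since every independent set lies inside some such maximal compatible set and $\boldsymbol{w}\ge0$. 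For fixed $(\hatpistrat,\boldsymbol{w},\pistrat)$, the verifier computes $L(\hatpistrat)$ and the compatibility of each leaf with $\pistrat$ by walking root-to-leaf paths, then compares two rational sums. It also checks $\sum w_\ell\ge\tau$ and $\boldsymbol{w}\ge0$. All of this takes polynomial time.

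The main obstacle is making sure $\boldsymbol{w}$ has polynomial bit length, so that the certificate is polynomially bounded. We take $\boldsymbol{w}$ to be an optimal vertex of the LP \eqref{eq:dual-color} for $G_\hatpistrat$. Such a vertex is the unique solution of a square linear system with at most $|L|$ variables whose coefficients are in $\{0,1\}$: rows are tight constraints $\sum_{\ell\in S}w_\ell=1$ or $w_\ell=0$. By Cramer's rule and Hadamard's bound, each entry is a ratio of integers of absolute value at most $|L|^{|L|/2}$, so its encoding length is $O(|L|\log|L|)$, polynomial in $|\Gamma|$. The LP is bounded because singletons are independent, so $w_\ell\le1$; a vertex optimum therefore exists.

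Putting these together, $\operatorname{VoR}(\Gamma)\ge\tau$ holds if and only if there exists a polynomial-size $(\hatpistrat,\boldsymbol{w})$ such that every polynomial-size $\pistrat$ passes a polynomial-time check. This is exactly the definition of $\class{\Sigma^p_2}$.
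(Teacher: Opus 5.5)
Your proof is correct and follows the same $\exists\forall$ scheme as the paper. You existentially guess a refined pure strategy $\hatpistrat$ together with a polynomial-bit weight vector, universally guess an original pure strategy $\pistrat$, check one linear inequality in polynomial time, and control the witness size by taking a vertex and applying Cramer's rule.

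The difference is only bookkeeping. You go through \Cref{thm:fractional-formula} and \Cref{lem:independent-compatible} and guess the rescaled dual-coloring weights $w_\ell = p_c(\ell)v_\ell$. As a result, the tight constraints have $0/1$ coefficients, and Hadamard's bound gives $O(|L|\log|L|)$ bits, independent of the chance probabilities and $\tau$. The paper (\Cref{lem:poly-bit-witness}) instead guesses a payoff vector $\boldsymbol{v}$ on the simplex directly from the definition. It then bounds a basic feasible solution of a system whose coefficients involve the chance probabilities and $\tau$.
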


As a warm-up, we next establish a particularly crisp reduction from the fractional coloring problem, which we then upgrade to $\class{\Sigma^p_2}$-hardness in~\Cref{sec:vor-sigma2-hardness} using several additional ingredients.

\subsection{Fractional coloring embedding}
\label{subsec:vor-np-hardness}

Let $G=(V,E)$ be a simple graph. We fix an arbitrary orientation of every edge and a total order $\prec$ on $E$. For each vertex $v$, we list its incident edges in the order inherited from $\prec$: $e_1(v)\prec e_2(v)\prec\cdots\prec e_{d(v)}(v)$, where $d(v)\defeq\deg_G(v)$.

\begin{definition}[Graph embedding construction]
\label{def:graph-embedding-construction}
The decision problem $\Gamma_G$ is defined as follows.
\begin{enumerate}[label=\arabic*.]
\item At the root, chance chooses $v\in V$ uniformly.
\item On the branch selected by $v$, the player first visits a singleton \emph{marker} information set $M_v$ having one dummy action $\bot$.
\item If $d(v)=0$, the dummy action leads directly to a terminal success leaf $\ell_v$.  If $d(v)>0$, it leads to a player node $h_{v,1}$.  For $i=1,\ldots,d(v)$, the node $h_{v,i}$ belongs to the original information set $I_{e_i(v)}$, whose action set is $\{0,1\}$.
\item At $h_{v,i}$, the continuing action is $0$ if $v$ is the tail endpoint of the oriented edge $e_i(v)$ and $1$ if $v$ is the head endpoint.  This action leads to $h_{v,i+1}$ when $i<d(v)$ and to $\ell_v$ when $i=d(v)$; the other action leads immediately to a private terminal exit leaf $t_{v,i}$.
\end{enumerate}
\end{definition}

\begin{figure}[!htbp]
    \centering
    \resizebox{\linewidth}{!}{\begin{tikzpicture}[
    x=1cm,y=1cm,
    >=Latex,
    chance/.style={circle,draw=black!70,fill=gray!12,minimum size=8.5mm,inner sep=0pt,font=\small},
    marker/.style={circle,draw=black!65,fill=gray!8,minimum size=8mm,inner sep=1.4pt,font=\small},
    test/.style={circle,draw=blue!65!black,fill=blue!5,minimum size=8mm,inner sep=0pt,font=\small},
    success/.style={rectangle,rounded corners=2pt,draw=green!45!black,fill=green!8,minimum width=11mm,minimum height=6.5mm,inner sep=1.2pt,font=\small},
    exit/.style={rectangle,rounded corners=2pt,draw=red!55!black,fill=red!5,minimum width=11mm,minimum height=6.5mm,inner sep=1.2pt,font=\scriptsize},
    cont/.style={draw=green!45!black,very thick,-{Latex[length=2.2mm]}},
    quit/.style={draw=red!55!black,semithick,-{Latex[length=2mm]}},
    chanceedge/.style={draw=black!55,semithick,-{Latex[length=2mm]}},
    branch/.style={draw=black!70,semithick,-{Latex[length=2mm]}},
    infoline/.style={draw=blue!65!black,densely dashed,thick},
    infoset/.style={draw=blue!65!black,dashed,rounded corners=3pt,inner sep=3.5pt},
    smalllabel/.style={font=\scriptsize,inner sep=1pt,fill=white},
    note/.style={font=\scriptsize,align=center,text=black!70}
]


\node[chance] (root) at (0,0) {$c$};
\coordinate (X1) at (-8,0);
\coordinate (X2) at (-4,0);
\coordinate (X3) at ( 0,0);
\coordinate (X4) at ( 4,0);
\coordinate (X5) at ( 8,0);

\node[marker] (M1) at (-8,-1.45) {$M_1$};
\node[marker] (M2) at (-4,-1.45) {$M_2$};
\node[marker] (M3) at ( 0,-1.45) {$M_3$};
\node[marker] (M4) at ( 4,-1.45) {$M_4$};
\node[marker] (M5) at ( 8,-1.45) {$M_5$};

\draw[chanceedge] (root) -- node[smalllabel,pos=.34,sloped] {$1/5$} (M1);
\draw[chanceedge] (root) -- node[smalllabel,pos=.36,sloped] {$1/5$} (M2);
\draw[chanceedge] (root) -- node[smalllabel,right,pos=.55] {$1/5$} (M3);
\draw[chanceedge] (root) -- node[smalllabel,pos=.36,sloped] {$1/5$} (M4);
\draw[chanceedge] (root) -- node[smalllabel,pos=.34,sloped] {$1/5$} (M5);

\node[note,anchor=east] at (-11.0,-2.8) {$I_{12}$ layer};
\node[note,anchor=east] at (-11.0,-4.2) {$I_{23}$ layer};
\node[note,anchor=east] at (-11.0,-5.6) {$I_{34}$ layer};
\node[note,anchor=east] at (-11.0,-7.0) {$I_{45}$ layer};
\node[note,anchor=east] at (-11.0,-8.4) {$I_{51}$ layer};

\node[test] (I12a) at (-8,-2.8) {$I_{12}$};  
\node[test] (I12b) at (-4,-2.8) {$I_{12}$};  
\node[test] (I23a) at (-4,-4.2) {$I_{23}$};  
\node[test] (I23b) at ( 0,-4.2) {$I_{23}$};  
\node[test] (I34a) at ( 0,-5.6) {$I_{34}$};  
\node[test] (I34b) at ( 4,-5.6) {$I_{34}$};  
\node[test] (I45a) at ( 4,-7.0) {$I_{45}$};  
\node[test] (I45b) at ( 8,-7.0) {$I_{45}$};  
\node[test] (I51b) at (-8,-8.4) {$I_{51}$};  
\node[test] (I51a) at ( 8,-8.4) {$I_{51}$};  

\node[success] (ell1) at (-8,-9.9) {$\ell_1$};
\node[success] (ell2) at (-4,-9.9) {$\ell_2$};
\node[success] (ell3) at ( 0,-9.9) {$\ell_3$};
\node[success] (ell4) at ( 4,-9.9) {$\ell_4$};
\node[success] (ell5) at ( 8,-9.9) {$\ell_5$};

\node[exit] (T112) at (-9.8,-2.8) {$t_{1,12}$};
\node[exit] (T212) at (-2.2,-2.8) {$t_{2,12}$};
\node[exit] (T223) at (-5.8,-4.2) {$t_{2,23}$};
\node[exit] (T323) at ( 1.8,-4.2) {$t_{3,23}$};
\node[exit] (T334) at (-1.8,-5.6) {$t_{3,34}$};
\node[exit] (T434) at ( 5.8,-5.6) {$t_{4,34}$};
\node[exit] (T445) at ( 2.2,-7.0) {$t_{4,45}$};
\node[exit] (T545) at ( 9.8,-7.0) {$t_{5,45}$};
\node[exit] (T151) at (-9.8,-8.4) {$t_{1,51}$};
\node[exit] (T551) at ( 9.8,-8.4) {$t_{5,51}$};

\draw[branch] (M1) -- node[smalllabel,right,pos=.55] {$\bot$} (I12a);
\draw[branch] (M2) -- node[smalllabel,right,pos=.55] {$\bot$} (I12b);
\draw[branch] (M3) -- node[smalllabel,right,pos=.55] {$\bot$} (I23b);
\draw[branch] (M4) -- node[smalllabel,right,pos=.55] {$\bot$} (I34b);
\draw[branch] (M5) -- node[smalllabel,right,pos=.55] {$\bot$} (I45b);

\draw[cont] (I12a) -- node[smalllabel,right,pos=.38] {$0$} (I51b);
\draw[quit] (I12a) -- node[smalllabel,above] {$1$} (T112);
\draw[cont] (I51b) -- node[smalllabel,right] {$1$} (ell1);
\draw[quit] (I51b) -- node[smalllabel,above] {$0$} (T151);

\draw[cont] (I12b) -- node[smalllabel,right] {$1$} (I23a);
\draw[quit] (I12b) -- node[smalllabel,above] {$0$} (T212);
\draw[cont] (I23a) -- node[smalllabel,right,pos=.78] {$0$} (ell2);
\draw[quit] (I23a) -- node[smalllabel,above] {$1$} (T223);

\draw[cont] (I23b) -- node[smalllabel,right] {$1$} (I34a);
\draw[quit] (I23b) -- node[smalllabel,above] {$0$} (T323);
\draw[cont] (I34a) -- node[smalllabel,right,pos=.78] {$0$} (ell3);
\draw[quit] (I34a) -- node[smalllabel,above] {$1$} (T334);

\draw[cont] (I34b) -- node[smalllabel,right] {$1$} (I45a);
\draw[quit] (I34b) -- node[smalllabel,above] {$0$} (T434);
\draw[cont] (I45a) -- node[smalllabel,right,pos=.78] {$0$} (ell4);
\draw[quit] (I45a) -- node[smalllabel,above] {$1$} (T445);

\draw[cont] (I45b) -- node[smalllabel,right] {$1$} (I51a);
\draw[quit] (I45b) -- node[smalllabel,above] {$0$} (T545);
\draw[cont] (I51a) -- node[smalllabel,right] {$0$} (ell5);
\draw[quit] (I51a) -- node[smalllabel,above] {$1$} (T551);

\begin{scope}[on background layer]
  \draw[infoline] (I12a) -- (I12b);
  \draw[infoline] (I23a) -- (I23b);
  \draw[infoline] (I34a) -- (I34b);
  \draw[infoline] (I45a) -- (I45b);
  \draw[infoline] (I51b) -- (I51a);

  \node[infoset,fit=(M1)] {};
  \node[infoset,fit=(M2)] {};
  \node[infoset,fit=(M3)] {};
  \node[infoset,fit=(M4)] {};
  \node[infoset,fit=(M5)] {};
\end{scope}

\node[smalllabel,text=blue!65!black] at ($.5*(I12a)+.5*(I12b)+(0,.23)$) {$I_{12}$};
\node[smalllabel,text=blue!65!black] at ($.5*(I23a)+.5*(I23b)+(0,.23)$) {$I_{23}$};
\node[smalllabel,text=blue!65!black] at ($.5*(I34a)+.5*(I34b)+(0,.23)$) {$I_{34}$};
\node[smalllabel,text=blue!65!black] at ($.5*(I45a)+.5*(I45b)+(0,.23)$) {$I_{45}$};
\node[smalllabel,text=blue!65!black] at ($.5*(I51a)+.5*(I51b)+(0,.23)$) {$I_{51}$};

\node[note] at (-8,-10.55) {$v_1$: tests $e_{12}$ then $e_{51}$};
\node[note] at (-4,-10.55) {$v_2$: tests $e_{12}$ then $e_{23}$};
\node[note] at ( 0,-10.55) {$v_3$: tests $e_{23}$ then $e_{34}$};
\node[note] at ( 4,-10.55) {$v_4$: tests $e_{34}$ then $e_{45}$};
\node[note] at ( 8,-10.55) {$v_5$: tests $e_{45}$ then $e_{51}$};

\end{tikzpicture}}
    \caption{The decision problem $\Gamma_{C_5}$ from \Cref{def:graph-embedding-construction} for the oriented cycle $v_1\to v_2\to v_3\to v_4\to v_5\to v_1$ with edge order $e_{12}\prec e_{23}\prec e_{34}\prec e_{45}\prec e_{51}$.  Each original edge information set occupies one horizontal layer; nonincident layers are omitted on each vertex branch.  The continuing actions realize the cycle conflicts under the all-continuing refined strategy. Green edges indicate continuing actions, red edges indicate exit leaves, blue dashed horizontal segments indicate original information sets, and singleton dashed boxes indicate marker information sets $M_i$.}
    \label{fig:full-vor-fractional}
\end{figure}

The key claim, established in~\Cref{app:vor-graph-embedding}, is that the value of recall of $\Gamma_G$ is equal to the fractional chromatic number of $G$.

\begin{restatable}{theorem}{NPfractional}
\label{thm:graph-embedding}
For every graph $G$, the decision problem $\Gamma_G$ of \Cref{def:graph-embedding-construction} is free of absentmindedness, has size $O(|V|+|E|)$, and satisfies $\operatorname{VoR}(\Gamma_G)=\chi_f(G)$.
\end{restatable}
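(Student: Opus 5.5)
The plan is to dispose of the structural claims first and then apply \Cref{thm:fractional-formula}, which reduces the statement to showing $\max_{\hatpistrat\in\Pi(\GammaPR_G)}\chi_f(G_\hatpistrat)=\chi_f(G)$.

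\textbf{Structural claims.} On the branch of $v$, the path visits $M_v$ and then $I_{e_1(v)},\dots,I_{e_{d(v)}(v)}$. These are distinct edges, so no information set is repeated and there is no absentmindedness. The tree has $|V|$ markers, $\sum_v d(v)=2|E|$ test nodes, and $|V|+2|E|$ leaves, which gives size $O(|V|+|E|)$.

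\textbf{Identifying $\GammaPR_G$.} For the node $h_{v,i}$, the sequence $\operatorname{seq}(h_{v,i})$ begins with $M_v$. The two nodes of $I_e$ with $e=uv$ therefore have different sequences, so $\GammaPR_G$ is exactly the perfect-information refinement. A refined pure strategy is then just an independent choice of action at every node $h_{v,i}$.

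\textbf{Lower bound, $\ge\chi_f(G)$.} Take the all-continuing strategy $\hatpistrat^\ast$. Then $L(\hatpistrat^\ast)=\{\ell_v\}_{v\in V}$. The leaves $\ell_u$ and $\ell_v$ are adjacent in $G_{\hatpistrat^\ast}$ iff their paths share some $I_e$ with different actions. Shared information sets are exactly the edges incident to both $u$ and $v$, that is, the edge $uv$ if it exists. On that edge, the tail continues with $0$ and the head with $1$, so the actions differ. Hence $G_{\hatpistrat^\ast}\cong G$.

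\textbf{Upper bound, $\le\chi_f(G)$.} This is the main obstacle. Fix an arbitrary refined strategy $\hatpistrat$. On branch $v$ it reaches exactly one leaf: either $\ell_v$, or the first exit $t_{v,i}$ where it deviates. So $L(\hatpistrat)$ has one leaf $\lambda(v)$ per vertex. The path to $\lambda(v)$ visits a prefix of $v$'s incident edge sets, using continuing actions everywhere except possibly the last one, where it takes the exit action.

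I claim the map $\lambda(v)\mapsto v$ is a homomorphism into $G$, i.e.\ $G_\hatpistrat$ is isomorphic to a spanning subgraph of $G$. Two leaves $\lambda(u)$ and $\lambda(v)$ can conflict only at a shared information set $I_e$. That forces $e=uv\in E$, so every edge of $G_\hatpistrat$ is an edge of $G$.

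(I should note that $G_\hatpistrat$ may be missing some edges of $G$. For example, if $u$ exits at $e$ with action $1$ and $v$ is the head continuing with $1$, they agree. Deleting edges can only lower $\chi_f$.)

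Since $\chi_f$ is monotone under taking subgraphs on the same vertex set, $\chi_f(G_\hatpistrat)\le\chi_f(G)$. This monotonicity is immediate from \eqref{eq:dual-color}: every independent set of $G$ remains independent after deleting edges, so any weighting feasible for the subgraph's LP is feasible for $G$'s LP.

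Combining the two bounds with \Cref{thm:fractional-formula} gives $\operatorname{VoR}(\Gamma_G)=\chi_f(G)$. Vertices with $d(v)=0$ are isolated in both graphs and are handled trivially.
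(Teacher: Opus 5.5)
Your proposal is correct and follows essentially the same route as the paper. You separate the two occurrences of each $I_e$ in the perfect-recall refinement via the distinct markers (noting this makes it the perfect-information refinement), realize $G$ exactly with the all-continuing strategy, and bound every other refined strategy by showing its conflict graph is a spanning subgraph of $G$ and using monotonicity of $\chi_f$ under edge deletion; these are precisely the steps the paper splits into its lemmas before invoking \Cref{thm:fractional-formula}.
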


\Cref{fig:full-vor-fractional} illustrates an example of this reduction when $G$ is the $5$-cycle. In fact, this example also shows that \emph{the value of recall can be fractional}, unlike the value of perfect information; we elaborate on this in~\Cref{subsec:c5-example} below. First, combining with the inapproximability of~\citet{Zuckerman06:Linear} for the fractional coloring number, we point out the following consequence.

\begin{corollary}
    \label{cor:VoR-inapprox}
    For any constant $\epsilon \in (0, 1/2)$, approximating VoR within a factor of $|L|^{1/2 - \epsilon }$ is \NP-hard.
\end{corollary}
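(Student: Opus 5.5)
The plan is to combine \Cref{thm:graph-embedding} with the known hardness of approximating the fractional chromatic number, keeping track of how the approximation exponent changes when we pass from $|V|$ to $|L|$. \citet{Zuckerman06:Linear} shows that for every constant $\epsilon'>0$ it is \NP-hard to approximate $\chi_f(G)$ within a factor $|V|^{1-\epsilon'}$ on $n$-vertex graphs. The hardness comes from a gap instance: graphs with $\chi_f(G)\le n^{\epsilon'}$ versus graphs with $\chi_f(G)\ge n^{1-\epsilon'}$.

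Given such a graph $G$, we build $\Gamma_G$ from \Cref{def:graph-embedding-construction} in polynomial time. By \Cref{thm:graph-embedding}, $\Gamma_G$ has no absentmindedness and $\operatorname{VoR}(\Gamma_G)=\chi_f(G)$. Any approximation algorithm for VoR on $\Gamma_G$ is therefore an approximation algorithm for $\chi_f(G)$ with the same ratio. It remains to express the ratio in terms of $|L|$, the number of leaves of $\Gamma_G$.

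Count the leaves: each vertex $v$ contributes one success leaf $\ell_v$ and $d(v)$ exit leaves $t_{v,i}$. Hence $|L| = |V| + 2|E| \le |V|^2$ for $|V|\ge 1$. So a factor of $|L|^{1/2-\epsilon}$ is at most
\[
|V|^{1-2\epsilon} < |V|^{1-\epsilon'}
\]
whenever we pick $\epsilon' < 2\epsilon$ (say $\epsilon'=\epsilon$). Consequently, a polynomial-time $|L|^{1/2-\epsilon}$-approximation for VoR would distinguish the two cases of Zuckerman's gap instance. The gap is $n^{1-2\epsilon'} = n^{1-2\epsilon}$, which exceeds $|L|^{1/2-\epsilon}\le n^{1-2\epsilon}$ only marginally. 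To get a strict gap, choose $\epsilon'$ slightly smaller, e.g.\ $\epsilon'=\epsilon/2$, giving gap $n^{1-\epsilon}>n^{1-2\epsilon}$ for large $n$. This shows that such an approximation is \NP-hard.

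The main care point is this bookkeeping of exponents. The factor $1/2$ is exactly the loss from the quadratic blowup $|L|=O(|V|^2)$ in dense graphs. The constraint $\epsilon<1/2$ ensures the exponent is positive. One should also check that Zuckerman's hardness applies to $\chi_f$ itself and not only to $\chi$. It does: his result covers the fractional chromatic number, via the sandwich $\omega\le\chi_f\le\chi$ and the clique/chromatic gaps, which coincide in the produced instances.
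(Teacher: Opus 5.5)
Your proposal is correct and is the argument the paper intends (the paper only cites Zuckerman's hardness and gives no further detail): \Cref{thm:graph-embedding} gives $\operatorname{VoR}(\Gamma_G)=\chi_f(G)$, and $|L|=|V|+2|E|\le |V|^2$ turns $|V|^{1-\epsilon'}$-hardness for $\chi_f$ into $|L|^{1/2-\epsilon}$-hardness; in the factor formulation $\epsilon'=\epsilon$ already works, so your later switch to $\epsilon'=\epsilon/2$ in the gap formulation is harmless but unnecessary. One correction to your last paragraph: Zuckerman's gap transfers to $\chi_f$ because his no-instances have tiny independence number and $|V|/\alpha(G)\le\chi_f(G)\le\chi(G)$ (the same bound used in the proof of \Cref{lem:fglss-gap}), not because of the clique bound $\omega\le\chi_f$.
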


\FloatBarrier
\subsubsection{\texorpdfstring{Fractional example of VoR: embedding the $5$-cycle}{Fractional example: the 5-cycle}}
\label{subsec:c5-example}

We now formalize the fact that, unlike the value of perfect information, the value of recall need not be an integer. In particular, such an example already arises from our previous construction applied to the cycle $C_5$.

\begin{figure}[!htbp]
\centering
\begin{tikzpicture}[
    x=1cm,y=1cm,
    >=Latex,
    vertex/.style={circle,draw,minimum size=9mm,inner sep=0pt,fill=white,font=\small},
    certbox/.style={draw=gray!55,line width=0.2pt,rounded corners=2pt,anchor=west,align=left,inner xsep=7pt,inner ysep=6pt,font=\small},
    title/.style={font=\bfseries\small,align=center},
    note/.style={font=\footnotesize,align=center,text width=40mm},
    result/.style={inner xsep=9pt,inner ysep=5pt,font=\bfseries\small}
]

\node[title] at (0,2.05) {$C_5$};
\foreach \i/\ang in {1/90,2/18,3/-54,4/-126,5/162} {
    \node[vertex] (v\i) at ({1.25*cos(\ang)},{1.25*sin(\ang)}) {$v_\i$};
}
\foreach \a/\b in {1/2,2/3,3/4,4/5,5/1} {
    \draw[semithick] (v\a) -- (v\b);
}
\node[result] at (0,-2.55) {$\chi_f(C_5)=\frac52$};

\node[certbox] (dual) at (5.5,1.55) {
    \textbf{Dual lower bound}\\[1mm]
    Set $w_{v_i}=\frac12$ for $i=1,\ldots,5$.\\
    For every $S\in\Ind(C_5)$, $\sum_{v\in S}w_v\le 1$.\\[1mm]
    Objective value: $\sum_{i=1}^5w_{v_i}=\frac52$.
};

\node[certbox] (primal) at (5.5,-1.75) {
    \textbf{Fractional coloring}\\[1mm]
    Use the five independent sets\\
    $S_1=\{v_1,v_3\}$,\quad $S_2=\{v_2,v_4\}$,\quad $S_3=\{v_3,v_5\}$,\\
    $S_4=\{v_4,v_1\}$,\quad $S_5=\{v_5,v_2\}$.\\[1mm]
    Set $z_{S_i}=\frac12 \text{ } \forall i$; each vertex is covered twice.
};

\draw[gray!55,line width=0.4pt]
    (3.55,0 |- dual.north) -- (3.55,0 |- primal.south);

\end{tikzpicture}
\caption{The fractional-coloring certificate behind \Cref{prop:vor-fractional-example}.  The uniform dual weights give $\chi_f(C_5)\ge 5/2$, while the five displayed independent sets with weight $1/2$ each give $\chi_f(C_5)\le 5/2$.  Applying the graph-embedding construction of \Cref{def:graph-embedding-construction} therefore yields an imperfect-recall decision problem $\Gamma_{C_5}$ with no absentmindedness and $\operatorname{VoR}(\Gamma_{C_5})=\chi_f(C_5)=5/2$.}
\label{fig:c5}
\end{figure}

\begin{proposition}[Fractional value of recall]
\label{prop:vor-fractional-example}
There exists a decision problem $\Gamma$ with no absentmindedness such that $\operatorname{VoR}(\Gamma) = 5/2$. In particular, the value of recall need not be an integer.
\end{proposition}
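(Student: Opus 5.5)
The plan is to instantiate \Cref{thm:graph-embedding} with $G=C_5$, the construction shown in \Cref{fig:full-vor-fractional}. That theorem gives directly that $\Gamma_{C_5}$ is free of absentmindedness and that $\operatorname{VoR}(\Gamma_{C_5})=\chi_f(C_5)$. The proposition then reduces to the purely graph-theoretic identity $\chi_f(C_5)=5/2$, and $5/2\notin\mathbb{Z}$ yields the second sentence.

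To compute $\chi_f(C_5)$ I will use the LP formulation \eqref{eq:dual-color} together with its dual, the fractional covering by independent sets: minimize $\sum_S z_S$ subject to $\sum_{S\ni v} z_S\ge 1$ for all $v$, with $z\ge 0$. By LP duality both programs have the same optimum.

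\textbf{Lower bound.} Set $w_{v_i}=1/2$ for every $i$. Every independent set of $C_5$ has size at most $2$, since any three vertices of a $5$-cycle include two consecutive ones. Hence every constraint $\sum_{v\in S}w_v\le 1$ holds, and the objective value is $5/2$.

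\textbf{Upper bound.} Take the five independent sets $\{v_i,v_{i+2}\}$, with indices mod $5$, each with weight $z=1/2$. Each vertex lies in exactly two of them and is therefore covered with total weight $1$. The cost is $5/2$, so weak duality gives $\chi_f(C_5)\le 5/2$.

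There is essentially no obstacle, because all the modeling work is already contained in \Cref{thm:graph-embedding}. The only point needing care is consistency with the convention in \eqref{eq:dual-color}, which defines $\chi_f$ as the maximization LP. The upper bound therefore has to go through the covering dual via weak duality, which needs only the one-line check $\sum_v w_v\le\sum_v w_v\sum_{S\ni v}z_S=\sum_S z_S\sum_{v\in S}w_v\le\sum_S z_S$.

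As an optional sanity check, one could also verify the value directly on $\Gamma_{C_5}$. Under the all-continuing refined strategy, the leaves $\ell_1,\dots,\ell_5$ with payoffs $v_{\ell_i}=1/p_c(\ell_i)=5$ give refined value $5$. Any original pure strategy fixes one bit per edge set $I_{e}$ and reaches at most two nonadjacent success leaves, for a value of at most $2$, so the ratio is $5/2$.
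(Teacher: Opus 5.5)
Your proposal is correct and follows the paper's proof essentially verbatim. You instantiate \Cref{thm:graph-embedding} with $G=C_5$, then certify $\chi_f(C_5)=5/2$ using uniform weights $1/2$ for the lower bound and the five independent sets $\{v_i,v_{i+2}\}$, each with weight $1/2$, for the upper bound; your explicit weak-duality check and the direct sanity check on $\Gamma_{C_5}$ are harmless additions, though note that the sanity check only witnesses $\operatorname{VoR}\ge 5/2$, with the upper bound still coming from the theorem.
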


\begin{proof}
The decision problem produced by \Cref{def:graph-embedding-construction} from the graph $C_5$ is $\Gamma=\Gamma_{C_5}$.  Then $\operatorname{VoR}(\Gamma)=\chi_f(C_5)$. It remains to compute $\chi_f(C_5)$. For the lower bound, we use the dual linear program for fractional chromatic number:
\[
        \chi_f(G)=
        \max\left\{\sum_{v\in V}w_v:
        w_v\ge0,
        \sum_{v\in S}w_v\le1
        \text{ for every independent }S\right\}.
\]
We assign weight $w_v=1/2$ to every vertex $v$ of $C_5$.  Every independent set of $C_5$ has size at most $2$, so every independent set has total weight at most $1$.  Thus, $\boldsymbol{w}$ is dual feasible, and
\[
        \chi_f(C_5)\ge \sum_{v\in V(C_5)} w_v = 5\cdot\frac12=\frac52.
\]
For the upper bound, we consider the five independent sets
\[
        S_1\defeq\{v_1,v_3\},\quad
        S_2\defeq\{v_2,v_4\},\quad
        S_3\defeq\{v_3,v_5\},\quad
        S_4\defeq\{v_4,v_1\},\quad
        S_5\defeq\{v_5,v_2\}.
\]
We assign weight $z_{S_i}=1/2$ to each $S_i$, and weight $0$ to every other independent set.  Each vertex belongs to exactly two of these five independent sets, so the covering constraints of the primal LP are satisfied:
\[
        \sum_{S\in\Ind(C_5):\, v\in S} z_S = 1
        \quad\text{for every }v\in V(C_5).
\]
Therefore, the total primal weight is
\[
        \sum_{S\in\Ind(C_5)} z_S = 5\cdot\frac12=\frac52.
\]
This shows that $\chi_f(C_5)\le 5/2$.  Combining the two bounds yields $\chi_f(C_5)=5/2$, and therefore $\operatorname{VoR}(\Gamma)=5/2$.
\end{proof}

We next build on this reduction to settle the complexity of the decision version of~$\operatorname{VoR}$.

\FloatBarrier
\subsection{\texorpdfstring{$\class{\Sigma^p_2}$-hardness of \textsc{VoR}}{Sigma-2-p hardness}}
\label{sec:vor-sigma2-hardness}

To prove $\class{\Sigma^p_2}$-hardness, we reduce directly from \textsc{ExistsForallDNFSAT}~\citep{Stockmeyer76:Polynomial}, with input formula $\exists \forall \Phi$. The first step is to reduce it to $3$-colorability under what we refer to as~\emph{optional edges}, which capture the existential assignment. In particular, the optional edges fix the existential variables in a $3$-colorability encoding of $\neg\Phi$; the resulting graph is $3$-colorable precisely when some universal assignment falsifies $\Phi$. Non-$3$-colorability therefore captures the universal condition.

\begin{restatable}{lemma}{rstlemmaLemOptionalEdgeThreecol}
\label{lem:optional-edge-3col}
Let $\Phi(X,Y)$ be a DNF formula with $X=\{x_1,\ldots,x_n\}$ and with no empty term. One can construct in polynomial time a graph $G_0=(W,E_0)$, a triangle with vertices $\mathsf T,\mathsf F,\mathsf B$, and, for every $i\in[n]$ and $b\in\{0,1\}$, a vertex $\lambda_i^{(b)}\in W$ and an optional edge $e_i^{(b)}=\{\lambda_i^{(b)},\mathsf T\}$ such that, for every $\boldsymbol{a}\in\{0,1\}^n$, the graph $G_{\boldsymbol{a}}\defeq G_0+\{e_i^{(a_i)}:i=1,\ldots,n\}$ is $3$-colorable if and only if there exists an assignment to $Y$ for which $\Phi(\boldsymbol{a},Y)=\mathsf{FALSE}$. Equivalently, $G_{\boldsymbol{a}}$ is not $3$-colorable if and only if $\forall Y\,\Phi(\boldsymbol{a},Y)$.
\end{restatable}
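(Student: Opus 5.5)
The plan is the classical 3-coloring encoding of CNF satisfiability, applied to $\neg\Phi$. Negating the DNF $\Phi=\bigvee_j t_j$ gives a CNF $\neg\Phi=\bigwedge_j C_j$, where $C_j$ is the disjunction of the negations of the literals in the term $t_j$. Each clause $C_j$ is nonempty because no term of $\Phi$ is empty. For fixed $\boldsymbol a$, $G_{\boldsymbol a}$ should be $3$-colorable exactly when this CNF, with $X$ fixed to $\boldsymbol a$, is satisfiable by some assignment to $Y$.

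The graph $G_0$ is built as follows.
\begin{itemize}
\item Take the palette triangle $\mathsf T,\mathsf F,\mathsf B$.
\item For every variable $z\in X\cup Y$, add literal vertices $z,\bar z$. Join them to each other and both to $\mathsf B$, so that in any proper $3$-coloring exactly one of them is colored $\mathsf T$ and the other $\mathsf F$.
\item For each $x_i$, set $\lambda_i^{(1)}\defeq \bar x_i$ and $\lambda_i^{(0)}\defeq x_i$. Adding the optional edge $e_i^{(a_i)}=\{\lambda_i^{(a_i)},\mathsf T\}$ forbids that vertex from color $\mathsf T$, which forces the literal $x_i$ to take truth value $a_i$.
\item For each clause $C_j$, attach the standard OR-gadget. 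A chain of triangle-based OR gadgets computes the disjunction of its literal vertices, and the output vertex is joined to both $\mathsf F$ and $\mathsf B$.
\end{itemize}
The key gadget property is: a proper $3$-coloring of the gadget extending given $\mathsf T/\mathsf F$ colors on its inputs exists iff at least one input is colored $\mathsf T$. For a single-literal clause, it suffices to join that literal vertex to $\mathsf F$ and $\mathsf B$. The whole construction has size linear in $|\Phi|$.

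The two directions then go as follows.
\begin{itemize}
\item Suppose $\Phi(\boldsymbol a,\boldsymbol y)=\mathsf{FALSE}$ for some $\boldsymbol y$. Color each literal by its truth value under $(\boldsymbol a,\boldsymbol y)$. This respects the optional edges, since $\lambda_i^{(a_i)}$ is the false literal and so is colored $\mathsf F$. Every clause $C_j$ is then satisfied, so each OR-gadget extends by the gadget property.
\item Conversely, take any proper $3$-coloring of $G_{\boldsymbol a}$ and rename colors so the triangle receives $\mathsf T,\mathsf F,\mathsf B$. The literal constraints give a Boolean assignment, and the optional edges force $X=\boldsymbol a$. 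The gadgets force every clause $C_j$ to contain a true literal, so the $Y$-part of the assignment falsifies every term of $\Phi$.
\end{itemize}
The final equivalence in the statement is just the contrapositive.

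The main obstacle is verifying the OR-gadget property carefully in both directions. The standard check is case analysis on a single two-input gadget, using that its inputs are restricted to $\{\mathsf T,\mathsf F\}$ because they are adjacent to $\mathsf B$ (inputs deep in the chain are outputs of earlier gadgets). Intermediate chain outputs may take any color, so I would check that the gadget's output can be $\mathsf T$ iff some input is $\mathsf T$, and that it is never forced into $\mathsf T$ when all inputs are $\mathsf F$. Chaining then gives the property for clauses of arbitrary width. A second point to double-check is that the vertices $\lambda_i^{(b)}$ are non-adjacent to $\mathsf T$ in $G_0$, so the optional edges genuinely add new constraints; this holds because literal vertices are adjacent only to $\mathsf B$, their complement, and gadget vertices.
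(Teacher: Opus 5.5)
Your proof is correct and shares the paper's skeleton. You negate $\Phi$ into a CNF, use the palette triangle and complementary literal pairs adjacent to $\mathsf B$, and take $\lambda_i^{(0)}=x_i$, $\lambda_i^{(1)}=\bar x_i$ exactly as the paper does, so the optional edge to $\mathsf T$ pins $x_i=a_i$.

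The difference is the clause encoding. The paper first runs $\neg\Phi$ through a restriction-preserving 3-CNF conversion (\Cref{lem:restriction-preserving-3cnf}). That conversion pads short clauses, splits long ones with fresh per-clause variables, and adds tautologies $(p\vee\neg p)$ so every $x_i$ occurs. The paper then encodes each 3-clause with an auxiliary pair $w,\bar w$ and two not-all-equal gadgets, $\mathrm{NAE}(\ell_1,\ell_2,w)$ and $\mathrm{NAE}(\bar w,\ell_3,\mathsf F)$.

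You instead apply OR-chains directly to clauses of arbitrary width and create literal vertices for all of $X\cup Y$ up front. This removes the need for the 3-CNF lemma and makes the argument more self-contained. What the paper's route buys is uniformity: every terminal of its NAE gadgets ($\ell_i$, $w$, $\bar w$, $\mathsf F$) is truth-colored, so one case analysis suffices. Your intermediate OR outputs, by contrast, are free vertices that can receive $\mathsf B$.

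For that reason, state your invariant for the whole chain rather than for a single two-input gadget. A gadget with inputs colored $\mathsf F$ and $\mathsf B$ admits output $\mathsf T$, so ``output can be $\mathsf T$ iff some input is $\mathsf T$'' is false in that generality. Also, ``never forced into $\mathsf T$ when all inputs are $\mathsf F$'' is not the property you need. The facts you need are these:
\begin{enumerate}
\item If all literals of the clause are $\mathsf F$, every intermediate output is forced to $\mathsf F$. All inputs therefore stay truth-colored, and the final output, adjacent to $\mathsf F$ and $\mathsf B$, has no available color.
\item If some literal is $\mathsf T$, the chain can be colored so that every output from the first gadget receiving a $\mathsf T$ literal onward is $\mathsf T$.
\end{enumerate}
Both follow from the single-gadget case analysis on truth-colored inputs alone.
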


\begin{figure}[!htbp]
\centering
\begin{tikzpicture}[
    x=1cm,y=1cm,
    >=Latex,
    stage/.style={draw=black!65,rounded corners=0.5pt,align=center,inner xsep=4pt,inner ysep=4pt,font=\scriptsize,minimum width=25mm,minimum height=12mm,text width=32mm,fill=white},
    source/.style={stage},
    graph/.style={stage},
    lift/.style={stage},
    problem/.style={stage},
    decision/.style={stage},
    arrow/.style={->,thick},
    arlabel/.style={font=\scriptsize,align=center,fill=white,inner sep=1.2pt}
]

\node[source] (sat) at (0,0) {
    \textsc{ExistsForall}\\
    \textsc{DNFSAT}\\[1mm]
    $\exists X\,\forall Y\,\Phi$
};
\node[graph] (j) at (4.2,0) {
    optional-edge graph \\[1mm]
    $G_{\boldsymbol{a}}=G_0+\{e_i^{(a_i)}\}$
};
\node[lift] (k) at (8.4,0) {
    FGLSS lift $\mathcal{F}(G_{\boldsymbol{a}})$ \\[1mm]
    $3$ vs.\ $\ge \tau_r$
};
\node[problem] (g) at (12.6,0) {
    decision problem $(\Gamma,\tau_r)$\\[1mm]
    $\operatorname{VoR}(\Gamma)\ge \tau_r$
};

\draw[arrow] (sat) -- (j);
\draw[arrow] (j) -- (k);
\draw[arrow] (k) -- (g);

\end{tikzpicture}
\caption{Overview of the $\class{\Sigma^p_2}$-hardness reduction.}
\label{fig:sigma2-reduction-pipeline}
\end{figure}

To pass from coloring to fractional coloring, we use a conflict-graph construction following the approach of~\citet{Feige91:Approximating}, applied here to the constraints of proper 3-coloring. For a graph $G=(W,E)$, we define $\mathcal{F}(G)$ as follows. The vertex set is $V(\mathcal{F}(G))\defeq W\times\{0,1,2\}$, while the edges are of two kinds: i) $(w,c)$ is adjacent to $(w,c')$ whenever $c\ne c'$, and ii) $(w,c)$ is adjacent to $(w',c)$ whenever $\{w, w'\}\in E$.

\begin{restatable}[Fractional coloring gap]{lemma}{rstlemmaLemFglssGap}
\label{lem:fglss-gap}
For a graph $G=(W,E)$ with $r=|W|\ge2$, the following hold.
\begin{enumerate}[label=(\roman*)]
\item If $G$ is 3-colorable, then $\chi_f(\mathcal{F}(G))=3$.
\item If $G$ is not 3-colorable, then $\chi_f(\mathcal{F}(G))\ge \frac{3r}{r-1}
        =3+\frac{3}{r-1}$.
\end{enumerate}
\end{restatable}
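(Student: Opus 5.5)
We prove both parts by exhibiting explicit fractional colorings and dual weightings in the LP formulation \eqref{eq:dual-color}, together with its primal covering form. The first step is to understand the independent sets of $\mathcal{F}(G)$. The edges of kind (i) make each fiber $\{w\}\times\{0,1,2\}$ a triangle. Hence an independent set $S$ contains at most one vertex per $w\in W$, and therefore $|S|\le r$. An independent set can thus be read as a \emph{partial} map $f_S:W\rightharpoonup\{0,1,2\}$. By the edges of kind (ii), this map is a proper partial 3-coloring of $G$: no edge of $G$ has both endpoints colored with the same color. Conversely, every proper partial 3-coloring gives an independent set. In particular, $|S|=r$ exactly when $f_S$ is a proper 3-coloring of all of $G$.

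For (i), the lower bound $\chi_f(\mathcal{F}(G))\ge 3$ is immediate: each fiber is a triangle, so $\chi_f\ge\omega\ge 3$. For the upper bound, fix a proper 3-coloring $\phi$ of $G$ and take the three cyclic shifts $S_k\defeq\{(w,\phi(w)+k \bmod 3):w\in W\}$ for $k=0,1,2$. Each $S_k$ is independent, because $\phi+k$ is still proper. Every vertex $(w,c)$ lies in exactly one $S_k$, so these three sets form a genuine 3-coloring, and $\chi_f\le\chi\le 3$.

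For (ii), use the dual LP with uniform weights $w_{(w,c)}=\frac{1}{r-1}$ on all $3r$ vertices. The objective value is $\frac{3r}{r-1}$. For feasibility, if $G$ is not 3-colorable then no independent set can have size $r$, by the correspondence above. Hence every independent set has size at most $r-1$, and its total weight is at most $1$. The assumption $r\ge 2$ only guarantees that $r-1>0$.

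No step here looks difficult. The only point needing care is the correspondence between independent sets of $\mathcal{F}(G)$ and proper partial 3-colorings of $G$, and in particular the observation that an independent set of size $r$ forces a total proper coloring. This is exactly what turns non-3-colorability into the additive gap $\frac{3}{r-1}$.
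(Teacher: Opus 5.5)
Your proof is correct and follows the paper's argument. Part (i) uses cyclic shifts of a proper 3-coloring together with the fiber triangles, and part (ii) uses the fact that a size-$r$ independent set would yield a proper 3-coloring, so $\alpha(\mathcal{F}(G))\le r-1$. The one difference is cosmetic: you certify $\chi_f\ge 3r/(r-1)$ with uniform dual weights $1/(r-1)$, whereas the paper sums the primal covering constraints to get $\chi_f\ge |V|/\alpha$, which is the same bound seen from the other side of the LP.
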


For the graph $G_0$ produced above, we let $r=|W|$ and $\tau_r\defeq 3+\frac{3}{2(r-1)}$. $\tau_r$ is strictly between the two cases in \Cref{lem:fglss-gap}: if $G$ is 3-colorable then $\chi_f(\mathcal{F}(G))=3<\tau_r$, while if $G$ is not 3-colorable then $\chi_f(\mathcal{F}(G))\ge 3+3/(r-1)>\tau_r$.

The final step is to construct a suitable imperfect-recall decision problem. Compared to the reduction of~\Cref{subsec:vor-np-hardness}, the main addition is what we call \emph{selector information sets} $S_i$ whose actions implement the choice of the optional-edge vector $\boldsymbol{a}$; after those selector choices route the relevant branches, the remaining part of the reduction follows the graph embedding of~\Cref{def:graph-embedding-construction}.

\begin{restatable}{lemma}{vorreal}
\label{lem:vor-graph-realization}
For the constructed decision problem, $\max_{\hatpistrat\in\Pi(\GammaPR)}\chi_f(G_\hatpistrat) = \max_{\boldsymbol{a}\in\{0,1\}^n}\chi_f(\mathcal{F}_{\boldsymbol{a}})$.
\end{restatable}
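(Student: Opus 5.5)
The plan is to prove the two inequalities separately. The key device is that fractional chromatic number is monotone under graph homomorphisms: if $G\to H$, then $\chi_f(G)\le\chi_f(H)$. In particular it is monotone under taking subgraphs.

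I would fix the construction so that each selector information set $S_i$ is the first player node on every branch that visits it. Then every node of $S_i$ has the same (empty) sequence $\operatorname{seq}(\cdot)$, so $S_i$ stays a single information set in $\GammaPR$. Consequently any $\hatpistrat\in\Pi(\GammaPR)$ determines a well-defined vector $\boldsymbol{a}(\hatpistrat)\in\{0,1\}^n$ from its selector choices. After the selectors, the tree follows \Cref{def:graph-embedding-construction} applied to $\mathcal{F}(G_0+\text{all optional edges})$, with one modification. The information set testing the optional edge coming from $e_i^{(b)}$ is reached only on branches whose selector action at $S_i$ equals $b$; otherwise the branch skips that test. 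Recall is lost about the selector action, so the edge tests live in a shared original information set. In $\GammaPR$, however, the path through the selector is remembered, and a refined strategy can "continue" on every reached branch.

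For the lower bound ($\ge$), fix $\boldsymbol{a}$. I would take $\hatpistrat$ to play $a_i$ at every $S_i$ and the continuing action at every refined test information set. Exactly as in the proof of \Cref{thm:graph-embedding}, the success leaves $\ell_{(w,c)}$ are reached. Two of them conflict, via \Cref{def:conflict-graph}, precisely when they share an original test information set with opposite continuing actions, i.e., precisely when $\{(w,c),(w',c')\}$ is an edge of $\mathcal{F}_{\boldsymbol{a}}$. Hence $\mathcal{F}_{\boldsymbol{a}}$ is an induced subgraph of $G_\hatpistrat$, which gives $\chi_f(G_\hatpistrat)\ge\chi_f(\mathcal{F}_{\boldsymbol{a}})$.

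For the upper bound ($\le$), fix an arbitrary $\hatpistrat$ and let $\boldsymbol{a}=\boldsymbol{a}(\hatpistrat)$. The reached leaves are success leaves of the branches selected by $\boldsymbol{a}$, together with exit leaves wherever $\hatpistrat$ deviates from continuing. I would build a homomorphism $\phi\colon G_\hatpistrat\to\mathcal{F}_{\boldsymbol{a}}$ in two steps:
\begin{itemize}
\item each success leaf maps to its vertex;
\item an exit leaf $t$ on the branch of vertex $u$, taken at test set $I_e$, maps to $u$ itself.
\end{itemize}
The reason this should work is that the exit action at $I_e$ is the continuing action of the other endpoint's branch. Then:
\begin{itemize}
\item conflicts of $t$ arise only at test sets earlier on $u$'s branch, where $t$ agrees with $u$'s continuing actions, or at $I_e$ itself;
\item branches on the same side at $I_e$ are compatible with $t$;
\item the other endpoint $u'$ of $e$ exits at or before $I_e$ and hence is not a reached success leaf;
\item $t$ and $\ell_u$ are never both reached.
\end{itemize}
So every neighbour of $t$ is a neighbour of $u$ in $\mathcal{F}_{\boldsymbol{a}}$, and $\phi$ preserves edges. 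Selector-skipped optional tests never appear on reached paths, so they create no conflicts beyond the edges of $\mathcal{F}_{\boldsymbol{a}}$. Monotonicity then gives $\chi_f(G_\hatpistrat)\le\chi_f(\mathcal{F}_{\boldsymbol{a}})$. Taking the maximum over $\hatpistrat$ and combining with the lower bound proves the lemma.

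The main obstacle is this upper bound. The difficulty is making sure that exit leaves, and the mixing of different selector routings in one original information set, cannot create conflicts absent from $\mathcal{F}_{\boldsymbol{a}}$. I would handle it by the local case check above on where each exit occurs. If an exit leaf turns out to conflict with some non-neighbour of $u$, the first fallback is to add a dummy singleton marker on every exit, so that exit leaves become pendant-like. A further fallback, if even that fails, is to argue directly via \eqref{eq:dual-color} that putting dual weight on exit leaves never helps, and then settle for the bound $\max\{3,\chi_f(\mathcal{F}_{\boldsymbol{a}})\}$. That weaker bound still suffices for the threshold $\tau_r$.
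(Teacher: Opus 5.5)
Your proposal is correct and follows the paper's proof. The lower bound uses the same strategy: play $a_i$ at each $S_i$ and continue everywhere, so the conflict graph is exactly $\mathcal{F}_{\boldsymbol{a}}$. For the upper bound, your $\phi$ is the paper's map sending each reached leaf to its root-branch label (a bijection onto $Q$, since each branch reaches one leaf), and $G_{\hatpistrat}$ is then a spanning subgraph of $\mathcal{F}_{\boldsymbol{a}}$ by monotonicity of $\chi_f$.

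Your exit-leaf case check can be replaced by one uniform observation: any conflict must occur at a test set $I_e$ shared only by the two endpoint branches of $e$, and both visiting it forces $e\in E(\mathcal{F}_{\boldsymbol{a}})$. Your third bullet, that $u'$ exits at or before $I_e$, is false as stated, since the strategy on $u'$'s branch is independent of $t$. You do not need it, because only edges, not non-edges, have to be preserved.
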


We are thus able to characterize the complexity of the value of recall.

\begin{restatable}[$\class{\Sigma^p_2}$-completeness of value of recall]{theorem}{rsttheoremThmVorSigmatwoComplete}
\label{thm:vor-sigma2-complete}
\textsc{VoR}-\textsc{Threshold} is $\class{\Sigma^p_2}$-complete. This holds even for decision problems with a single initial chance node and no absentmindedness.
\end{restatable}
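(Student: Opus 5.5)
Membership is already given by \Cref{prop:vor-in-sigma2}, so the task is $\class{\Sigma^p_2}$-hardness. We assemble the pipeline of \Cref{fig:sigma2-reduction-pipeline}. Start from an instance $\exists X\,\forall Y\,\Phi(X,Y)$ of \textsc{ExistsForallDNFSAT}, which is $\class{\Sigma^p_2}$-complete~\citep{Stockmeyer76:Polynomial}. If $\Phi$ has an empty term, the instance is trivially true; in that case we output a fixed yes-instance, such as the $K_4$ embedding $\Gamma_{K_4}$ from \Cref{thm:graph-embedding} with threshold $4$. Otherwise, apply \Cref{lem:optional-edge-3col} to get $G_0=(W,E_0)$ with optional edges $e_i^{(b)}$. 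If $r=|W|<2$, pad $W$ with isolated vertices; this keeps $r\ge 2$ and does not affect $3$-colorability. Set $\tau_r=3+\frac{3}{2(r-1)}$, and let $\mathcal{F}_{\boldsymbol a}\defeq\mathcal{F}(G_{\boldsymbol a})$.

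Next, build the decision problem $\Gamma$ with selector information sets, so that \Cref{lem:vor-graph-realization} applies, and output the instance $(\Gamma,\tau_r)$. Everything is polynomial: $|W\times\{0,1,2\}|=3r$, the edge count of $\mathcal F_{\boldsymbol a}$ is polynomial, and the graph-embedding construction has linear size by \Cref{thm:graph-embedding}. The selector gadgets add only $O(n)$ information sets. $\tau_r$ is rational with polynomial bit length. The construction has a single initial chance node (uniform over vertex branches, as in \Cref{def:graph-embedding-construction}) and no absentmindedness, which is the strengthened claim.

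For correctness, \Cref{thm:fractional-formula} and \Cref{lem:vor-graph-realization} give
\[
\operatorname{VoR}(\Gamma)=\max_{\boldsymbol a\in\{0,1\}^n}\chi_f(\mathcal F_{\boldsymbol a}).
\]
Suppose some $\boldsymbol a$ satisfies $\forall Y\,\Phi(\boldsymbol a,Y)$. Then $G_{\boldsymbol a}$ is not $3$-colorable by \Cref{lem:optional-edge-3col}. By \Cref{lem:fglss-gap}(ii), $\chi_f(\mathcal F_{\boldsymbol a})\ge 3+\frac{3}{r-1}>\tau_r$, so $\operatorname{VoR}(\Gamma)\ge\tau_r$. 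Conversely, suppose that for every $\boldsymbol a$ some $Y$ falsifies $\Phi$. Then every $G_{\boldsymbol a}$ is $3$-colorable, so $\chi_f(\mathcal F_{\boldsymbol a})=3$ for all $\boldsymbol a$ by \Cref{lem:fglss-gap}(i). The maximum is therefore $3<\tau_r$. All vertex sets $W$ share the same $r$, because optional edges add no vertices, so the single threshold separates the two cases uniformly.

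The main obstacle is \Cref{lem:vor-graph-realization}, which the theorem relies on. In particular, one must check that the selector information sets cannot be exploited. A refined strategy in $\GammaPR$ might choose different selector actions on different branches, producing a conflict graph that does not equal any $\mathcal F_{\boldsymbol a}$. If this lemma is taken as given, which the statement allows, the theorem reduces to the bookkeeping above. The only remaining care is verifying that the padding and the empty-term edge case preserve the hypotheses of \Cref{lem:optional-edge-3col} and \Cref{lem:fglss-gap}.
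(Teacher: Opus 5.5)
Your proposal is correct and follows the paper's proof essentially step for step: membership via \Cref{prop:vor-in-sigma2}, then \Cref{lem:optional-edge-3col}, then the realization \Cref{lem:vor-graph-realization} combined with \Cref{thm:fractional-formula}, with the single threshold $\tau_r$ separating the two cases of \Cref{lem:fglss-gap} uniformly over all $\boldsymbol a$. The differences are only cosmetic. You use $(\Gamma_{K_4},4)$ instead of a trivial chance-only instance with threshold $1$ for the empty-term case. Your padding step for $r<2$ is vacuous, since $G_0$ always contains the triangle $\mathsf T,\mathsf F,\mathsf B$. Your worry about selectors is settled by \Cref{lem:realization-decision-problem-structure}: each $S_i$ is the first player information set on its branches, so it remains a single information set in $\GammaPR$.
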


\section{Conclusion and future research}

We introduced a new general measure of the value of information in extensive-form decision problems, capturing the largest possible utility gain from any given \emph{information refinement}. We characterized the value of information through the maximum fractional chromatic number over an associated family of conflict graphs, giving rise to a unified combinatorial view of different information refinements. Using this characterization, we showed that computing the value of perfect information is \NP-complete, even in perfect-recall decision problems, while computing the value of recall is \class{\Sigma^p_2}-complete (without absentmindedness). Thus, different forms of information refinement have fundamentally different computational properties.

Our results suggest several directions for future work. First, it is worth examining the complexity of other natural information refinements. For example, what can be said about the value of perfect information of an \emph{imperfect-recall} decision problem? It would also be interesting to understand the complexity landscape in decision problems with absentmindedness. Perhaps the most important question is to explore the value of information (or, indeed, the \emph{cost} of information~\citep{Berker25:Value}) beyond single-player decision problems.

\section*{Acknowledgments}
R.E.B., E.T., and V.C.\ thank the Cooperative AI Foundation, Macroscopic Ventures  (formerly Polaris Ventures / the Center for Emerging Risk Research) and Jaan Tallinn’s donor-advised fund at Founders Pledge for financial support. R.E.B.\ and E.T.\ are also supported by the Cooperative AI PhD Fellowship. I.A. and T.S. are supported by NIH award A240108S001, the Vannevar Bush Faculty Fellowship ONR N00014-23-1-2876, and National Science Foundation grant RI-2312342. B.H.Z.\ was supported by ONR MURI grant N00014-25-1-2116 and ONR grant N00014-25-1-2296.

We are grateful to the anonymous reviewers at WINE for helpful suggestions that improved the presentation of this paper, as well as to Jason Li for helpful discussions on the connection to fractional coloring.

\section*{Statement on AI use}

We used GPT 5.5 Pro to assist with the development of some proofs. In particular, for~\Cref{thm:voi-coverage-exact}, we had explored similar reductions, but GPT found the construction that led to the proof. The proof of $\class{\Sigma^p_2}$-hardness in~\Cref{thm:vor-sigma2-complete} was found through a repeated interaction with GPT, starting from an approach we had been developing. We also used GPT for edits throughout the paper. The authors take full responsibility for the content of this paper.

 \bibliographystyle{plainnat}
\bibliography{refs}

\clearpage
\appendix
\crefalias{section}{appendix}
\crefalias{subsection}{appendix}
\crefalias{subsubsection}{appendix}

\section{Algorithms for the value of perfect information}
\label{sec:voi-algorithms}

The main result of this section is a dynamic programming algorithm for computing the value of perfect information in decision problems with bounded information-set size. First, we describe a simpler algorithm based on enumerating all perfect information strategies.

\subsection{Linear program in perfect-recall decision problems}
\label{subsec:fixed-strategy-lp}

To begin with, we point out that when the original decision problem $\Gamma$ has perfect recall, there is a compact linear program under a fixed refined strategy; this can be used for value-of-perfect-information algorithms.

The key idea is to use the \emph{sequence-form representation}. In particular, we denote by $\Sigma(\Gamma)$ the set of \emph{player sequences} in $\Gamma$, including the empty sequence $\varnothing$.  If an information set $I$ has player sequence $\sigma_I$ and action $a\in A_I$, we write $(\sigma_I,a)$ for the sequence obtained by appending $a$. We write $Q_\Gamma$ for the \emph{sequence-form (realization) polytope}~\cite{KollerEtAl96:Efficient,Romanovskii62:Reduction}:
\[
        Q_\Gamma\defeq
        \left\{\boldsymbol{q}\in\mathbb{R}_{\ge 0}^{\Sigma(\Gamma)}:
        q_\varnothing=1,
        \sum_{a\in A_I}q_{(\sigma_I,a)}
        =q_{\sigma_I}
        \text{ for every information set }I
        \right\}.
\]
For a terminal leaf $\ell$, $\sigma_\ell$ denotes the entire sequence on the path to $\ell$. For a fixed pure strategy $\hatpistrat\in\Pi(\widehat\Gamma)$, we write $L(\hatpistrat)\defeq\{\ell\in L:\widehat{x}^\hatpistrat_\ell>0\}$. Since all chance probabilities are assumed to be positive, $\widehat{x}^\hatpistrat_\ell=p_c(\ell)$ on $L(\hatpistrat)$.  We define $\lambda(\hatpistrat)$ as the optimum value of the linear program
\begin{align*}
        \text{maximize}\quad & t\\
        \text{subject to}\quad
        & \boldsymbol{q}\in Q_\Gamma,\\
        & q_{\sigma_\ell}\ge t
          &&\text{for every }\ell\in L(\hatpistrat).
\end{align*}
In particular, for a fixed $\hatpistrat\in\Pi(\widehat\Gamma)$, this can be solved in polynomial time. The key claim is provided below. (It is clear that $\lambda(\hatpistrat) > 0$.)

\begin{restatable}[Fixed-strategy sequence-form LP]{proposition}{rstpropositionPropFixedLp}
\label{prop:fixed-lp}
If $\Gamma$ has perfect recall, then for every pure refined strategy $\hatpistrat$, $\rho(\widehat{\boldsymbol{x}}^\hatpistrat) = 1/ \lambda(\hatpistrat)$, where $\rho$ is defined in~\Cref{prop:support}.
\end{restatable}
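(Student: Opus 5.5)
The plan is to reduce both sides to one convex optimization over the original strategy polytope $\mathcal{X}_\Gamma$, and then use perfect recall to rewrite that polytope in sequence form. I take $\rho(\boldsymbol{x})$ from \Cref{prop:support} to be the fixed-vector value ratio $\sup_{\boldsymbol{v}\ge 0,\boldsymbol{v}\neq 0}\boldsymbol{v}^\top\boldsymbol{x}/V_\Gamma(\boldsymbol{v})$, which for $\boldsymbol{x}=\widehat{\boldsymbol{x}}^\hatpistrat$ is the quantity whose maximum over $\hatpistrat$ gives \Cref{thm:fractional-formula}. If \Cref{prop:support} instead defines $\rho$ directly as a covering gauge, the first step below is already given and can be skipped.

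\textbf{Step 1 (gauge form).} Show that
\[
\rho(\boldsymbol{x})=\min\{\rho>0:\ \boldsymbol{x}\le \rho\,\boldsymbol{y}\text{ for some }\boldsymbol{y}\in\mathcal{X}_\Gamma\}.
\]
The direction ``$\le$'' is immediate: if $\boldsymbol{x}\le\rho\boldsymbol{y}$, then for every $\boldsymbol{v}\ge0$ we get $\boldsymbol{v}^\top\boldsymbol{x}\le\rho\,\boldsymbol{v}^\top\boldsymbol{y}\le\rho V_\Gamma(\boldsymbol{v})$. For ``$\ge$'', I would write the covering problem as a finite LP over the convex weights on the vertices $\boldsymbol{x}^\pistrat$, $\pistrat\in\Pi$, and take its dual. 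The dual variables on the constraints $x_\ell\le\rho\,y_\ell$ are a nonnegative vector $\boldsymbol{v}$, and the dual LP is $\max\{\boldsymbol{v}^\top\boldsymbol{x}:\ \boldsymbol{v}\ge0,\ \boldsymbol{v}^\top\boldsymbol{x}^\pistrat\le1\ \forall\pistrat\}$, which is exactly $\sup_{\boldsymbol{v}}\boldsymbol{v}^\top\boldsymbol{x}/V_\Gamma(\boldsymbol{v})$ after normalizing $V_\Gamma(\boldsymbol{v})=1$. Strong duality applies because both programs are feasible and bounded. Feasibility of the covering side uses that every leaf is reached by some pure strategy, so a uniform mixture covers all leaves with positive mass.

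\textbf{Step 2 (sequence form).} Since $\Gamma$ has perfect recall, Kuhn's theorem and the realization-plan equivalence of Koller--Megiddo--von Stengel give
\[
\mathcal{X}_\Gamma=\{\boldsymbol{y}:\ y_\ell=p_c(\ell)\,q_{\sigma_\ell}\ \text{for all }\ell\in L,\ \boldsymbol{q}\in Q_\Gamma\}.
\]
I would cite the standard fact that the vertices of $Q_\Gamma$ are the pure realization plans and that the map $\boldsymbol{q}\mapsto(p_c(\ell)q_{\sigma_\ell})_\ell$ is linear.

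\textbf{Step 3 (substitution).} Plug this into Step 1 with $\boldsymbol{x}=\widehat{\boldsymbol{x}}^\hatpistrat$, recalling that $\widehat{x}^\hatpistrat_\ell=p_c(\ell)$ on $L(\hatpistrat)$ and $0$ elsewhere. Leaves outside $L(\hatpistrat)$ give vacuous constraints. For $\ell\in L(\hatpistrat)$, the constraint becomes $p_c(\ell)\le\rho\,p_c(\ell)q_{\sigma_\ell}$, and since $p_c(\ell)>0$ this is $q_{\sigma_\ell}\ge1/\rho$. Setting $t=1/\rho$, minimizing $\rho$ is the same as maximizing $t$ subject to $\boldsymbol{q}\in Q_\Gamma$ and $q_{\sigma_\ell}\ge t$, which is $\lambda(\hatpistrat)$. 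Hence $\rho=1/\lambda(\hatpistrat)$, where $\lambda(\hatpistrat)>0$ is guaranteed by the uniform-mixture argument.

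The main obstacle is Step 1, specifically making the duality rigorous and checking that the nonnegativity of payoffs exactly matches the one-sided inequality $\boldsymbol{x}\le\rho\boldsymbol{y}$. Step 2 is standard but is the only place where perfect recall is used, and it should be flagged explicitly.
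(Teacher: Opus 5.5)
Your proposal is correct and follows the paper's proof: reduce $\rho$ to the downward-containment gauge, use perfect recall to write $\mathcal{X}_\Gamma$ as the image of $Q_\Gamma$ under $\boldsymbol{q}\mapsto(p_c(\ell)q_{\sigma_\ell})_\ell$, and substitute $t=1/\rho$. The only difference is that your Step~1 re-derives the gauge form via LP duality over pure-strategy weights, whereas the paper simply cites \Cref{prop:downward-containment}, which it proves by strict separation. Your duality argument is valid once you linearize the covering program with unnormalized weights $\nu_\pi=\rho\mu_\pi$ and objective $\sum_\pi\nu_\pi$.
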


\begin{proof}
By \Cref{prop:downward-containment}, $\rho(\widehat{\boldsymbol{x}}^\hatpistrat)$ is the least $\alpha$ for which some $\boldsymbol{x}\in\mathcal{X}_\Gamma$ satisfies $\widehat{\boldsymbol{x}}^\hatpistrat\preceq\alpha\boldsymbol{x}$.  The polytope $\mathcal{X}_\Gamma$ is the set of leaf distributions induced by mixed strategies over pure strategies of $\Gamma$.  Since $\Gamma$ has perfect recall, mixed strategies are realization-equivalent to sequence-form realization plans: equivalently, $\boldsymbol{x}\in\mathcal{X}_\Gamma$ if and only if there is a vector $\boldsymbol{q}\in Q_\Gamma$ such that $x_\ell=p_c(\ell)q_{\sigma_\ell}$ for each $\ell \in L$. Therefore,
$\rho(\widehat{\boldsymbol{x}}^\hatpistrat)$ is the least $\alpha$ for which some such original sequence-form strategy $\boldsymbol{q}$ satisfies
\[
        \widehat{x}^\hatpistrat_\ell\le \alpha\,p_c(\ell) q_{\sigma_\ell}
\]
for each $\ell \in L$.
For $\ell\in L(\hatpistrat)$, this is
\[
        p_c(\ell)\le \alpha p_c(\ell) q_{\sigma_\ell},
\]
which, because $p_c(\ell)>0$, is equivalent to $q_{\sigma_\ell}\ge 1/\alpha$.  For leaves outside $L(\hatpistrat)$ the inequality is vacuous.  Thus, minimizing $\alpha$ is exactly the same as maximizing $t=1/\alpha$ subject to the linear program defining $\lambda(\hatpistrat)$.
\end{proof}

\paragraph{Enumeration algorithm.} By \Cref{prop:fixed-lp}, an algorithm can simply enumerate all pure perfect-information strategies $\hatpistrat\in\Pi(\GammaPI)$,and solve in each case a linear program with polynomial size. The overall running time is $\poly(|\Gamma|) \prod_{h\in H}|A_h|$, where we recall that $A_h$ is the action set available at player node $h$; in $\GammaPI$, each player node is a singleton information set, so this product is the number of pure strategies in the perfect-information refinement.

\subsection{Bounded information-set size}

We now describe an algorithm that is fixed-parameter tractable in terms of the size of the largest information set and the largest action set. For convenience, we add a new singleton information set $I_0$ above
the root, with one action leading to the original root; this does
not change $\operatorname{VoPI}(\Gamma)$.

For an information set $I$ and a subset $S\subseteq I$, we let $F(I,S)$ be the largest clique size obtainable among the leaves descending from $S$, optimizing over all perfect-information choices
at and below those nodes. Perfect recall implies that all nodes of $I$ have the same previous player sequence, so no additional history has to be incorporated in the state.

For $T\subseteq I$ and $a\in A_I$, suppose that action $a$ is taken
at every node of $T$. We then follow chance until either a terminal
leaf or the next player node is reached.  Let $M(a,T)$ be the largest of the following:
\begin{itemize}
\item $0$;
\item $1$, if a terminal leaf is reached before another player node;
\item $F(J,U)$, for each information set $J$ for which $U\subseteq J$
      is the nonempty set of first player nodes reached in $J$.
\end{itemize}
In particular, $M(a,\varnothing)=0$. A perfect-information choice at the nodes of $S$ is a map
$\phi:S\to A_I$. The recurrence we consider is
\begin{equation}
\label{eq:bounded-width-dp}
F(I,S)
=
\max_{\phi:S\to A_I}
\sum_{a\in A_I}
M\bigl(a,\phi^{-1}(a)\bigr).
\end{equation}

Our goal is to establish the following theorem.

\voidp*

\begin{proof}
We consider an information set $I$, $S\subseteq I$, and a map $\phi:S\to A_I$.  For each action
$a$, the set $\phi^{-1}(a)$ consists of the nodes at which $\phi$ chooses $a$.

Leaves descending from two different action classes are pairwise adjacent in the conflict graph: their paths visit the same information set $I$ but require different actions there. This means that the subgraphs generated by the different action classes are joined, and their clique sizes add.

Let us now consider one action $a$. No conflict is created at $I$ among leaves descending from $\phi^{-1}(a)$, since all of them take the same action. After that action, chance-only play separates these leaves into the following pieces: the terminal leaves reached before another
player decision, and, for each possible next information set $J$, the leaves descending from the first reached nodes $U\subseteq J$.

There are no conflict edges between distinct pieces. Terminal leaves have no later player decisions. Moreover, if leaves entering two
different next information sets $J$ and $J'$ later visited a common information set, the corresponding histories there would have different previous player sequences: one contains $J$ immediately after $I$, while the other contains $J'$. This would contradict perfect recall. Thus, these pieces form a disjoint union.

The terminal piece has clique number $1$ when it is nonempty, and the piece entering $J$ through $U$ has optimal clique number $F(J,U)$. A clique can therefore use only one such piece, and its best possible contribution from action $a$ is exactly $M(a,\phi^{-1}(a))$. It follows that, for the fixed map $\phi$, the largest obtainable clique has size $\sum_{a\in A_I}M\bigl(a,\phi^{-1}(a)\bigr)$.

Conversely, we now consider what happens when we choose for each action a component and a clique attaining the corresponding $M$-value. The perfect-information choices realizing these cliques can be combined because they occur at distinct player nodes, and their union is a clique because different action classes are joined. Maximizing over $\phi$ proves \eqref{eq:bounded-width-dp}.

Every information set appearing on the right-hand side lies strictly later in the player's history than $I$.  The recurrence is therefore acyclic and can be evaluated in reverse topological order.  At the dummy information set $I_0$, the state $F(I_0,I_0)$ covers the whole decision problem, so the clique characterization of the value of perfect information (\Cref{thm:voi-integer}) gives $F(I_0,I_0)=\operatorname{VoPI}(\Gamma)$.

It remains to bound the running time.  For a fixed information set $I$, the total number of maps considered over all subsets $S\subseteq I$ is
\[
        \sum_{S\subseteq I}|A_I|^{|S|}
        =
        (1+|A_I|)^{|I|}.
\]
The successor pieces used to evaluate $M(a,T)$ can be precomputed, for every action $a$ and subset $T\subseteq I$, by traversing chance edges until the next player node or terminal leaf. Summing over the information sets proves the running time bound. Finally, if $w=\max_I|I|$ and $A=\max_I|A_I|$, then the running time is $(1+A)^w\operatorname{poly}(|\Gamma|)$. Since $A\le |\Gamma|$, this is bounded by $|\Gamma|^{O(w)}$, and is hence polynomial when $w$ is fixed.
\end{proof}

\section{Proofs}
\label{app:proofs}

This section contains the proofs of the formal statements from the main body.

\subsection{Geometric formulation}

We first establish some basic lemmas concerning the geometric formulation of the value ratio. As a stepping stone, we provide a characterization of the value ratio as a downward containment in terms of the underlying leaf-distribution polytopes (\Cref{prop:downward-containment}).

For a pure strategy $\pistrat\in\Pi$, we recall that $\boldsymbol{x}^\pistrat\in\mathbb{R}_{\ge 0}^L$ denotes the induced distribution over the leaves. For a pure strategy of the refinement $\hatpistrat\in\Pi(\widehat\Gamma)$, $\widehat{\boldsymbol{x}}^\hatpistrat\in\mathbb{R}_{\ge 0}^L$ is defined analogously. Thus, $\boldsymbol{x}^\pistrat$ and $\widehat{\boldsymbol{x}}^\hatpistrat$ are both distributions over leaves. We write
\[
        \mathcal{X}_{\Gamma}\defeq\operatorname{conv}\{\boldsymbol{x}^\pistrat:\pistrat\in\Pi\}
        \text{ and }
        \mathcal{X}_{\widehat\Gamma}\defeq\operatorname{conv}\{\widehat{\boldsymbol{x}}^\hatpistrat:\hatpistrat\in\Pi(\widehat\Gamma)\}.
\]
For a compact convex set $\mathcal{K}\subset\mathbb{R}^L$, we write $h_{\mathcal{K}}(\boldsymbol{v})\defeq\max_{\boldsymbol{z}\in \mathcal{K}}\langle \boldsymbol{v},\boldsymbol{z}\rangle$ for its \emph{support function}. The lemma below follows directly from the definitions.

\begin{restatable}{lemma}{rstlemmaPropSupport}
\label{prop:support}
The value ratio satisfies
\[
        \VR(\widehat\Gamma/\Gamma)
        =
        \max_{\hatpistrat\in\Pi(\widehat\Gamma)}\rho(\widehat{\boldsymbol{x}}^\hatpistrat),
        \text{ where }
        \rho(\widehat{\boldsymbol{x}})\defeq\sup_{\boldsymbol{v}\ge 0,\ \boldsymbol{v}\ne 0}
        \frac{\langle \boldsymbol{v},\widehat{\boldsymbol{x}}\rangle}{h_{\mathcal{X}_{\Gamma}}(\boldsymbol{v})}.
\]
\end{restatable}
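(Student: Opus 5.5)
The statement to prove is \Cref{prop:support}:
\[
\VR(\widehat\Gamma/\Gamma)=\max_{\hatpistrat\in\Pi(\widehat\Gamma)}\rho(\widehat{\boldsymbol{x}}^\hatpistrat).
\]
The plan is to unfold both value functions as support functions of the leaf-distribution polytopes, and then exchange the supremum over $\boldsymbol{v}$ with the finite maximum over refined pure strategies.

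First, for any decision problem $\Gamma'$ and any $\boldsymbol{v}\ge 0$, the function $\boldsymbol{z}\mapsto\langle\boldsymbol{v},\boldsymbol{z}\rangle$ is linear. Its maximum over $\mathcal{X}_{\Gamma'}=\operatorname{conv}\{\boldsymbol{x}^\pistrat\}$ is therefore attained at a vertex, which gives $V_{\Gamma'}(\boldsymbol{v})=h_{\mathcal{X}_{\Gamma'}}(\boldsymbol{v})$. Applying this to both $\Gamma$ and $\widehat\Gamma$ yields
\[
\VR(\widehat\Gamma/\Gamma)=\sup_{\boldsymbol{v}\ge0,\boldsymbol{v}\ne0}\frac{\max_{\hatpistrat}\langle\boldsymbol{v},\widehat{\boldsymbol{x}}^\hatpistrat\rangle}{h_{\mathcal{X}_\Gamma}(\boldsymbol{v})}.
\]
The denominator $h_{\mathcal{X}_\Gamma}(\boldsymbol{v})$ is strictly positive, as noted after \Cref{def:common-ratio}, and it does not depend on $\hatpistrat$. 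Hence the ratio of the maximum equals the maximum of the ratios:
\[
\frac{\max_{\hatpistrat}\langle\boldsymbol{v},\widehat{\boldsymbol{x}}^\hatpistrat\rangle}{h_{\mathcal{X}_\Gamma}(\boldsymbol{v})}
=\max_{\hatpistrat}\frac{\langle\boldsymbol{v},\widehat{\boldsymbol{x}}^\hatpistrat\rangle}{h_{\mathcal{X}_\Gamma}(\boldsymbol{v})}.
\]

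Second, I would exchange $\sup_{\boldsymbol{v}}$ and $\max_{\hatpistrat}$. A supremum of a pointwise maximum over a finite index set always equals the maximum over the index set of the individual suprema, by the elementary sup–sup interchange. This holds even if some $\rho$ values were infinite, though in fact they are finite: each ratio is bounded by $\max_\ell \widehat{x}_\ell/x^{\pistrat}_\ell$ over compatible strategies. Since $\Pi(\widehat\Gamma)$ is finite, the outer operation is a genuine maximum, and we conclude $\VR=\max_{\hatpistrat}\rho(\widehat{\boldsymbol{x}}^\hatpistrat)$.

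There is no real obstacle here. The only points needing care are checking that the denominator is positive, so that the ratios are well defined, and that the maximum over pure strategies coincides with the support function over the convex hull. The latter uses the fact, stated in the paper, that $V$ is defined via pure strategies.
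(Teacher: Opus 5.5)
Your proof is correct and follows the paper's argument: identify $V_\Gamma(\boldsymbol{v})$ with the support function $h_{\mathcal{X}_\Gamma}(\boldsymbol{v})$, pull the maximum over refined strategies out of the ratio, and exchange the supremum over $\boldsymbol{v}$ with the finite maximum over $\Pi(\widehat\Gamma)$. The one slip is your parenthetical finiteness bound, which as written would give $1$, since $\widehat{x}_\ell = x^{\pistrat}_\ell = p_c(\ell)$ for any compatible $\pistrat$; the correct bound is $|L|$, but as you note, finiteness is not needed for the interchange.
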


\begin{proof}
By the definition of $V_\Gamma$ and the leaf-distribution vectors, for every nonzero $\boldsymbol{v}\ge0$, 
\[
        V_\Gamma(\boldsymbol{v})=\max_{\pistrat\in\Pi}\langle \boldsymbol{v},\boldsymbol{x}^\pistrat\rangle=h_{\mathcal{X}_{\Gamma}}(\boldsymbol{v}).
\]
Using the finiteness of $\Pi(\widehat\Gamma)$ to exchange a finite maximum with a supremum gives
\[
        \VR(\widehat\Gamma/\Gamma)
        =
        \max_{\hatpistrat\in\Pi(\widehat\Gamma)}
        \sup_{\boldsymbol{v}\ge0,\ \boldsymbol{v}\ne0}
        \frac{\langle \boldsymbol{v},\widehat{\boldsymbol{x}}^\hatpistrat\rangle}
             {h_{\mathcal{X}_{\Gamma}}(\boldsymbol{v})},
\]
which is the claimed formula.
\end{proof}

For payoff vectors $\boldsymbol{a},\boldsymbol{b}\in\mathbb{R}^L$, we write $\boldsymbol{a}\preceq \boldsymbol{b}$ if $a_\ell\le b_\ell$ for every $\ell\in L$.  The \emph{downward closure} of the original leaf-distribution polytope is $\mathcal{D}_\Gamma\defeq\mathcal{X}_{\Gamma}-\mathbb{R}_{\ge 0}^L
        =\{\boldsymbol{x}-\boldsymbol{r}:\boldsymbol{x}\in\mathcal{X}_{\Gamma},
                 \boldsymbol{r}\in\mathbb{R}_{\ge 0}^L\}$. 

\begin{restatable}[Downward-containment characterization]{proposition}{rstpropositionPropDownwardContainment}
\label{prop:downward-containment}
For every $\widehat{\boldsymbol{x}}\in\mathcal{X}_{\widehat\Gamma}$,
\[
        \rho(\widehat{\boldsymbol{x}})=
        \min\{\alpha\ge0: \exists \boldsymbol{x}\in\mathcal{X}_{\Gamma}
             \text{ such that }\widehat{\boldsymbol{x}}\preceq \alpha \boldsymbol{x}\}.
\]
As a result,
\[
        \VR(\widehat\Gamma/\Gamma)
        =
        \min\{\alpha\ge0:
        \mathcal{X}_{\widehat\Gamma}\subseteq \alpha\mathcal{D}_\Gamma\}.
\]
\end{restatable}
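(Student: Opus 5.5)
The plan is to prove the first identity for a fixed $\widehat{\boldsymbol{x}}$ by LP duality (equivalently, a separating-hyperplane argument). The second identity then follows by combining the first with \Cref{prop:support}. Write $\alpha^\star$ for the right-hand minimum.

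First, I will show the minimum exists and is finite. The feasible set $\{(\alpha,\boldsymbol{x}):\alpha\ge 0,\ \boldsymbol{x}\in\mathcal{X}_\Gamma,\ \widehat{\boldsymbol{x}}\preceq\alpha\boldsymbol{x}\}$ is closed. It is nonempty because the uniform mixture $\bar{\boldsymbol{x}}$ of all original pure strategies has $\bar x_\ell>0$ for every $\ell$; this uses the fact that every leaf is compatible with some original pure strategy under no absentmindedness, together with $p_c(\ell)>0$. Since $\widehat{\boldsymbol{x}}\ne 0$, $\alpha$ is bounded below by a positive constant, and $\mathcal{X}_\Gamma$ is compact, so the minimum is attained.

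Next comes the easy inequality $\rho\le\alpha^\star$. If $\widehat{\boldsymbol{x}}\preceq\alpha\boldsymbol{x}$ with $\boldsymbol{x}\in\mathcal{X}_\Gamma$, then for every $\boldsymbol{v}\ge 0$ we have $\langle\boldsymbol{v},\widehat{\boldsymbol{x}}\rangle\le\alpha\langle\boldsymbol{v},\boldsymbol{x}\rangle\le\alpha h_{\mathcal{X}_\Gamma}(\boldsymbol{v})$.

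The reverse inequality $\rho\ge\alpha^\star$ is the main step. I will use a min–max argument. Since $\widehat{\boldsymbol{x}}\preceq\alpha\boldsymbol{x}$ is equivalent to $\widehat{\boldsymbol{x}}/\alpha\in\mathcal{D}_\Gamma$, suppose $\alpha<\alpha^\star$. Then $\widehat{\boldsymbol{x}}/\alpha\notin\mathcal{D}_\Gamma$. The set $\mathcal{D}_\Gamma$ is closed and convex, being a polytope plus a polyhedral cone, so strict separation gives $\boldsymbol{v}$ with $\langle\boldsymbol{v},\widehat{\boldsymbol{x}}\rangle/\alpha>\sup_{\boldsymbol{d}\in\mathcal{D}_\Gamma}\langle\boldsymbol{v},\boldsymbol{d}\rangle$. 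Because $\mathcal{D}_\Gamma$ contains $\boldsymbol{x}-t\boldsymbol{e}_\ell$ for all $t\ge 0$, finiteness of the supremum forces $\boldsymbol{v}\ge 0$. Hence $\boldsymbol{v}\neq 0$, and the supremum equals $h_{\mathcal{X}_\Gamma}(\boldsymbol{v})$, which is positive. This yields $\rho(\widehat{\boldsymbol{x}})>\alpha$. Letting $\alpha\uparrow\alpha^\star$ gives $\rho\ge\alpha^\star$.

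For the second claim, \Cref{prop:support} gives $\VR=\max_{\hatpistrat}\rho(\widehat{\boldsymbol{x}}^\hatpistrat)$. Each $\widehat{\boldsymbol{x}}^\hatpistrat$ lies in $\alpha\mathcal{D}_\Gamma$ if and only if $\alpha\ge\rho(\widehat{\boldsymbol{x}}^\hatpistrat)$. Here I use that $\alpha\mathcal{D}_\Gamma$ for $\alpha>0$ is exactly the set of vectors dominated by $\alpha\boldsymbol{x}$ with $\boldsymbol{x}\in\mathcal{X}_\Gamma$, and that the case $\alpha=0$ is excluded because $\widehat{\boldsymbol{x}}^\hatpistrat$ has positive entries. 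Since $\alpha\mathcal{D}_\Gamma$ is convex, containing all vertices $\widehat{\boldsymbol{x}}^\hatpistrat$ is equivalent to containing $\mathcal{X}_{\widehat\Gamma}$. Therefore the least such $\alpha$ is $\max_{\hatpistrat}\rho(\widehat{\boldsymbol{x}}^\hatpistrat)=\VR(\widehat\Gamma/\Gamma)$.

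The only delicate point is the separation step: making sure the separating vector is nonnegative, nonzero, and has $h_{\mathcal{X}_\Gamma}(\boldsymbol{v})>0$. The downward-closure structure of $\mathcal{D}_\Gamma$ handles all three.
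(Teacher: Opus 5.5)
Your proposal is correct and follows essentially the same route as the paper. You get the easy direction by monotonicity. For the reverse direction you strictly separate $\widehat{\boldsymbol{x}}$ from the (scaled) downward closure $\mathcal{D}_\Gamma$, using the downward closure to force the separating vector to be nonnegative. You then obtain the second identity from convexity of $\alpha\mathcal{D}_\Gamma$ together with \Cref{prop:support}.

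The differences are cosmetic. You argue contrapositively ($\alpha<\alpha^\star\Rightarrow\rho>\alpha$), and you check explicitly three things the paper leaves implicit: attainment of the minimum, that the separating vector is nonzero with $h_{\mathcal{X}_\Gamma}(\boldsymbol{v})>0$, and the $\alpha=0$ edge case.
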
 

A point of $\mathcal{X}_{\Gamma}$ is an outcome distribution generated by randomized behavior in the original information structure, while a point of $\mathcal{X}_{\widehat\Gamma}$ is an outcome distribution attainable after refinement. The containment $\mathcal{X}_{\widehat\Gamma}\subseteq \alpha \mathcal{D}_\Gamma$ therefore says that every refined outcome distribution $\widehat{\boldsymbol{x}}$ can be coordinatewise dominated by $\alpha\boldsymbol{x}$ for some original randomized outcome distribution $\boldsymbol{x}$.  Equivalently, \Cref{prop:downward-containment} casts $\VR(\widehat\Gamma/\Gamma)$ as the smallest multiplicative factor by which (randomized) behavior available under the original information structure must be scaled so that it dominates every outcome distribution attainable after refinement.

\begin{proof}[Proof of~\Cref{prop:downward-containment}]
First assume that $\widehat{\boldsymbol{x}}\preceq \alpha \boldsymbol{x}$ for some $\boldsymbol{x}\in\mathcal{X}_{\Gamma}$.  Then for every $\boldsymbol{v}\ge0$,
\[
        \langle \boldsymbol{v},\widehat{\boldsymbol{x}}\rangle
        \le \alpha\langle \boldsymbol{v},\boldsymbol{x}\rangle
        \le \alpha h_{\mathcal{X}_{\Gamma}}(\boldsymbol{v}),
\]
so for every nonzero $\boldsymbol{v}\ge0$,
\[
        \frac{\langle \boldsymbol{v},\widehat{\boldsymbol{x}}\rangle}
             {h_{\mathcal{X}_{\Gamma}}(\boldsymbol{v})}
        \le \alpha.
\]
Taking the supremum over nonzero $\boldsymbol{v}\ge0$ in the definition of
\[
        \rho(\widehat{\boldsymbol{x}})
        =
        \sup_{\boldsymbol{v}\ge0,\ \boldsymbol{v}\ne0}
        \frac{\langle \boldsymbol{v},\widehat{\boldsymbol{x}}\rangle}
             {h_{\mathcal{X}_{\Gamma}}(\boldsymbol{v})}
\]
gives $\rho(\widehat{\boldsymbol{x}})\le\alpha$. Conversely, suppose $\rho(\widehat{\boldsymbol{x}})\le\alpha$.  By the same expression for $\rho$, for every nonzero $\boldsymbol{v}\ge0$,
\[
        \langle \boldsymbol{v},\widehat{\boldsymbol{x}}\rangle
        \le \alpha h_{\mathcal{X}_{\Gamma}}(\boldsymbol{v}),
\]
and the inequality is trivial for $\boldsymbol{v}=0$.  If $\widehat{\boldsymbol{x}}\notin\alpha(\mathcal{X}_{\Gamma}-\mathbb{R}_{\ge 0}^L)$, then the closed convex set $\alpha(\mathcal{X}_{\Gamma}-\mathbb{R}_{\ge 0}^L)$ can be strictly separated from $\widehat{\boldsymbol{x}}$.  Thus, there exists $\boldsymbol{y}\in\mathbb{R}^L$ such that
\[
        \langle \boldsymbol{y},\widehat{\boldsymbol{x}}\rangle>
        \sup_{\boldsymbol{z}\in\alpha(\mathcal{X}_{\Gamma}-\mathbb{R}_{\ge 0}^L)}\langle \boldsymbol{y},\boldsymbol{z}\rangle.
\]
The supremum on the right is finite only if $\boldsymbol{y}\ge0$. For such $\boldsymbol{y}$,
\[
        \sup_{\boldsymbol{z}\in\alpha(\mathcal{X}_{\Gamma}-\mathbb{R}_{\ge 0}^L)}\langle \boldsymbol{y},\boldsymbol{z}\rangle
        =\alpha h_{\mathcal{X}_{\Gamma}}(\boldsymbol{y}),
\]
contradicting the assumed inequality.  Hence $\widehat{\boldsymbol{x}}\in\alpha(\mathcal{X}_{\Gamma}-\mathbb{R}_{\ge 0}^L)$, which is equivalent to the existence of $\boldsymbol{x}\in\mathcal{X}_{\Gamma}$ with $\widehat{\boldsymbol{x}}\preceq \alpha \boldsymbol{x}$.  Taking the least admissible $\alpha$ proves
\[
        \rho(\widehat{\boldsymbol{x}})
        =
        \min\{\alpha\ge0:\widehat{\boldsymbol{x}}\in \alpha\mathcal{D}_\Gamma\}
        =
        \min\{\alpha\ge0:\exists\boldsymbol{x}\in\mathcal{X}_\Gamma
        \text{ such that }\widehat{\boldsymbol{x}}\preceq\alpha\boldsymbol{x}\}.
\]

It remains to pass from one refined outcome distribution to the whole refined polytope.  Since $\mathcal{X}_{\widehat\Gamma}$ is the convex hull of the finitely many pure refined leaf-distribution vectors and $\alpha\mathcal{D}_\Gamma$ is convex, the containment $\mathcal{X}_{\widehat\Gamma}\subseteq\alpha\mathcal{D}_\Gamma$ holds if and only if every $\widehat{\boldsymbol{x}}^\hatpistrat$ lies in $\alpha\mathcal{D}_\Gamma$.  Therefore, the least such $\alpha$ is $\max_{\hatpistrat\in\Pi(\widehat\Gamma)}
        \rho(\widehat{\boldsymbol{x}}^\hatpistrat)$, which equals $\VR(\widehat\Gamma/\Gamma)$ by \Cref{prop:support}.
\end{proof}

\subsection{Conflict graphs and fractional coloring}

\rstlemmaLemIndependentCompatible*
\begin{proof}
If $\pistrat$ is compatible with every leaf in $S$, then no two leaves in $S$ can require different actions at a common original information set, so $S$ is independent.

Conversely, if $S$ is independent, then for every information set $I$ visited by at least one leaf in $S$, all leaves in $S$ visiting $I$ require the same action; otherwise two such leaves would be adjacent.  We define $\pi_I$ to be this common action.  Information sets not visited by leaves in $S$ receive arbitrary legal actions.  Then $\pistrat$ is compatible with every leaf in $S$.
\end{proof}

\rsttheoremThmFractionalFormula*

Before we proceed with the proof, we recall the primal formulation of the fractional chromatic number:
\[
    \chi_f(G)\defeq
        \min\left\{\sum_{S\in\Ind(G)} z_S:
        z_S\ge0,
        \sum_{S\in\Ind(G):\, v\in S}z_S\ge1\text{ for every }v\in V\right\},
\]
where $\Ind(G)$ denotes the set of independent sets of $G$. The dual of this linear program reads
\[
    \max\left\{\sum_{v\in V}w_v:
        w_v\ge0,
        \sum_{v\in S}w_v\le1
        \text{ for every }S \in \Ind(G) \right\}.
\]

\begin{proof}[Proof of~\Cref{thm:fractional-formula}]

For a graph $G=(V,E)$ and nonnegative weights $\boldsymbol{w}\in\mathbb{R}_{\ge 0}^V$, the weighted independence number is $\alpha_{\boldsymbol{w}}(G)\defeq\max\left\{\sum_{v\in S}w_v : S \subseteq V\text{ independent}\right\}$. We will show that for every pure refined strategy $\hatpistrat$, $\rho(\widehat{\boldsymbol{x}}^\hatpistrat)=\chi_f(G_\hatpistrat)$. In the definition of $\rho(\widehat{\boldsymbol{x}}^\hatpistrat)$, payoffs outside $L(\hatpistrat)$ can be set to zero without decreasing the numerator and without increasing the denominator. We can thus assume $v_\ell=0$ for $\ell\notin L(\hatpistrat)$.  For $\ell\in L(\hatpistrat)$, we let $w_\ell\defeq p_c(\ell)v_\ell$. Since $p_c(\ell)>0$, arbitrary nonnegative weights $\boldsymbol{w}$ on $L(\hatpistrat)$ arise from suitable nonnegative payoffs $\boldsymbol{v}$. The refined strategy $\hatpistrat$ obtains payoff
\[
        \langle \boldsymbol{v},\widehat{\boldsymbol{x}}^\hatpistrat\rangle=
        \sum_{\ell\in L(\hatpistrat)}p_c(\ell)v_\ell
        =\sum_{\ell\in L(\hatpistrat)}w_\ell.
\]
 For an original pure strategy $\pistrat$, let $C(\pistrat)\subseteq L(\hatpistrat)$ be the leaves compatible with $\pistrat$.  Every such $C(\pistrat)$ is independent in $G_\hatpistrat$, so
\[
        h_{\mathcal{X}_{\Gamma}}(\boldsymbol{v})
        =
        \max_{\pistrat\in\Pi}\sum_{\ell\in C(\pistrat)}w_\ell
        \le \alpha_{\boldsymbol{w}}(G_\hatpistrat).
\]
Conversely, \Cref{lem:independent-compatible} says that every independent set $S$ is compatible with some pure original strategy $\pistrat$.  Then $S\subseteq C(\pistrat)$, and the nonnegativity of $\boldsymbol{w}$ gives
\[
        \sum_{\ell\in S}w_\ell
        \le
        \sum_{\ell\in C(\pistrat)}w_\ell
        \le
        h_{\mathcal{X}_{\Gamma}}(\boldsymbol{v}).
\]
Therefore,
\[
        h_{\mathcal{X}_{\Gamma}}(\boldsymbol{v})=\alpha_{\boldsymbol{w}}(G_\hatpistrat).
\]
It follows that
\[
        \rho(\widehat{\boldsymbol{x}}^\hatpistrat)=
        \sup_{\boldsymbol{w}\ge0,
              \boldsymbol{w}\ne0}
        \frac{\sum_{\ell\in L(\hatpistrat)}w_\ell}{\alpha_{\boldsymbol{w}}(G_\hatpistrat)}.
\]
By homogeneity, this is the same as maximizing $\sum_{\ell\in L(\hatpistrat)} w_\ell$ subject to $\boldsymbol{w}\ge0$ and $\alpha_{\boldsymbol{w}}(G_\hatpistrat)\le1$.  The latter constraint is precisely
\[
        \sum_{\ell\in S}w_\ell\le1
        \quad\text{for every independent set }S\in\Ind(G_\hatpistrat),
\]
which is the dual LP of $\chi_f(G_\hatpistrat)$.
\end{proof}

\rstcorollaryCorCommonFractionalConsequences*
\begin{proof}
By \Cref{thm:fractional-formula}, $\VR(\widehat\Gamma/\Gamma)$ is the maximum of finitely many fractional chromatic numbers.  Each $\chi_f(G_\hatpistrat)$ is the optimum of a rational linear program. Hence, it is rational. This also shows that the payoff supremum is attained by the dual optimal weights in the proof of \Cref{thm:fractional-formula}, translated back by $v_\ell=w_\ell/p_c(\ell)$ on $L(\hatpistrat)$ and by $v_\ell=0$ outside it. Finally, every nonempty graph $G$ satisfies $1\le\chi_f(G)\le |V(G)|$, and $|V(G_\hatpistrat)|\le |L|$.
\end{proof}

\subsection{Value of perfect information}
\label{appendix:voi}

We continue with the proofs concerning the value of perfect information.

\rstlemmaLemCograph*

\begin{proof}
We prove a stronger recursive claim.  For an original information set $I$ and a set $S\subseteq I$ of nodes with the same previous player sequence, $G(I,S)$ denotes the subgraph of $G_\hatpistrat$ induced by the leaves in $L(\hatpistrat)$ that are reachable from some node in $S$. We claim that $G(I,S)$ is a cograph. The proof is by induction on the maximum remaining tree height below nodes in $S$.

At $I$, the perfect-information strategy $\hatpistrat$ may choose different actions at different nodes of $S$.  For each action $a\in A_I$, we write $S_a\defeq\{h\in S:\hatpistrat(h)=a\}$. Any leaf below $S_a$ and any leaf below $S_b$ with $a\ne b$ are adjacent, since their paths require different actions at the same original information set $I$. Thus, $G(I,S)$ is the join of the subgraphs generated by the nonempty action classes $S_a$.

For an action $a$ with $S_a \ne \emptyset$, after action $a$ is taken, chance moves are followed until either a terminal leaf is reached or the next player information set is reached.  The set $T_a$ consists of the terminal leaves reached before another player decision.  These leaves form an edgeless component: they share the same previous player sequence, take the same action $a$ at $I$, and have no later player choices. For every subsequent information set $J$, $S_{a,J}\subseteq J$ is the set of first player nodes in $J$ reachable from $S_a$ after action $a$, before any other player decision.  The subgraph below $S_a$ is the disjoint union of the edgeless graph on $T_a$ and the graphs $G(J,S_{a,J})$ for the nonempty sets $S_{a,J}$.  Indeed, two leaves in different next-information-set components cannot conflict later.  If their paths later met at a common information set, then that information set would contain histories with different remembered previous information sets, namely $J$ on one path and a different $J'$ on the other, contradicting perfect recall.  Terminal leaves in $T_a$ also create no cross edges, since they have no later player information sets.

By induction, each nonempty $G(J,S_{a,J})$ is a cograph, and the edgeless graph on $T_a$ is a cograph.  Cographs are closed under disjoint union and join, so $G(I,S)$ is a cograph.  Finally, the full graph $G_\hatpistrat$ is obtained as a disjoint union over the first player information-set components reachable from the root, together with any chance-only terminal leaves. We conclude that $G_\hatpistrat$ is a cograph.
\end{proof}

\rstlemmaLemCographFractional*

This is standard: cographs are perfect, and perfect graphs have a fractional chromatic number equal to the clique number~\citep{Golumbic04:Algorithmic,ScheinermanUllman13:Fractional}. We include the short direct proof for completeness.

\begin{proof}[Proof of~\Cref{lem:cograph-fractional}]
Every graph satisfies $\chi_f(G)\ge\omega(G)$, because a clique of size $r$ needs total fractional color weight at least $r$.  We prove the reverse inequality for cographs by induction over the cograph construction.  It is immediate for a single vertex.  For disjoint unions,
\[
        \chi_f(G\cup G')=\max\{\chi_f(G),\chi_f(G')\},
        \quad
        \omega(G\cup G')=\max\{\omega(G),\omega(G')\}.
\]
For joins,
\[
        \omega(G\vee G')=\omega(G)+\omega(G').
\]
Every independent set in $G\vee G'$ lies entirely in $G$ or entirely in $G'$, so the fractional coloring LP decomposes, giving
\[
        \chi_f(G\vee G')=\chi_f(G)+\chi_f(G').
\]
Thus, equality is preserved by both operations.
\end{proof}

\rsttheoremThmVoiInteger*
\begin{proof}
The first statement and thus the integrality of $\operatorname{VoPI}$ follows by combining \Cref{thm:fractional-formula,lem:cograph,lem:cograph-fractional}. The upper bound follows from \Cref{prop:unaccounted-leaves} below.
\end{proof}

\begin{proposition}\label{prop:unaccounted-leaves}
For any strategy $\hatpistrat$ of the perfect-information refinement
$\GammaPI$,
\[
\omega(G_{\hatpistrat}) \leq \left\lceil \frac{|L|}{3} \right\rceil.
\]
\end{proposition}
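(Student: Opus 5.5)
The plan is to prove the stronger statement that if $G_\hatpistrat$ contains a clique of size $k$, then the tree has at least $3k-2$ leaves. The leaves counted include those that $\hatpistrat$ does not reach. This gives $k\le (|L|+2)/3$, hence $k\le\lceil |L|/3\rceil$ since $k$ is an integer. I would run the induction along the same recursive decomposition used in the proof of \Cref{lem:cograph}. For an information set $I$ and a set $S\subseteq I$ of nodes sharing a previous player sequence, write $N(I,S)$ for the number of leaves of $\Gamma$ descending from nodes in $S$. I would then show: if $G(I,S)$ contains a clique of size $k\ge1$, then $N(I,S)\ge 3k-2$.

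\emph{Base case.} For $k=1$ the bound is $N(I,S)\ge 1$, which is trivial.

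\emph{Inductive step.} As in \Cref{lem:cograph}, $G(I,S)$ is the join over actions $a$ of the graphs below the action classes $S_a$. Below each $S_a$ the graph is a disjoint union: the edgeless terminal piece $T_a$ together with the pieces $G(J,S_{a,J})$.
\begin{enumerate}[label=(\roman*)]
\item A clique $K$ meets each class in a clique, and within one class it lies entirely in one piece of the disjoint union. If that piece is $T_a$, the clique part has size $1$. Otherwise it has size $k_a$ and lies in $G(J,S_{a,J})$, whose leaves are disjoint from those of other pieces, so induction gives at least $3k_a-2$ leaves there.
\item Let $m$ be the number of classes that $K$ actually meets, with sizes $k_1,\dots,k_m$ summing to $k$. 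If $m=1$, the bound passes directly to the subproblem, since the leaves below $S$ include those below the relevant piece.
\item If $m\ge2$, the common action set $A_I$ contains at least $m$ distinct actions, because these are the actions chosen by $\hatpistrat$ on the $m$ classes. Pick one node $h_a\in S_a$ for each of the $m$ classes.
\item At $h_a$, each of the at least $m-1$ actions other than $\hatpistrat(h_a)$ leads to a subtree containing at least one leaf. These leaves are unreached by $\hatpistrat$, so they are disjoint from the leaves already counted in (i).
\end{enumerate}
Adding these contributions gives
\[
N(I,S)\;\ge\;\sum_{j=1}^m(3k_j-2)+m(m-1)\;=\;3k-2m+m^2-m\;=\;3k-2+(m-1)(m-2),
\]
which is at least $3k-2$ because $(m-1)(m-2)\ge0$ for every integer $m\ge 1$.

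\emph{The root.} The full graph $G_\hatpistrat$ is a disjoint union of the first-level components together with isolated chance-only leaves. A clique of size at least $2$ therefore lives inside a single component $G(J,U)$, and the bound for that component transfers to $|L|$. For this step I would use the dummy root information set $I_0$ from \Cref{sec:voi-algorithms}.

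\emph{Main obstacle.} The delicate step is the disjointness bookkeeping. I must make sure that the unreached sibling leaves at the different nodes $h_a$ are not double counted, neither with each other nor with the leaves inside the recursive pieces. This holds because distinct nodes and distinct actions root disjoint subtrees. I must also make sure that every action really has a leaf beneath it, which holds because the tree is finite and we only removed leaves with $p_c(\ell)=0$, and such removals never empty a player action's subtree. A further point to handle is when $S_a$ contains several nodes: I pick only one representative per class, which suffices. Finally, I must handle the case $k_a$ coming from $T_a$ correctly, where the bound $3\cdot1-2=1$ still holds.
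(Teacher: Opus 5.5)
Your proposal is correct and follows essentially the same route as the paper: induction over the cograph decomposition of \Cref{lem:cograph}, the inductive bound $3k_a-2$ on each action class, and $m(m-1)$ off-strategy sibling leaves from one representative per class, giving the same slack $(m-1)(m-2)\ge 0$ (the paper's $r^2-3r\ge -2$). One small fix: in step (iv), justify that the sibling leaves are disjoint from those counted in (i) by the disjointness of subtrees below distinct nodes of $I$ and distinct actions, as you do at the end, rather than by their being unreached, since the inductively counted leaves also include unreached ones.
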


\begin{proof}
We prove that $|L| \geq 3\omega(G_{\hatpistrat})-2$. The desired bound then follows immediately since $\omega(G_{\hatpistrat})$ is an integer. The proof follows an analogous structure to and uses the same notation as that of \Cref{lem:cograph}.

We prove a stronger recursive claim.  For an original information set $I$ and a set $S\subseteq I$ of nodes with the same previous player sequence, $G(I,S)$ denotes the subgraph of $G_\hatpistrat$ induced by the leaves in $L(\hatpistrat)$ that are reachable from some node in $S$, and $L(I,S)$ denotes the terminal leaves descending from nodes in $S$ (not necessarily reached by $\hatpistrat$). We claim that, whenever $G(I,S)$ is nonempty, $|L(I, S)| \geq 3\omega(G(I,S))-2.$ The proof is by induction on the maximum remaining tree height below nodes in $S$.

For each action $a\in A_I$, we write $S_a \defeq\{h\in S:\hatpistrat(h)=a\}$. Let $A^+ := \{a \in A_I : S_a \neq \emptyset\}$ and $r := |A^+|$. For each $a \in A^+$, let $L_a$ be the set of all terminal leaves descending
from the nodes in $S_a$ after taking action $a$, and let $G_a$ be the subgraph of $G(I,S)$ induced by the reached leaves in $L_a$. As shown in the proof of \Cref{lem:cograph}, $G_a$ is the disjoint union of subgraphs that obey the inductive hypothesis. Since $|L_a|$ is the sum of the leaves under each of these subgraphs and $\omega(G_a)$ is the maximum over their clique numbers, this implies $|L_a| \geq 3\omega(G_a) - 2$. Further, since $G(I,S)$ is the join of $\{G_a\}_{a \in A^+}$, we have $\omega(G(I,S)) = \sum_{a \in A^+} \omega(G_a)$.

It remains to count the leaves that are not reached by $\hatpistrat$. For every distinct $a,b \in A^+$, choose some node $h_a \in S_a$. Since all nodes in $I$ have the same action set $A_I$, action $b$ is available at $h_a$. The subtree reached by taking action $b$ at $h_a$ contains at least one terminal leaf, which cannot be reached by $\hatpistrat$ as $\hatpistrat$ chooses action $a$ at $h_a$. Moreover, the leaves obtained in this way for distinct ordered pairs $(a,b)$ are distinct: they lie below different action branches of the tree. Thus, there are at least $r(r-1)$ additional off-strategy terminal leaves descending from $S$, so
\[
\begin{aligned}
|L(I,S)|
&\ge
\sum_{a \in A^+} |L_a| + r(r-1)\\
&\ge
\sum_{a \in A^+} (3\omega(G_a)-2) + r(r-1)\\
&=
3\omega(G(I,S)) + r^2 - 3r.
\end{aligned}
\]
For every integer $r \geq 1$, we have $r^2 - 3r \geq -2$; hence,
$|L(I,S)| \geq 3\omega(G(I,S))-2$, completing the induction. The full graph $G_\hatpistrat$ is obtained as a disjoint union over the first player information-set components reachable from the root, together with any chance-only terminal leaves. Thus, $|L|$ is the sum of the leaves under each of these subgraphs, while $\omega(G_{\hatpistrat})$ is their maximum. This gives the desired upper bound.
\end{proof}

\rstpropositionPropVoiInNp*
\begin{proof}
By \Cref{thm:voi-integer}, $\operatorname{VoPI}(\Gamma)\ge \tau$ if and only if there exist a pure perfect-information strategy $\hatpistrat$ and a clique $C$ in $G_\hatpistrat$ with $|C|\ge\lceil \tau\rceil$.  A certificate lists $\hatpistrat$. Since $G_\hatpistrat$ is a cograph (\Cref{lem:cograph}), one can compute in polynomial time the clique number of $G_\hatpistrat$, and certify whether $\omega(G_\hatpistrat)$ exceeds the threshold $\tau$.
\end{proof}

\begin{restatable}{lemma}{rstlemmaLemCoverageGamePr}
\label{lem:coverage-decision-problem-pr}
The decision problem $\Gamma_{\mathcal{C}}$ has perfect recall.
\end{restatable}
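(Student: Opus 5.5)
We verify \Cref{def:perfect-recall} directly: for any two player nodes in the same information set of $\Gamma_{\mathcal{C}}$, we compute the player's own previously visited information sets and own previously chosen actions and check that they coincide. The construction has only two kinds of non-singleton information sets, namely the first-stage set $I$ and the second-stage sets $J_{a,j}$. Every other player node would be a singleton and trivially satisfies the condition, but in fact these are all the player information sets.

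For $I$, every node $h_e$ is reached from the root by a single chance move and no player decision. Hence $\operatorname{seq}(h_e)=\varnothing$ for every $e\in U$, so all nodes of $I$ share the empty sequence.

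For $J_{a,j}$, consider a node $h_{e,a,j}$ with $e\in S_j$. Its root path consists of a chance move selecting $e$, a visit to $I$ with chosen action $a$, and a chance move selecting $j\in M(e)$. The own-history sequence is therefore $(I,a)$. This is independent of $e$, and the only dependence is on $a$, which is fixed within $J_{a,j}$ by definition. Hence all nodes of $J_{a,j}$ share the sequence $(I,a)$.

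Since player nodes are exactly the $h_e$ and the $h_{e,a,j}$, and leaves follow the actions $b$ at $J_{a,j}$, every information set groups nodes with identical own histories. Hence $\Gamma_{\mathcal{C}}$ has perfect recall. The only point needing care, rather than a genuine obstacle, is confirming that $J_{a,j}$ does not merge nodes reached via different first-stage actions. This holds because the index $a$ is part of the information set's label. Chance-revealed data such as $e$ and $j$ may differ across nodes, but perfect recall constrains only the player's own observations and actions, so that variation is harmless.
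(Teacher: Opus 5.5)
Your proposal is correct and follows the paper's proof: nodes of $I$ have empty own history, and every node $h_{e,a,j}\in J_{a,j}$ has own history $(I,a)$ regardless of $e$. Your remark that chance-revealed data such as $e$ and $j$ play no role in the perfect-recall condition is consistent with the paper's informal phrasing ``having observed $j$.''
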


\begin{proof}
At the first information set $I$, the player has no previous information sets or actions. At every information set $J_{a,j}$, the player remembers having been at $I$, having chosen $a$ as the first-stage action, and having observed $j$. This sequence is the same for all nodes $h_{e,a,j}$ in $J_{a,j}$. Hence, every information set consists of nodes with the same sequence.
\end{proof}

\rsttheoremThmVoiCoverageExact*
\begin{proof}
We prove each inequality separately.

\medskip
\noindent\emph{Lower bound.}
We choose indices $j_1,\ldots,j_k$ attaining $\OPT$ and write
\[
        C\defeq\bigcup_{a=1}^k S_{j_a}.
\]
For every $e\in C$, we choose one index $\alpha(e)\in[k]$ with $e\in S_{j_{\alpha(e)}}$, and set
\[
        C_a\defeq\{e\in C:\alpha(e)=a\}.
\]
We define a nonnegative payoff vector $\boldsymbol{v}$ as
\[
        v_{\ell(e,a,j,b)}\defeq
        \begin{cases}
        \displaystyle \frac{1}{p_{e,j_a}}
        & \text{if } e\in C_a,
          \ a=\alpha(e),
          \ j=j_a,
          \ b=e,\\[2mm]
        0 & \text{otherwise.}
        \end{cases}
\]
Since $p_{e,j_a}=1/(n d_e)$, the positive payoff is $nd_e$.  There is exactly one positive-payoff leaf for each covered element $e\in C$, namely $\ell_e\defeq \ell(e,\alpha(e),j_{\alpha(e)},e)$. In the perfect-information refinement, the player observes $e$ at $h_e$.  If $e\in C$, the player in the refinement chooses $\alpha(e)$ as the first-stage action. At the node $h_{e,\alpha(e),j_{\alpha(e)}}$, the player in the refinement chooses $b=e$ as the second-stage action.  Then for every covered element $e$, whenever chance first chooses $e$ and then chooses $j_{\alpha(e)}$, the play reaches $\ell_e$, whose contribution weighted by the chance probability is
\[
        p_{e,j_{\alpha(e)}}\cdot
        \frac{1}{p_{e,j_{\alpha(e)}}}=1.
\]
Thus, $V_{\Gamma_{\mathcal{C}}^{\operatorname{PI}}}(\boldsymbol{v})\ge |C|=\OPT$. (In fact, there are only $\OPT$ leaves with positive payoff, and each can contribute at most one after accounting for the chance probability, so $V_{\Gamma_{\mathcal{C}}^{\operatorname{PI}}}(\boldsymbol{v})=\OPT$.)

For any behavioral strategy in the original decision problem, $b_I(a)$ denotes the probability of first-stage action $a$ at $I$, and $b_{a,j}(b)$ denotes the probability of second-stage action $b$ at $J_{a,j}$.  The expected payoff is
\begin{align*}
        \mathbb E_\Gamma[\boldsymbol{v}]
        &=
        \sum_{a=1}^k\sum_{e\in C_a}
        b_I(a)p_{e,j_a}b_{a,j_a}(e)
        \cdot \frac{1}{p_{e,j_a}}  \\
        &=
        \sum_{a=1}^k b_I(a)\sum_{e\in C_a}b_{a,j_a}(e)
        \le
        \sum_{a=1}^k b_I(a)=1,
\end{align*}
where the inequality uses $\sum_{e\in C_a} b_{a,j_a}(e) \le \sum_{e \in U}b_{a,j_a}(e)=1$ for each fixed $a$. (This bound is tight: for any nonempty $C_a$ and any $e\in C_a$, the original player can play $a$ as the first-stage action and then $b=e$ as the second-stage action at $J_{a,j_a}$, obtaining a contribution of exactly one, so $V_{\Gamma_{\mathcal{C}}}(\boldsymbol{v})=1$.) Therefore,
\[
        \operatorname{VoPI}(\Gamma_{\mathcal{C}})\ge
        \frac{V_{\Gamma_{\mathcal{C}}^{\operatorname{PI}}}(\boldsymbol{v})}{V_{\Gamma_{\mathcal{C}}}(\boldsymbol{v})}=\OPT.
\]

\medskip
\noindent\emph{Upper bound.}
For an arbitrary $\boldsymbol{v}\ge0$, it suffices to show
\[
        V_{\Gamma_{\mathcal{C}}^{\operatorname{PI}}}(\boldsymbol{v})
        \le \OPT V_{\Gamma_{\mathcal{C}}}(\boldsymbol{v}).
\]
For a fixed pure perfect-information strategy $\hatpistrat$, $a(e)\in[k]$ is the first-stage action chosen by $\hatpistrat$ at $h_e$, and this defines the partition $P_a\defeq\{e\in U:a(e)=a\}$. For each pair $(a,j)$, we define
\[
        M_{a,j}\defeq\max_{b\in U}\sum_{e\in S_j}
        p_{e,j}v_{\ell(e,a,j,b)}.
\]
If the original player commits to first-stage action $a$, then at each later information set $J_{a,j}$ it can choose the best $b$. Thus,
\[
        V_{\Gamma_{\mathcal{C}}}(\boldsymbol{v})
        =
        \max_{a\in[k]}\sum_{j=1}^m M_{a,j}.
\]
At a singleton node $h_{e,a(e),j}$ in the perfect-information refinement, $\hatpistrat$ may choose a second-stage action depending on $e$. Therefore, the contribution of $\hatpistrat$ from a fixed pair $(a,j)$ is at most
\[
        \sum_{e\in P_a\cap S_j}
        \max_{b\in U}p_{e,j}v_{\ell(e,a,j,b)}.
\]
For each individual $e\in S_j$,
\[
        \max_{b\in U}p_{e,j}v_{\ell(e,a,j,b)}
        \le M_{a,j}.
\]
Hence,
\begin{align*}
        V^{\hatpistrat}_{\Gamma_{\mathcal{C}}^{\operatorname{PI}}}(\boldsymbol{v})
        &\le
        \sum_{a=1}^k\sum_{j=1}^m
        |P_a\cap S_j|M_{a,j}.
\end{align*}
If we write $c_a\defeq\max_{j\in[m]}|P_a\cap S_j|$, then
\[
        V^{\hatpistrat}_{\Gamma_{\mathcal{C}}^{\operatorname{PI}}}(\boldsymbol{v})
        \le
        \sum_{a=1}^k c_a\sum_{j=1}^mM_{a,j}
        \le\left(\sum_{a=1}^k c_a\right)V_{\Gamma_{\mathcal{C}}}(\boldsymbol{v}).
\]
For each $a$, we choose $j_a$ attaining $c_a=|P_a\cap S_{j_a}|$.  Since $P_1,\ldots,P_k$ form a partition, the sets $P_a\cap S_{j_a}$ are pairwise disjoint. Therefore,
\[
        \sum_{a=1}^k c_a
        =\sum_{a=1}^k |P_a\cap S_{j_a}|
        \le
        \left|\bigcup_{a=1}^k S_{j_a}\right|
        \le \OPT.
\]
Thus, $V^{\hatpistrat}_{\Gamma_{\mathcal{C}}^{\operatorname{PI}}}(\boldsymbol{v})\le \OPT V_{\Gamma_{\mathcal{C}}}(\boldsymbol{v})$.  Since this holds for every pure perfect-information strategy $\hatpistrat$ and $\boldsymbol{v}\ge0$ was arbitrary, $\operatorname{VoPI}(\Gamma_{\mathcal{C}})\le \OPT$. This concludes the proof.
\end{proof}

\rsttheoremThmVoiNpComplete*
\begin{proof}
Membership was established in~\Cref{prop:voi-in-np}. For hardness, we map a \textsc{MaximumCoverage} instance with threshold $\tau$ to $\Gamma_{\mathcal{C}}$ with the same threshold $\tau$.  By \Cref{thm:voi-coverage-exact}, $\operatorname{VoPI}(\Gamma_{\mathcal{C}})=\OPT$, so $\operatorname{VoPI}(\Gamma_{\mathcal{C}})\ge \tau$ if and only if the \textsc{MaximumCoverage} instance covers at least $\tau$ elements using $k$ sets. Moreover, the reduction is polynomial.
\end{proof}

\subsection{Value of recall}

We next include the omitted proofs concerning the value of recall.

\begin{lemma}
\label{lem:poly-bit-witness}
For $\tau\in\mathbb{Q}_{>0}$, if $\operatorname{VoR}(\Gamma)\ge \tau$, then there exist a refined pure strategy $\hatpistrat\in\Pi(\GammaPR)$ and a nonnegative rational payoff vector $\boldsymbol{v}\in\mathbb{Q}^L_{\ge0}$ such that $\sum_{\ell\in L}v_\ell=1$ and $\langle \boldsymbol{v},\widehat{\boldsymbol{x}}^\hatpistrat\rangle \ge \tau\langle \boldsymbol{v},\boldsymbol{x}^\pistrat\rangle$ for every $\pistrat\in\Pi$, with encoding length polynomial in the input size and the encoding length of $\tau$.
\end{lemma}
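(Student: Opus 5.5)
By \Cref{thm:fractional-formula} and \Cref{cor:common-fractional-consequences}, $\operatorname{VoR}(\Gamma)=\max_{\hatpistrat\in\Pi(\GammaPR)}\chi_f(G_\hatpistrat)$. Fix a maximizer $\hatpistrat$, so that $\chi_f(G_\hatpistrat)\ge\tau$, and write $G\defeq G_\hatpistrat$. The plan is to extract the payoff vector $\boldsymbol{v}$ from an optimal vertex of the dual LP \eqref{eq:dual-color}, and then to bound its encoding length through the vertex structure of that LP. This has to be done without listing the exponentially many constraints.

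\emph{Step 1: choosing the weights.} The feasible region $\{\boldsymbol{w}\ge 0:\sum_{\ell\in S}w_\ell\le 1\ \forall S\in\Ind(G)\}$ is a polytope in $\mathbb{R}^{L(\hatpistrat)}$, with $|L(\hatpistrat)|\le|L|$. It is bounded because every singleton is independent, so $w_\ell\le1$. Its constraint matrix is $0/1$ and its right-hand sides are $0/1$. The optimum is therefore attained at a vertex $\boldsymbol{w}^\star$, which is the unique solution of a nonsingular subsystem of at most $|L|$ tight constraints, each with $0/1$ coefficients. By Cramer's rule and Hadamard's bound, every coordinate of $\boldsymbol{w}^\star$ is a rational number whose numerator and denominator are at most $|L|^{|L|/2}$. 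Hence each coordinate has $O(|L|\log|L|)$ bits, which is polynomial in the size of $\Gamma$. The key point is that this bound depends only on the dimension and the coefficient sizes, not on the number of constraints. This is the step I regard as the crux, and the standard vertex-size argument handles it.

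\emph{Step 2: translating to payoffs.} Following the proof of \Cref{thm:fractional-formula}, set $v'_\ell\defeq w^\star_\ell/p_c(\ell)$ for $\ell\in L(\hatpistrat)$ and $v'_\ell\defeq0$ otherwise. Each $p_c(\ell)$ is a product of input chance probabilities along one path, so its encoding length is polynomial. Therefore $\boldsymbol{v}'$ also has polynomial encoding length.

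That proof gives $\langle\boldsymbol{v}',\widehat{\boldsymbol{x}}^\hatpistrat\rangle=\sum_\ell w^\star_\ell=\chi_f(G)\ge\tau$. It also shows $\langle\boldsymbol{v}',\boldsymbol{x}^\pistrat\rangle=\sum_{\ell\in C(\pistrat)}w^\star_\ell\le\alpha_{\boldsymbol{w}^\star}(G)\le1$ for every $\pistrat\in\Pi$, since $C(\pistrat)$ is independent. Combining the two, $\langle\boldsymbol{v}',\widehat{\boldsymbol{x}}^\hatpistrat\rangle\ge\tau\langle\boldsymbol{v}',\boldsymbol{x}^\pistrat\rangle$ for every $\pistrat$.

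\emph{Step 3: normalization.} Since $\chi_f(G)\ge1$, the vector $\boldsymbol{v}'$ is nonzero. Divide it by $s\defeq\sum_\ell v'_\ell>0$, which is a polynomially sized rational. The inequality is homogeneous, so it is preserved, and the normalized vector $\boldsymbol{v}$ keeps polynomial encoding length. The threshold $\tau$ enters only through the inequality, which explains why the bound is stated in terms of the input size together with the encoding length of $\tau$.
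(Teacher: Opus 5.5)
Your proof is correct, but it takes a different route from the paper's.

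\textbf{The paper's route.} The paper does not invoke \Cref{thm:fractional-formula} here. It restricts payoffs to the simplex and uses continuity and compactness to obtain a maximizing $\boldsymbol{v}$, together with a refined strategy $\hatpistrat$ attaining $V_{\GammaPR}(\boldsymbol{v})$. For that fixed $\hatpistrat$, it considers the polytope cut out by $\boldsymbol{v}\ge 0$, $\sum_\ell v_\ell=1$, and the exponentially many inequalities $\langle \boldsymbol{v},\widehat{\boldsymbol{x}}^{\hatpistrat}-\tau\boldsymbol{x}^{\pistrat}\rangle\ge 0$. A basic feasible solution of this system is the witness. Cramer's rule then bounds its size in terms of $|L|$ and the bit lengths of the chance-probability products and of $\tau$.

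\textbf{Your route.} You pass through the fractional-coloring characterization and take a vertex $\boldsymbol{w}^\star$ of the dual coloring LP. Its tight constraints have $0/1$ data, so Hadamard's bound gives $O(|L|\log|L|)$ bits per coordinate. You then rescale by $p_c(\ell)$ and normalize. Your verification that $\langle\boldsymbol{v}',\boldsymbol{x}^{\pistrat}\rangle=\sum_{\ell\in C(\pistrat)}w^\star_\ell\le 1$ via independence of $C(\pistrat)$ is sound, and so are the rescaling and normalization bit-length estimates.

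\textbf{Comparison.} Both arguments rest on the same principle: a vertex of a polytope in at most $|L|$ dimensions has size controlled by the dimension and the coefficient sizes, regardless of how many constraints there are. Your approach buys three things:
\begin{itemize}
\item an explicit bit bound;
\item a witness independent of $\tau$ (the same $\boldsymbol{v}$ certifies every $\tau\le\operatorname{VoR}(\Gamma)$);
\item attainment directly from LP theory, with no compactness argument.
\end{itemize}
The paper's approach is more elementary and self-contained. It uses only the definitions of $V_\Gamma$ and $V_{\GammaPR}$ plus basic LP facts, without the conflict-graph machinery behind \Cref{thm:fractional-formula}.
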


\begin{proof}
By homogeneity, the payoff supremum can be restricted to the simplex $\Delta_L$. For every $\boldsymbol{v}\in\Delta_L$, $V_\Gamma(\boldsymbol{v})>0$: some leaf has $v_\ell>0$, every live leaf has $p_c(\ell)>0$, and a pure strategy can choose the player actions on the path to that leaf. Thus, the ratio $V_{\GammaPR}(\boldsymbol{v})/V_\Gamma(\boldsymbol{v})$ is continuous on the compact simplex $\Delta_L$, and its supremum is a maximum. If $\operatorname{VoR}(\Gamma)\ge \tau$, we choose a maximizing payoff vector $\boldsymbol{v}\in\Delta_L$ and a refined pure strategy $\hatpistrat$ attaining $V_{\GammaPR}(\boldsymbol{v})$. Then
\[
        \langle \boldsymbol{v},\widehat{\boldsymbol{x}}^\hatpistrat\rangle
        \ge \tau\langle \boldsymbol{v},\boldsymbol{x}^\pistrat\rangle
        \quad\forall\pistrat\in\Pi.
\]
For this fixed $\hatpistrat$, the feasible set is a nonempty rational polytope in $|L|$ variables, defined by nonnegativity, normalization, and the linear inequalities above. A nonempty rational polytope has a basic feasible solution. Such a solution is determined by at most $|L|$ linearly independent tight constraints, and Cramer's rule bounds the bit length of the solution polynomially in the bit lengths of the coefficients and in $|L|$.  The coefficients are products of rational chance probabilities along root-to-leaf paths and the rational threshold $\tau$, so their bit lengths are polynomial in the input size. We conclude that a rational feasible payoff vector with polynomial encoding length exists.
\end{proof}

\rstpropositionPropVorInSigmatwo*
\begin{proof}
A $\class{\Sigma^p_2}$ verifier existentially guesses a refined pure strategy $\hatpistrat$ and the polynomial-bit rational vector $\boldsymbol{v}$ from \Cref{lem:poly-bit-witness}.  It then universally guesses an original pure strategy $\pistrat$ and checks in polynomial time that $\langle \boldsymbol{v},\widehat{\boldsymbol{x}}^\hatpistrat\rangle \ge \tau\langle \boldsymbol{v},\boldsymbol{x}^\pistrat\rangle$. The leaf-distribution vectors induced by the guessed pure strategies are computable by traversing the explicit tree and multiplying rational chance probabilities. Hence, the predicate after the existential and universal guesses is polynomial-time decidable.
\end{proof}

\subsubsection{\NP-hardness through fractional coloring embedding}
\label{app:vor-graph-embedding}

\begin{lemma}
\label{lem:edge-spine-size}
For $n=|V|$ and $m=|E|$, the decision problem $\Gamma_G$ has one chance node, $n+2m$ player nodes, $n+2m$ terminal leaves, and $n+m$ information sets.  In particular, it can be constructed in time $O(n+m)$. Moreover, $\Gamma_G$ has no absentmindedness.
\end{lemma}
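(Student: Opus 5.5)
The plan is a direct count over the construction of \Cref{def:graph-embedding-construction}, followed by a path-by-path check of the absentmindedness condition. I would organize the count by vertex branches. The root is the only chance node, since every later non-terminal node is a player node (a marker node $M_v$ or an edge node $h_{v,i}$). On the branch of vertex $v$ there is one marker node and $d(v)$ edge nodes, which gives $\sum_v (1+d(v)) = n + 2m$ player nodes by the handshake identity $\sum_v d(v)=2m$.

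For the leaves, I would again go branch by branch. The branch of $v$ ends in exactly one success leaf $\ell_v$. In addition, each edge node $h_{v,i}$ has exactly one non-continuing action, and that action leads to the private exit leaf $t_{v,i}$. This holds also when $d(v)=0$, where the marker action goes straight to $\ell_v$. Summing over $v$ gives $n + 2m$ leaves.

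Information sets come in two kinds: the $n$ singleton markers $M_v$ and one set $I_e$ per edge. Each $I_e$ with $e=\{u,w\}$ contains exactly the two nodes $h_{u,i}$ and $h_{w,j}$ for which $e_i(u)=e=e_j(w)$, so there are $n+m$ information sets in total. Construction time $O(n+m)$ follows from a single sweep. First sort each incidence list by $\prec$; bucketing edges by their global index suffices, so this stays linear. Then emit the nodes of each branch in order, with constant work per node and per leaf.

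The one step that needs a genuine argument is no absentmindedness, though it is short. Any root-to-leaf path lies within the branch of a single vertex $v$, because the root is the only chance node and the branches below it are disjoint. Along that branch the path visits $M_v$ and then $I_{e_1(v)},I_{e_2(v)},\dots$, possibly stopping early at an exit leaf. The edges $e_i(v)$ are pairwise distinct, since $G$ is simple and each incident edge is listed once in the $\prec$ order. The markers are distinct from all edge sets. Hence no information set is visited twice on any path, which is exactly \Cref{def:no-absentmindedness}.
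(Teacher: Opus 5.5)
Your proposal is correct and follows essentially the same route as the paper's proof. It counts branch by branch using the handshake identity $\sum_v d(v)=2m$ for player nodes and leaves, counts $n$ markers plus $m$ edge sets, and notes that each root-to-leaf path stays within one vertex branch and meets each information set at most once. Your extra details, namely the linear-time construction of the $\prec$-sorted incidence lists and the distinctness of the $e_i(v)$, make explicit what the paper leaves implicit.
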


\begin{proof}
There is one \emph{marker} node for each vertex.  By the handshaking identity, $\sum_{v\in V}d(v)=2m$, so there are exactly $2m$ edge-test nodes.  Each vertex contributes one success leaf, and each edge-test node contributes one exit leaf, for a total of $n+2m$ terminal leaves. The marker information sets are $M_v$ for $v\in V$, and the edge-test information sets are $I_e$ for $e\in E$, giving $n+m$ information sets. Finally, no root-to-leaf path visits an information set more than once, so the decision problem has no absentmindedness.
\end{proof}

\begin{lemma}
\label{lem:edge-spine-pr}
For every edge $e=\{u, v\}$, the information set $I_e$ contains exactly two player nodes, one on the $u$-branch and one on the $v$-branch, and these two nodes lie in distinct information sets of $\GammaPR$. As a result, a pure strategy of $\GammaPR$ can choose the actions at the two endpoint occurrences of every $I_e$ independently.
\end{lemma}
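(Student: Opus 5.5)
The plan is to read off the structure of $I_e$ directly from \Cref{def:graph-embedding-construction}, and then compare the player sequences $\operatorname{seq}(\cdot)$ of its two nodes as in \Cref{def:pr-refinement}. Write $e=\{u,v\}$. A node of the form $h_{w,i}$ lies in $I_e$ exactly when $e_i(w)=e$. For a fixed branch $w$, the edges $e_1(w)\prec\cdots\prec e_{d(w)}(w)$ are distinct, so each branch contains at most one node of $I_e$. Moreover, $e$ is incident to $w$ only for $w\in\{u,v\}$. Since $u\ne v$ (the graph is simple), $I_e$ therefore consists of exactly two nodes: one on the $u$-branch, say $h_{u,i}$ with $e_i(u)=e$, and one on the $v$-branch, say $h_{v,j}$ with $e_j(v)=e$.

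Next I would show that these two nodes have different sequences. On the $u$-branch, the player's first own move is at the marker information set $M_u$, taking action $\bot$. So $\operatorname{seq}(h_{u,i})$ begins with $(M_u,\bot)$. Likewise, $\operatorname{seq}(h_{v,j})$ begins with $(M_v,\bot)$. The marker sets $M_u$ and $M_v$ are distinct singleton information sets of the original problem, so the two sequences already differ in their first entry. Hence $h_{u,i}\not\sim h_{v,j}$ under the equivalence relation of \Cref{def:pr-refinement}, and $\GammaPR$ places them in distinct information sets. (Indeed each is then a singleton.)

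The consequence is now immediate. A pure strategy of $\GammaPR$ picks one action per refined information set. Both refined sets inherit the action set $\{0,1\}$ of $I_e$, and they are distinct sets, so any pair of actions at the two endpoint occurrences of $I_e$ is realized by some pure refined strategy. I do not expect a real obstacle here; the only point needing care is the observation that the markers make the two branches' sequences differ from their very first element. That holds even when $i=j=1$, where the sequences contain nothing besides the marker move.
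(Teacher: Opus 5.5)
Your proposal is correct and follows essentially the same route as the paper. You locate the two occurrences of $I_e$ on the endpoint branches, note that their remembered sequences already differ at the marker moves $(M_u,\bot)$ versus $(M_v,\bot)$, and conclude via \Cref{def:pr-refinement} that they are separated as singletons in $\GammaPR$; your explicit argument that each branch contains at most one node of $I_e$, because the incident edges $e_1(w)\prec\cdots\prec e_{d(w)}(w)$ are distinct, spells out a step the paper leaves implicit.
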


\begin{proof}
The information set $I_e$ occurs on precisely the branches corresponding to the endpoints of $e$, and once on each such branch, by \Cref{def:graph-embedding-construction}.  Let these nodes be $h_{u,e}$ and $h_{v,e}$.  Before reaching $h_{u,e}$, the player has visited the singleton marker $M_u$ and chosen $\bot$; before reaching $h_{v,e}$, the player has instead visited $M_v$ and chosen $\bot$.  Since $u\ne v$, the original information sets $M_u$ and $M_v$ are distinct.  Therefore, the remembered player sequences $\operatorname{seq}(h_{u,e})$ and $\operatorname{seq}(h_{v,e})$ are different.  By the definition of the coarsest perfect-recall refinement in \Cref{def:pr-refinement}, the two nodes are separated in $\GammaPR$.  Each refined part is a singleton because $I_e$ has only these two nodes.  Choices at distinct refined information sets are independent components of a pure strategy.
\end{proof}

For a refined pure strategy $\hatpistrat\in\Pi(\GammaPR)$ and a vertex $v\in V$, let $\ell_v(\hatpistrat)$ be the unique terminal leaf reached conditional on the root chance outcome $v$.

\begin{lemma}
\label{lem:one-leaf-per-vertex}
For every $\hatpistrat\in\Pi(\GammaPR)$,
\[
        L(\hatpistrat)=\{\ell_v(\hatpistrat):v\in V\},
\]
and the map $v\mapsto\ell_v(\hatpistrat)$ is a bijection from $V$ to $L(\hatpistrat)$.
\end{lemma}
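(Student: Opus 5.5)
The plan is to use that in $\Gamma_G$ the only randomness is the root chance node, and that every player node has a deterministic continuation once $\hatpistrat$ is fixed. First, fix $v\in V$ and condition on the root chance outcome $v$, which has probability $1/|V|>0$. On the $v$-branch the player visits the marker $M_v$, where $\bot$ is forced. Then, if $d(v)>0$, the player traverses the chain $h_{v,1},\ldots,h_{v,d(v)}$. Each node $h_{v,i}$ sits in a refined information set of $\GammaPR$ (a singleton, by \Cref{lem:edge-spine-pr}), and $\hatpistrat$ picks a single action there. Hence the path from the root through $v$ is uniquely determined: it either exits at the first index $i$ where $\hatpistrat$ picks the non-continuing action, reaching $t_{v,i}$, or continues all the way to $\ell_v$. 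This shows that exactly one leaf, $\ell_v(\hatpistrat)$, is reached conditional on $v$. It is reached with probability $p_c(\ell_v(\hatpistrat))=1/|V|>0$, so $\ell_v(\hatpistrat)\in L(\hatpistrat)$.

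Conversely, every leaf of $\Gamma_G$ lies in the subtree of exactly one root outcome $v$. By construction, the leaves below the $v$-branch are $\ell_v$ and $t_{v,1},\ldots,t_{v,d(v)}$; all of these are private to $v$, since the only shared objects across branches are information sets, not tree nodes. A leaf below the $v$-branch other than $\ell_v(\hatpistrat)$ requires, at some $h_{v,i}$, an action different from the one chosen by $\hatpistrat$ (or from the forced path), so it receives $\widehat{x}^\hatpistrat_\ell=0$. Therefore $L(\hatpistrat)=\{\ell_v(\hatpistrat):v\in V\}$.

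Finally, injectivity follows because leaves on distinct root branches are distinct tree nodes: $\ell_u(\hatpistrat)$ lies in the $u$-subtree and $\ell_v(\hatpistrat)$ in the $v$-subtree. Surjectivity onto $L(\hatpistrat)$ is the previous paragraph, so the map is a bijection.

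There is no real obstacle here. The only point needing care is to confirm that leaves are genuinely private to each vertex branch. This holds because the construction creates fresh tree nodes $t_{v,i}$ and $\ell_v$ for each $v$, even though the information sets $I_e$ are shared between two branches. It is also worth noting the degenerate case $d(v)=0$, where the path is just $M_v\to\ell_v$.
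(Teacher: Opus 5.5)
Your proof is correct and takes essentially the same route as the paper: fixing $\hatpistrat$ makes play below each root branch deterministic, so exactly one leaf is reached per vertex; that leaf has mass $1/|V|>0$ and so lies in the support; and disjointness of the root subtrees gives injectivity. Your explicit check that every other leaf on the $v$-branch gets zero mass spells out what the paper compresses into ``exactly one terminal leaf below the $v$-branch is reached.'' The appeal to \Cref{lem:edge-spine-pr} is harmless but unnecessary, since a pure strategy of any refinement already fixes a single action at every node.
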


\begin{proof}
Once chance selects $v$, all subsequent moves are deterministic under the pure strategy $\hatpistrat$.  At each edge test, either the chosen action exits immediately or the unique continuing edge is followed; if no exit occurs, we will reach $\ell_v$. This means that exactly one terminal leaf below the $v$-branch is reached. Distinct root branches are disjoint subtrees, so the reached leaves for distinct vertices are distinct. Finally, every root chance outcome has probability $1/n>0$, so all these leaves belong to the support $L(\hatpistrat)$.
\end{proof}

We use the bijection in \Cref{lem:one-leaf-per-vertex} to map the conflict graph back to the original vertex set.  Namely, define ${G'}_{\hatpistrat}$ on vertex set $V$ by
\[
        \{u, v\}\in E({G'}_{\hatpistrat})
        \quad\Longleftrightarrow\quad
        \{\ell_u(\hatpistrat), \ell_v(\hatpistrat)\}\in E(G_{\hatpistrat}).
\]
Then ${G'}_{\hatpistrat}\cong G_{\hatpistrat}$.

\begin{lemma}
\label{lem:strategy-spanning-subgraph}
For every $\hatpistrat\in\Pi(\GammaPR)$, ${G'}_{\hatpistrat}$ is a spanning subgraph of $G$.
\end{lemma}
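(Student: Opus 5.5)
We need to show: for every refined strategy $\hatpistrat\in\Pi(\GammaPR)$, every edge of ${G'}_{\hatpistrat}$ is an edge of $G$. Vertex sets agree by construction, so this is the whole claim. The plan is to take two distinct vertices $u,v$ that are nonadjacent in $G$ and show that the leaves $\ell_u(\hatpistrat)$ and $\ell_v(\hatpistrat)$ are not adjacent in $G_\hatpistrat$.

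By \Cref{def:conflict-graph}, adjacency requires an original information set $I$ visited by both root-to-leaf paths with differing actions. So the key step is to enumerate which original information sets lie on the path to $\ell_u(\hatpistrat)$. By \Cref{def:graph-embedding-construction}, every player node below the root branch $u$ is either the marker node in $M_u$ or an edge-test node $h_{u,i}\in I_{e_i(u)}$ for an edge $e_i(u)$ incident to $u$. This holds whether the path reaches the success leaf $\ell_u$ or an exit leaf $t_{u,i}$, since the path is a prefix of the spine. Hence the original information sets visited by the path to $\ell_u(\hatpistrat)$ form a subset of $\{M_u\}\cup\{I_e: u\in e\}$, and similarly for $v$.

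A common information set cannot be a marker, since $M_u\neq M_v$ for $u\ne v$. So it would have to be some $I_e$ with $e$ incident to both $u$ and $v$, that is, $e=\{u,v\}\in E$. This contradicts nonadjacency. Therefore no common information set exists, the two leaves are nonadjacent, and $E({G'}_{\hatpistrat})\subseteq E$.

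There is no real obstacle here. The only care needed is to handle paths that exit early, which visit only a prefix of the incident edge tests, and to note that the bijection of \Cref{lem:one-leaf-per-vertex} makes the vertex sets coincide, so ``spanning'' is immediate.
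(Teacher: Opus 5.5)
Your proof is correct and follows essentially the same argument as the paper: a conflict requires a shared original information set, markers are private to their branch, and $I_e$ occurs only on the two endpoint branches of $e$, which forces $e=\{u,v\}\in E$. You phrase it as a contrapositive and state explicitly that early-exit paths visit only a prefix of the spine, a point the paper leaves implicit, but the argument is the same.
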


\begin{proof}
Both graphs have the same vertex set $V$, so what remains is to prove $E({G'}_{\hatpistrat})\subseteq E(G)$.  Suppose $\{u, v\}\in E({G'}_{\hatpistrat})$.  By the definition of the conflict graph in \Cref{def:conflict-graph}, there is an original information set $J$ visited by both root-to-leaf paths to $\ell_u(\hatpistrat)$ and $\ell_v(\hatpistrat)$ at which those paths use different actions. A marker information set cannot be such a $J$, so $J=I_e$ for some edge $e\in E$. By construction, the only root branches containing nodes of $I_e$ are the two endpoint branches of $e$.  Since the two paths under consideration lie on the $u$- and $v$-branches, respectively, it follows that $e=\{u, v\}$.  Thus, $\{u, v\}\in E(G)$, as required.
\end{proof}

\begin{lemma}
\label{lem:all-continuing-realizes}
There is a refined pure strategy $\hatpistrat^* \in\Pi(\GammaPR)$ for which $G_{\hatpistrat^*} \cong G$.
\end{lemma}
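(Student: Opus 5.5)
The plan is to take $\hatpistrat^*$ to be the \emph{all-continuing} refined strategy. At every edge-test node $h_{v,i}$ it plays the continuing action: $0$ if $v$ is the tail of $e_i(v)$ and $1$ if $v$ is the head. At the marker sets it plays $\bot$. By \Cref{lem:edge-spine-pr}, the two endpoint occurrences of each $I_e$ lie in distinct information sets of $\GammaPR$, so this assignment is a legal pure strategy of $\GammaPR$. Under $\hatpistrat^*$, conditional on root outcome $v$, play passes through $M_v$ and all of $h_{v,1},\dots,h_{v,d(v)}$ and reaches the success leaf. Hence $\ell_v(\hatpistrat^*)=\ell_v$ for all $v$, and by \Cref{lem:one-leaf-per-vertex} these $n$ leaves are exactly $L(\hatpistrat^*)$.

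Next I would show $G'_{\hatpistrat^*}=G$ on vertex set $V$. The inclusion $E(G'_{\hatpistrat^*})\subseteq E(G)$ is \Cref{lem:strategy-spanning-subgraph}. For the reverse inclusion, fix an edge $e=\{u,v\}$ oriented with tail $u$ and head $v$. The path to $\ell_u$ visits the node of $I_e$ on the $u$-branch and takes action $0$ there. The path to $\ell_v$ visits the node of $I_e$ on the $v$-branch and takes action $1$. So both paths visit the original information set $I_e$ with $a_{\ell_u}(I_e)\ne a_{\ell_v}(I_e)$, and by \Cref{def:conflict-graph} the leaves $\ell_u$ and $\ell_v$ are adjacent. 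Combining both inclusions gives $G'_{\hatpistrat^*}=G$, and since $G'_{\hatpistrat^*}\cong G_{\hatpistrat^*}$ via the bijection, we get $G_{\hatpistrat^*}\cong G$.

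The only delicate point is the first step: checking that the all-continuing assignment is well defined in $\GammaPR$. The two occurrences of $I_e$ require opposite actions, so this assignment would be illegal in $\Gamma$ itself. That is exactly where \Cref{lem:edge-spine-pr} is needed. Isolated vertices need no separate treatment, since their leaves $\ell_v$ are reached directly after $\bot$ and have no conflicts.
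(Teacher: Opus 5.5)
Your proposal is correct and follows essentially the same argument as the paper. You define the all-continuing strategy (well-defined in $\GammaPR$ by \Cref{lem:edge-spine-pr}), note that every branch reaches its success leaf, realize each edge of $G$ through the opposite continuing actions at $I_e$, and rule out non-edges with \Cref{lem:strategy-spanning-subgraph}.
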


\begin{proof}
At the refined occurrence of $I_e$ on the branch of an endpoint, let it choose action $0$ on the tail branch of the oriented edge and action $1$ on the head branch. This is well-defined because the two endpoint occurrences of $I_e$ are separated in $\GammaPR$ by \Cref{lem:edge-spine-pr}. Every vertex branch therefore reaches its success leaf, so $L(\hatpistrat^*)=\{\ell_v:v\in V\}$.

Let $e$ be an oriented edge from $u$ to $v$. The root-to-$\ell_u$ path visits $I_e$ and uses action $0$, while the root-to-$\ell_v$ path visits the same original information set and uses action $1$.  Hence $\{\ell_u, \ell_v\}$ is an edge of $G_{\hatpistrat^*}$. This proves that every edge of $G$ appears in the conflict graph. Conversely, \Cref{lem:strategy-spanning-subgraph} applied to $\hatpistrat^*$ shows that no nonedge of $G$ can appear. The bijection $v\mapsto \ell_v$ is therefore a graph isomorphism from $G$ to $G_{\hatpistrat^*}$.
\end{proof}

We next point out a simple fact concerning the fractional chromatic number; the proof is immediate.

\begin{lemma}[Monotonicity under edge addition]
\label{lem:fractional-monotonicity}
If $G'$ is a spanning subgraph of a graph $G$, then $\chi_f(G')\le \chi_f(G)$.
\end{lemma}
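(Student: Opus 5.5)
The claim is monotonicity of $\chi_f$ under edge addition. The plan is to argue directly from the primal linear program for the fractional chromatic number recalled before the proof of \Cref{thm:fractional-formula}, namely minimizing $\sum_{S\in\Ind(G)} z_S$ subject to covering every vertex with total weight at least one. The key observation is that, since $G'$ and $G$ share the vertex set $V$ and $E(G')\subseteq E(G)$, every set independent in $G$ is also independent in $G'$. Deleting edges can only create new independent sets, never destroy old ones, so $\Ind(G)\subseteq\Ind(G')$.

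Next, I would take an optimal primal solution $\boldsymbol{z}$ for $G$ and extend it to $G'$. Set $z'_S\defeq z_S$ for $S\in\Ind(G)$ and $z'_S\defeq 0$ for $S\in\Ind(G')\setminus\Ind(G)$. This $\boldsymbol{z}'$ is nonnegative. For each vertex $v$, the covering constraint holds because $\sum_{S\in\Ind(G'):v\in S}z'_S=\sum_{S\in\Ind(G):v\in S}z_S\ge1$. The objective value is unchanged, so $\chi_f(G')\le\sum_S z'_S=\chi_f(G)$.

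Alternatively, one can argue through the dual formulation~\eqref{eq:dual-color}. Any $\boldsymbol{w}$ that is dual-feasible for $G'$ satisfies the constraints indexed by $\Ind(G')\supseteq\Ind(G)$, so it is also dual-feasible for $G$. This again gives $\chi_f(G')\le\chi_f(G)$. I do not expect any genuine obstacle. The only point needing care is the direction of the inclusion of independent-set families, and noting that the vertex sets coincide, so the covering constraints range over the same $V$.
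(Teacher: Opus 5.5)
Your proof is correct. The paper states this lemma without proof (it calls the proof ``immediate''), and your observation that $\Ind(G)\subseteq\Ind(G')$, so that an optimal primal solution for $G$ extends to $G'$ (equivalently, the dual feasible region for $G'$ sits inside that for $G$), is exactly the standard one-line argument being alluded to.
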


\NPfractional*

\begin{proof}
The size and no-absentmindedness statements were established in~\Cref{lem:edge-spine-size}. Let us consider an arbitrary refined pure strategy $\hatpistrat$. By \Cref{lem:one-leaf-per-vertex}, $G_{\hatpistrat}\cong {G'}_{\hatpistrat}$, and by \Cref{lem:strategy-spanning-subgraph}, ${G'}_{\hatpistrat}$ is a spanning subgraph of $G$. Therefore, \Cref{lem:fractional-monotonicity} gives $\chi_f(G_{\hatpistrat}) = \chi_f({G'}_{\hatpistrat}) \le \chi_f(G)$. Since this holds for every $\hatpistrat\in\Pi(\GammaPR)$,
\begin{equation}
\label{eq:embedding-upper}
        \max_{\hatpistrat\in\Pi(\GammaPR)}\chi_f(G_{\hatpistrat})
        \le \chi_f(G).
\end{equation}
On the other hand, \Cref{lem:all-continuing-realizes} gives a refined pure strategy $\hatpistrat^*$ with $G_{\hatpistrat^*}\cong G$, and hence
\begin{equation}
\label{eq:embedding-lower}
        \max_{\hatpistrat\in\Pi(\GammaPR)}\chi_f(G_{\hatpistrat})
        \ge \chi_f(G_{\hatpistrat^*})
        =\chi_f(G).
\end{equation}
Combining \eqref{eq:embedding-upper} and \eqref{eq:embedding-lower}, we have
\[
        \operatorname{VoR}(\Gamma_G)
        =\max_{\hatpistrat\in\Pi(\GammaPR)}\chi_f(G_{\hatpistrat})
        =\chi_f(G).
\]
\end{proof}

\begin{corollary}
\label{cor:vor-np-hard}
Exact computation of value of recall is \NP-hard, even when restricted
to the linear-size family of decision problems with no absentmindedness
$\{\Gamma_G\}$ from \Cref{def:graph-embedding-construction}.
\end{corollary}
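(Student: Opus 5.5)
This is an immediate corollary of \Cref{thm:graph-embedding}, combined with the known hardness of exactly computing the fractional chromatic number. I will reduce from that problem: given a graph $G$, decide whether $\chi_f(G)\ge\tau$, or compute $\chi_f(G)$ exactly. This problem is \NP-hard; this is classical, via \citet{Grotschel81:Ellipsoid} and \citet{LundYannakakis94:Hardness}.

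Given $G=(V,E)$, I build $\Gamma_G$ as in \Cref{def:graph-embedding-construction}. By \Cref{lem:edge-spine-size}, this takes time $O(|V|+|E|)$, and $\Gamma_G$ has no absentmindedness, one chance node with rational (uniform) probabilities, and size linear in $|G|$. \Cref{thm:graph-embedding} then gives $\operatorname{VoR}(\Gamma_G)=\chi_f(G)$ exactly. Hence any algorithm computing $\operatorname{VoR}$ exactly on the family $\{\Gamma_G\}$ computes $\chi_f(G)$ with only linear overhead. In the same way, \textsc{VoR}-\textsc{Threshold} with threshold $\tau$ on $\Gamma_G$ decides whether $\chi_f(G)\ge\tau$.

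The only point needing care is the precise form of the source hardness. If one prefers a decision-problem source, one can use the fact that deciding $\chi_f(G)\le 3$ versus $\chi_f(G)$ large is \NP-hard. Alternatively, exact computation of $\chi_f$ is \NP-hard by \citet{LundYannakakis94:Hardness}. Since the reduction preserves the value exactly, either version transfers.

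I would also remark that the output value is a rational number of polynomial bit length, by \Cref{cor:common-fractional-consequences}, so \enquote{exact computation} is a well-posed problem. No step is a real obstacle; the substance is already contained in \Cref{thm:graph-embedding}.
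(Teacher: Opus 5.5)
Your proposal is correct and follows the paper's proof: both compose the exact identity $\operatorname{VoR}(\Gamma_G)=\chi_f(G)$ from \Cref{thm:graph-embedding} with the known \NP-hardness of fractional coloring. The paper cites the \NP-completeness of deciding $\chi_f(G)\le r$; note that the threshold question $\chi_f(G)\ge\tau$ you use is \coNP-complete, so it is \NP-hard only under Turing reductions (which suffices for exact computation), and your ``$\chi_f\le 3$ versus large'' gap is unnecessary and would need a citation, since the plain question $\chi_f(G)\le 3$ is already \NP-hard, e.g.\ by \Cref{lem:fglss-gap}.
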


\begin{proof}
For a rational number $r > 0$, it is \NP-complete to decide whether a given finite graph $G$ satisfies $\chi_f(G)\le r$~\citep{ScheinermanUllman13:Fractional}. Given $G$, we can construct the decision problem $\Gamma_G$ from~\Cref{def:graph-embedding-construction}.  By \Cref{thm:graph-embedding}, this is a polynomial-time
reduction and satisfies $\operatorname{VoR}(\Gamma_G)=\chi_f(G)$. As a result, a polynomial-time algorithm computing
$\operatorname{VoR}(\Gamma)$ exactly would decide an \NP-complete problem.
\end{proof}

\subsubsection{\texorpdfstring{$\class{\Sigma^p_2}$-hardness}{Sigma-2-p hardness}}

We next establish $\class{\Sigma^p_2}$-hardness. To begin with, we provide a standard conversion to 3-CNF, which will be used for reducing to 3-colorability.

\begin{lemma}
\label{lem:restriction-preserving-3cnf}
For any CNF formula $\psi$ with no empty clause and any finite set $P$ of variables, one can construct in polynomial time a 3-CNF formula $\psi_3$ that contains every variable in $P$ and has the following property: for every partial truth assignment $\rho$ to $P$, the restricted formula $\psi|_\rho$ is satisfiable if and only if $\psi_3|_\rho$ is satisfiable.
\end{lemma}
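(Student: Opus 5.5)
We will use the standard Tseitin-style clause splitting with fresh auxiliary variables, and choose the auxiliary variables disjoint from $P$ and from the variables of $\psi$. The equivalence then holds for every partial assignment $\rho$ to $P$, because $\rho$ never touches the auxiliaries.

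\textbf{Construction.} Process the clauses of $\psi$ one at a time.
\begin{enumerate}[label=(\roman*)]
\item A clause $C=(l_1\vee\cdots\vee l_k)$ with $k\le 3$ is kept as is.
\item A clause with $k\ge 4$ is replaced by the chain
\[
(l_1\vee l_2\vee z_1)\wedge(\neg z_1\vee l_3\vee z_2)\wedge\cdots\wedge(\neg z_{k-3}\vee l_{k-1}\vee l_k),
\]
where $z_1,\ldots,z_{k-3}$ are fresh variables private to $C$.
\end{enumerate}
To ensure that $\psi_3$ contains every variable of $P$, we add for each $p\in P$ not already occurring the tautological clause $(p\vee\neg p)$. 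If one prefers to avoid tautologies, the alternative is $(p\vee y_p)$ together with clauses forcing a fresh $y_p$ true, such as $(y_p\vee w)\wedge(y_p\vee\neg w)$. Either choice is satisfied under every $\rho$, or is satisfiable for each extension, so it does not affect satisfiability. The total size is linear in $|\psi|+|P|$.

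\textbf{Correctness.} Fix a partial assignment $\rho$ to $P$. Write $\psi|_\rho$ for the formula obtained by substituting $\rho$ (true literals satisfy their clauses, false literals are deleted), and do the same for $\psi_3$.

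For the forward direction ($\Rightarrow$), take a satisfying assignment $\beta$ of $\psi|_\rho$ on the remaining original variables. Extend it to the auxiliaries of each long clause $C$ in the usual way. Let $i$ be the first index such that $l_i$ is true under $\rho\cup\beta$. Set $z_j$ true for $j<i-1$ and false for $j\ge i-1$, with appropriate boundary handling. Every chain clause is then satisfied, so $\psi_3|_\rho$ is satisfiable.

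For the backward direction ($\Leftarrow$), suppose $\psi_3|_\rho$ is satisfied by some assignment. Its restriction to the original variables, together with $\rho$, satisfies every short clause directly. For a long clause, if all of $l_1,\ldots,l_k$ were false, the chain would force $z_1$ true, then $z_2$ true, and so on up to $z_{k-3}$, leaving the last chain clause unsatisfied. This is a contradiction, so each original clause is satisfied.

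The key point is that substituting $\rho$ commutes with the construction. Restricting literals of $P$ inside a chain is exactly the same as restricting them in $C$, since the auxiliaries are untouched. The hypothesis that $\psi$ has no empty clause is used only so that every clause yields a nonempty chain or short clause; an empty clause would have to map to an empty 3-clause, which is fine but degenerate.

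\textbf{Main obstacle.} The only real subtlety is the case where $\rho$ satisfies or falsifies some literals inside a long chain. The argument above already handles this uniformly, because it is stated for arbitrary total assignments extending $\rho$, so I expect no genuine difficulty beyond careful bookkeeping.
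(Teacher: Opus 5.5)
You have the same construction as the paper for the parts you treat. Clauses of length $k\ge 4$ get the chain with clause-private fresh variables, variables of $P$ absent from $\psi$ get the tautological clause $(p\vee\neg p)$, and restriction by $\rho$ commutes with the construction because the auxiliaries are disjoint from $P$. Your forward and backward directions are correct. In fact the rule ``$z_j$ true iff $j<i-1$'' works for every position $i$ of the first true literal, so no boundary handling is needed.

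The gap is that your $\psi_3$ is only a 3-CNF formula if ``3-CNF'' means at most three literals per clause. The paper means exactly three: its downstream 3-coloring reduction handles only clauses $(\ell_1\vee\ell_2\vee\ell_3)$, via the gadgets $\mathrm{NAE}(\ell_1,\ell_2,w)$ and $\mathrm{NAE}(\bar w,\ell_3,\mathsf F)$. Its proof therefore also pads short clauses, using fresh $y,z$:
\begin{itemize}
\item a unit clause $(\ell)$ becomes $(\ell\vee y\vee z)\wedge(\ell\vee y\vee\neg z)\wedge(\ell\vee\neg y\vee z)\wedge(\ell\vee\neg y\vee\neg z)$;
\item a binary clause $(\ell_1\vee\ell_2)$ becomes $(\ell_1\vee\ell_2\vee y)\wedge(\ell_1\vee\ell_2\vee\neg y)$.
\end{itemize}
You keep every clause of length at most $3$ unchanged. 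Your added $(p\vee\neg p)$ is itself binary, as are $(p\vee y_p)$, $(y_p\vee w)$ and $(y_p\vee\neg w)$ in your alternative. So under the paper's convention your output is not a 3-CNF formula.

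The fix is routine and uses exactly your freshness argument. For any fixed values of the old variables, each padded conjunction is satisfiable over its new variables iff the original short clause is true. Under the weaker ``at most three literals'' convention, your proof is complete as written.
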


\begin{proof}
First, for each variable $p\in P$ that does not occur in $\psi$, we can add the tautological binary clause $(p\vee\neg p)$.  This does not affect satisfiability under any restriction of $P$, and it ensures that every variable in $P$ occurs. Each clause is transformed independently, using new auxiliary variables for that clause only.  A unit clause $(\ell)$ is replaced by
\[
        (\ell\vee y\vee z)
        \wedge(\ell\vee y\vee\neg z)
        \wedge(\ell\vee\neg y\vee z)
        \wedge(\ell\vee\neg y\vee\neg z).
\]
For any fixed value of the old variables, this conjunction is satisfiable over $y,z$ if and only if $\ell$ is $\mathsf{TRUE}$.  A binary clause $(\ell_1\vee\ell_2)$ is replaced by
\[
        (\ell_1\vee\ell_2\vee y)
        \wedge
        (\ell_1\vee\ell_2\vee\neg y),
\]
which is satisfiable over $y$ if and only if $\ell_1\vee\ell_2$ is $\mathsf{TRUE}$.  A clause of length three is left unchanged.

For a clause $C=(\ell_1\vee\cdots\vee\ell_k)$ of length $k>3$, we introduce fresh variables $y_1,\ldots,y_{k-3}$ and replace $C$ by
\[
        (\ell_1\vee\ell_2\vee y_1)
        \wedge
        \bigwedge_{j=2}^{k-3}(\neg y_{j-1}\vee\ell_{j+1}\vee y_j)
        \wedge
        (\neg y_{k-3}\vee\ell_{k-1}\vee\ell_k),
\]
where the middle conjunction is empty when $k=4$.  If all literals $\ell_i$ are $\mathsf{FALSE}$, the first clause forces $y_1=\mathsf{TRUE}$, the next forces $y_2=\mathsf{TRUE}$, and so on, making the last clause $\mathsf{FALSE}$.  If some $\ell_i$ is $\mathsf{TRUE}$, the auxiliary variables can be chosen to propagate the value $\mathsf{TRUE}$ only until the first $\mathsf{TRUE}$ literal is reached and then set the remaining $y$'s $\mathsf{FALSE}$; this satisfies the chain. Hence, the chain is satisfiable over the new variables exactly when $C$ is $\mathsf{TRUE}$. Since all auxiliary variables are new and no variable of $P$ is altered, satisfiability under every restriction of $P$ is preserved.
\end{proof}

\rstlemmaLemOptionalEdgeThreecol*
\begin{proof}
Let $\overline{\Phi}(X,Y)\defeq\neg\Phi(X,Y)$. This is a CNF formula. We apply \Cref{lem:restriction-preserving-3cnf} to $\overline{\Phi}$, taking $X$ as the initial variable set. This gives a $3$-CNF formula $\overline{\Phi}_3(X,Y,Z)$, where $Z$ is the set of new auxiliary variables, such that, for every $\boldsymbol{a}\in\{0,1\}^n$,
\[
\begin{aligned}
        \overline{\Phi}_3(\boldsymbol{a},Y,Z)
        \text{ is satisfiable over }Y\cup Z
        \quad\Longleftrightarrow\quad
        \overline{\Phi}(\boldsymbol{a},Y)
        \text{ is satisfiable over }Y.
\end{aligned}
\]
We now build a reduction from $\overline{\Phi}_3$ to $3$-colorability. We add the triangle with edges $\{\mathsf T, \mathsf F\}$, $\{\mathsf F, \mathsf B\}$, and $\{\mathsf B, \mathsf T\}$.  A vertex adjacent to $\mathsf B$ will be restricted to the two truth colors, namely the color of $\mathsf T$ and the color of $\mathsf F$.  For every variable $z$ of $\overline{\Phi}_3$, we add two vertices $z$ and $\bar z$, make both adjacent to $\mathsf B$, and add the edge $\{z, \bar z\}$. This means that exactly one of $z,\bar z$ has the color of $\mathsf T$ in any proper $3$-coloring.  We interpret this as the truth value of $z$.

We next use a \emph{not-all-equal} gadget.  Namely, given three terminal vertices $x,y,z$, each already restricted to truth colors, we add three new vertices $x',y',z'$, make $x',y',z'$ a triangle, and add the edges $\{x, x'\}$, $\{y, y'\}$, and $\{z, z'\}$.  This gadget is colorable, extending the terminal colors, if and only if the three terminal colors are not all equal.  Indeed, if the terminals are all equal, then the primed triangle has only two available colors.  If the terminal colors are not all equal, the primed vertex adjacent to a minority terminal receives the majority truth color, and the other two primed vertices receive the remaining truth color and the color of $\mathsf B$.

For a $3$-clause $(\ell_1\vee\ell_2\vee\ell_3)$, where each literal is represented by its literal vertex, we introduce a new vertex $w$ and its negation $\bar w$, both adjacent to $\mathsf B$, with the edge $\{w, \bar w\}$.  We add two not-all-equal gadgets: $\mathrm{NAE}(\ell_1,\ell_2,w)$ and $\mathrm{NAE}(\bar w,\ell_3,\mathsf F)$.

For fixed literal truth values, there exists a truth color for $w$ satisfying both gadgets if and only if at least one of $\ell_1,\ell_2,\ell_3$ is $\mathsf{TRUE}$.  If all three literals are $\mathsf{FALSE}$, the first gadget forces $w$ to be $\mathsf{TRUE}$, in which case $\bar w$ is $\mathsf{FALSE}$ and the second gadget sees three $\mathsf{FALSE}$ terminals.  Conversely, if the clause is $\mathsf{TRUE}$, we choose $w=\mathsf{FALSE}$ unless $\ell_1=\ell_2=\mathsf{FALSE}$; in that case $\ell_3=\mathsf{TRUE}$, and choosing $w=\mathsf{TRUE}$ satisfies both gadgets. Thus, before optional edges are added, the graph is $3$-colorable exactly when $\overline{\Phi}_3$ is satisfiable. $G_0$ is the graph obtained from all variable and clause gadgets, before adding any optional edges. Now, for every existential variable $x_i$, we define
\[
        \lambda_i^{(0)}\defeq x_i
        \text{ and }
        \lambda_i^{(1)}\defeq \bar x_i,
\]
where the right-hand sides denote the corresponding literal vertices. Further,
\[
        e_i^{(0)}\defeq\{x_i,\mathsf T\}
        \text{ and }
        e_i^{(1)}\defeq\{\bar x_i,\mathsf T\}.
\]

Let us fix $\boldsymbol{a}\in\{0,1\}^n$.  If $a_i=0$, the optional edge $e_i^{(0)}$ prevents the vertex $x_i$ from receiving the color of $\mathsf T$.  Since $x_i$ is also adjacent to $\mathsf B$, it must receive the color of $\mathsf F$, so the variable $x_i$ is $\mathsf{FALSE}$.  If $a_i=1$, the optional edge $e_i^{(1)}$ similarly forces $\bar x_i$ to have the color of $\mathsf F$, and therefore forces $x_i$ to have the color of $\mathsf T$, so $x_i$ is $\mathsf{TRUE}$.  As a result, the selected optional edges enforce $X=\boldsymbol{a}$. We have thus shown that
\[
\begin{aligned}
        G_{\boldsymbol{a}}\text{ is $3$-colorable}
        &\Longleftrightarrow
        \exists Y,Z\,
        \overline{\Phi}_3(\boldsymbol{a},Y,Z) \\
        &\Longleftrightarrow
        \exists Y\,
        \overline{\Phi}(\boldsymbol{a},Y) \\
        &\Longleftrightarrow
        \exists Y\,
        [\Phi(\boldsymbol{a},Y)=\mathsf{FALSE}].
\end{aligned}
\]
Negating this equivalence gives
\[
        G_{\boldsymbol{a}}\text{ is not $3$-colorable}
        \quad\Longleftrightarrow\quad
        \forall Y\,\Phi(\boldsymbol{a},Y).
\]
The entire construction is polynomial, concluding the proof.
\end{proof}

\begin{figure}[t]
\centering
\begin{tikzpicture}[
    x=1cm,y=1cm,
    >=Latex,
    palette/.style={circle,draw,minimum size=8mm,inner sep=0pt,font=\small,fill=white},
    literal/.style={circle,draw,minimum size=8mm,inner sep=0pt,font=\small,fill=white},
    fvertex/.style={rectangle,draw,rounded corners=2pt,minimum width=15mm,minimum height=7mm,align=center,font=\scriptsize,fill=white},
    box/.style={draw,rounded corners=2pt,align=center,inner xsep=7pt,inner ysep=5pt,font=\small},
    arrow/.style={->,thick},
    note/.style={align=center,font=\footnotesize}
]
\node[box,fill=white,minimum width=43mm,draw=none] (lefttitle) at (0,2.1) {
    optional edge in $G_{\boldsymbol{a}}$
};
\node[palette] (T) at (0.8,0.65) {$\mathsf T$};
\node[palette] (F) at (-0.05,-0.65) {$\mathsf F$};
\node[palette] (B) at (1.65,-0.65) {$\mathsf B$};
\draw (T) -- (F) -- (B) -- (T);
\node[literal] (lam) at (-1.35,0.65) {$\lambda_i^{(b)}$};
\draw (lam) -- (B);
\draw[densely dashed,very thick] (lam) -- (T);
\node[note] at (0,-1.75) {
    if $a_i=b$, we add $e_i^{(b)}=\{\lambda_i^{(b)}, \mathsf T\}$, \\
    forcing $x_i=b$
};

\node[box,fill=white,minimum width=46mm,draw=none] (righttitle) at (6.15,2.1) {
    Graph $\mathcal{F}(G_{\boldsymbol{a}})$
};
\foreach \c/\y in {0/0.85,1/0,2/-0.85} {
    \node[fvertex] (lc\c) at (4.7,\y) {$\{\lambda_i^{(b)},\c\}$};
    \node[fvertex] (tc\c) at (7.6,\y) {$\{\mathsf T,\c\}$};
    \draw[densely dashed,very thick] (lc\c) -- (tc\c);
}
\draw (lc0) -- (lc1) -- (lc2);
\draw (lc2) to[out=180,in=180,looseness=1.15] (lc0);
\draw (tc0) -- (tc1) -- (tc2);
\draw (tc2) to[out=0,in=0,looseness=1.15] (tc0);
\node[note] at (6.15,-1.75) {
    one optional edge in $G_{\boldsymbol{a}}$ creates\\
    three same-color edges
};

\draw[arrow] (2.,0.25) -- (3.2,0.25);

\end{tikzpicture}
\caption{The optional edge and the corresponding lift per~\Cref{lem:fglss-gap}. Choosing selector bit $a_i=b$ adds the optional edge incident to $\lambda_i^{(b)}$, which forces $x_i=b$ in the $3$-colorability encoding of $\neg\Phi$. In the FGLSS graph, that single choice induces three conflict edges, one for each color label. If $G_{\boldsymbol{a}}$ is $3$-colorable, then $\chi_f(\mathcal{F}_{\boldsymbol{a}})=3$; if $G_{\boldsymbol{a}}$ is not $3$-colorable, then $\chi_f(\mathcal{F}_{\boldsymbol{a}})\ge 3+\frac{3}{r-1}>\tau_r$ (\Cref{lem:fglss-gap}).}
\label{fig:optional-edge-fglss}
\end{figure}

\rstlemmaLemFglssGap*
\begin{proof}
If $\phi:W\to\{0,1,2\}$ is a proper 3-coloring of $G$, then
\[
        (w,c)\longmapsto c+\phi(w)\pmod 3
\]
is a proper 3-coloring of $\mathcal{F}(G)$.  Since the three vertices $(w,0),(w,1),(w,2)$ form a triangle for every $w$, the fractional chromatic number is exactly $3$.

If $G$ is not 3-colorable, every independent set of $\mathcal{F}(G)$ has size at most $r-1$.  An independent set of size $r$ would contain exactly one vertex $(w,c_w)$ for every $w\in W$, and the second edge rule would make $w\mapsto c_w$ a proper 3-coloring of $G$.  Thus, $\alpha(\mathcal{F}(G))\le r-1$.  For every graph $G'$, summing the vertex-covering constraints in the fractional-coloring LP gives
\[
        |V(G')|\le \alpha(G')\sum_{S\in\Ind(G')}z_S,
\]
and therefore $\chi_f(G')\ge |V(G')|/\alpha(G')$.  Since $|V(\mathcal{F}(G))|=3r$, the claimed lower bound follows.
\end{proof}

We conclude the proof by realizing the optional-edge graph family through the use of conflict graphs. We fix $G_0=(W,E_0)$ and optional edges $e_i^{(b)}=\{\lambda_i^{(b)}, \mathsf T\}$ from \Cref{lem:optional-edge-3col}.  For $\boldsymbol{a}\in\{0,1\}^n$, we write
\[
        \mathcal{F}_{\boldsymbol{a}}\defeq \mathcal{F}(G_{\boldsymbol{a}})
        \text{ and }
        G_{\boldsymbol{a}}\defeq G_0+\{e_i^{(a_i)}:i=1,\ldots,n\}.
\]
We construct a decision problem $\Gamma$ with no absentmindedness satisfying
\[
        \max_{\hatpistrat\in\Pi(\GammaPR)}\chi_f(G_\hatpistrat)
        =
        \max_{\boldsymbol{a}\in\{0,1\}^n}\chi_f(\mathcal{F}_{\boldsymbol{a}}).
\]
We write $Q \defeq W \times\{0,1,2\}$; these are the vertices of every $\mathcal{F}_{\boldsymbol{a}}$. The root of $\Gamma$ is a chance node choosing each $q\in Q$ with probability $1/(3r)$. For every fixed edge $e=\{q,q'\}$ of $\mathcal{F}(G_0)$, we create an original information set $I_e$ with action set $\{0,1\}$.  We orient $e$ arbitrarily.  At an occurrence of $I_e$ on the branch labeled by the tail endpoint, action $0$ is the distinguished continuing action; at an occurrence on the branch labeled by the head endpoint, action $1$ is the distinguished continuing action.  The other action goes immediately to a terminal leaf.

For every optional edge $e_i^{(b)}=\{\lambda_i^{(b)}, \mathsf T\}$ and every color $c\in\{0,1,2\}$, we create the optional edge $e_{i,b,c} \defeq\{(\lambda_i^{(b)},c), (\mathsf T,c)\}$ and an original information set $I_{i,b,c}$ with action set $\{0,1\}$.  At the occurrence on branch $(\lambda_i^{(b)},c)$, action $0$ is distinguished; at the occurrence on branch $(\mathsf T,c)$, action $1$ is distinguished.  Again, non-distinguished actions terminate immediately.

For every $i$, we create a selector information set $S_i$ with actions $\{0,1\}$.  This information set is placed as the first player information set on each branch $(\lambda_i^{(0)},c)$ and $(\lambda_i^{(1)},c)$, for $c=0,1,2$.  If the selector action is $b$, then branch $(\lambda_i^{(b)},c)$ later visits $I_{i,b,c}$, while branch $(\lambda_i^{(1-b)},c)$ bypasses its corresponding optional information set.  Branches not labeled by distinguished literal vertices have no selector.

After the selector, if there is one, every continuation visits a singleton \emph{private tag} information set with one dummy action. The tag is unique to the root chance label $q$ and, for distinguished literal branches, to the selector action just taken.  Branches without a selector receive a tag unique to $q$.  These tags ensure that the perfect-recall refinement remembers the branch before later shared tests are encountered. After the private tag, the branch visits conflict-test information sets in a fixed global order.  On branch $q$, the list consists of
\begin{enumerate}[label=(\roman*)]
\item all fixed information sets $I_e$ with $q\in e$;
\item if $q=(\mathsf T,c)$, all optional information sets $I_{i,b,c}$ for all $i$ and $b$;
\item if $q=(\lambda_i^{(b)},c)$ and the selector action on this branch was $b$, the optional information set $I_{i,b,c}$.
\end{enumerate}
At every conflict-test information set, the distinguished action continues to the next test and the other action terminates immediately.  If all distinguished actions on the branch are chosen, play terminates at a success leaf.  Thus, under every pure strategy of $\GammaPR$ and every root chance outcome $q$, exactly one terminal leaf below $q$ is reached.

\begin{figure}[t]
\centering
\begin{tikzpicture}[
    x=1cm,y=1cm,
    >=Latex,
    stage/.style={draw=black!65,rounded corners=2pt,align=center,inner xsep=3pt,inner ysep=4pt,font=\small,minimum width=20mm,minimum height=12mm,text width=19.5mm,fill=white},
    chance/.style={stage},
    selector/.style={stage},
    tag/.style={stage},
    test/.style={stage},
    leaf/.style={stage},
    smalltest/.style={circle,draw=black!65,fill=white,minimum size=8.5mm,inner sep=0pt,font=\small},
    smallleaf/.style={rectangle,draw=black!65,fill=white,rounded corners=2pt,minimum width=13mm,minimum height=7mm,inner sep=2pt,font=\small},
    infoset/.style={draw=black!65,dashed,rounded corners=3pt,inner xsep=7pt,inner ysep=6pt},
    arrow/.style={->,thick},
    lane/.style={font=\footnotesize,align=right},
    labelbox/.style={font=\scriptsize,align=center,fill=white,inner sep=1.2pt},
    note/.style={font=\footnotesize,align=center,text width=42mm},
    result/.style={draw=black!55,rounded corners=2pt,align=center,font=\footnotesize,inner xsep=6pt,inner ysep=5pt,text width=40mm,fill=white}
]

\node[chance] (branch) at (0,0) {
    branch $q $
};
\node[selector] (selector) at (2.75,0) {
    selector $S_i$
};
\node[tag] (tag) at (5.50,0) {
    private tag
};
\node[test] (tests) at (8.25,0) {
    shared tests\\
    $I_e,\ I_{i,b,c}$
};
\node[leaf] (success) at (11.00,0) {
 leaf $\ell_q$
};

\draw[arrow] (branch) -- (selector);
\draw[arrow] (selector) -- (tag);
\draw[arrow] (tag) -- (tests);
\draw[arrow] (tests) -- (success);

\begin{scope}[xshift=2.05cm]
\node[lane] at (0.00,-2.35) {tail branch $q$};
\node[lane] at (0.00,-3.55) {head branch $q'$};
\node[smalltest] (hq) at (3.75,-2.35) {$I_e$};
\node[smalltest] (hqp) at (3.75,-3.55) {$I_e$};
\node[smallleaf] (lq) at (6.95,-2.35) {$\ell_q$};
\node[smallleaf] (lqp) at (6.95,-3.55) {$\ell_{q'}$};

\draw[arrow] (1.05,-2.35) -- (hq);
\draw[arrow] (1.05,-3.55) -- (hqp);
\draw[arrow] (hq) -- node[labelbox,above] {$\textsf{0}$ continues} (lq);
\draw[arrow] (hqp) -- node[labelbox,below] {$\textsf{1}$ continues} (lqp);

\node[infoset,fit=(hq)(hqp)] (Hbox) {};
\node[labelbox] at (3.75,-1.28) {one original information set};
\draw[black!75,very thick] (lq) -- node[labelbox,right] {conflict} (lqp);
\end{scope}

\end{tikzpicture}
\caption{Realizing the graph family as conflict graphs. Chance first selects a vertex $q \in Q$ of the graph. Selector information sets choose the optional-edge vector $\boldsymbol{a}$, unique tags separate histories in the perfect-recall refinement, and shared conflict-test information sets create precisely the edges of $\mathcal{F}_{\boldsymbol{a}}$ among the (success) leaves. In particular, different required actions at the same information set of the original game create exactly the edge $\{\ell_q, \ell_{q'}\}$. Optional tests $I_{i,b,c}$ are routed in only when the selector action satisfies $a_i=b$. }
\label{fig:vor-realization-decision-problem}
\end{figure}

\begin{lemma}
\label{lem:realization-decision-problem-structure}
The decision problem $\Gamma$ has no absentmindedness.  In its coarsest perfect-recall refinement, each selector $S_i$ remains one information set, and every occurrence of every conflict-test information set is split according to the branch and selector history leading to it. As a result, a pure strategy of $\GammaPR$ consists of selector bits $a_i\in\{0,1\}$, together with independent choices at all refined conflict-test occurrences.
\end{lemma}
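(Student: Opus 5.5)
The plan is to compute $\operatorname{seq}(h)$ explicitly for every player node $h$, and then to read off the partition of \Cref{def:pr-refinement}. Throughout I will use the branch structure of the construction. The root chance node selects $q\in Q$. After that, the player moves along a deterministic spine: an optional selector $S_i$, then a private tag, then a fixed global sequence of conflict tests. A non-distinguished action at any test exits immediately to a terminal leaf.

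\emph{No absentmindedness.} A root-to-leaf path lies on a single branch $q$. The selector appears at most once on a branch, and so does the tag. The conflict-test list for branch $q$ consists of distinct information sets, since items (i)--(iii) are disjoint families of information sets and each listed set appears once. Hence no information set is visited twice on any path.

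\emph{Selectors remain whole.} Every node of $S_i$ is the first player node on its branch. So for each of the six branches $(\lambda_i^{(b)},c)$ we have $\operatorname{seq}=\varnothing$. By \Cref{def:pr-refinement} all nodes of $S_i$ are equivalent, and $S_i$ is a single refined information set. This is the key point that makes the selector bit $a_i$ shared across the three colors and both literals.

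\emph{Conflict tests are split.} Take any node $h$ in a conflict-test set $I$. Its sequence $\operatorname{seq}(h)$ contains the private tag information set visited on its branch, together with the selector information set and action if present. The tag is unique to $q$, and on selector branches also to the selector action. Two nodes of $I$ on different branches therefore have different sequences, as do two nodes on the same selector-type branch reached after different selector actions. On a fixed branch with fixed selector action, $I$ is visited at most once by the previous step. So within $I$ the nodes with equal sequence are exactly the single node for that branch--selector history. Each refined part is then a singleton occurrence determined by its history, as claimed.

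The main subtlety is checking that, within one branch after a fixed selector action, the later sequence does not split occurrences further. This needs care about how many nodes each occurrence has: along the spine, each test is reached only via the distinguished actions of the earlier tests, so there is exactly one node. The statement speaks of splitting "according to branch and selector history", and that matches these singletons.

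Finally, the pure-strategy description follows directly. Refined information sets are the selectors $S_i$, which carry a bit $a_i$; the one-action marker and tag sets, which carry no choice; and the separated conflict-test occurrences. A pure strategy of $\GammaPR$ chooses an action at each of these independently.
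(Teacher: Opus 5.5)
Your proposal is correct and follows the paper's proof in each part. No absentmindedness holds because a branch visits at most one selector, one tag, and a list of distinct tests. Selectors stay whole because all their nodes have empty sequence. Conflict-test occurrences are separated because the preceding private tag (and selector action) differs.

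Your extra check that each branch--selector subtree contains exactly one node of each test is a detail the paper leaves implicit. The mention of ``marker'' sets is a harmless slip, since this construction has only private tags.
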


\begin{proof}
Along any root-to-leaf path, the play visits at most one selector, then one private tag, and then a list of distinct conflict-test information sets.  If a non-distinguished action is taken, the path terminates immediately. Thus, no path visits the same original information set twice, so the decision problem has no absentmindedness.

All nodes in $S_i$ occur before the player has visited any information set, so they have the same remembered sequence and remain together in the coarsest perfect-recall refinement. A conflict-test occurrence is preceded by a private tag identifying the root branch and, where relevant, the selector action.  Different occurrences of the same original conflict-test information set therefore have different remembered sequences and are separated in $\GammaPR$.
\end{proof}

\vorreal*

\begin{proof}
First, for a fixed $\boldsymbol{a}\in\{0,1\}^n$, $\hatpistrat^{(\boldsymbol{a})}$ denotes the refined pure strategy that chooses selector action $a_i$ at $S_i$ and chooses the distinguished continuing action at every refined conflict-test occurrence. Then every chance branch reaches its success leaf, so the support leaves of $\hatpistrat^{(\boldsymbol{a})}$ can be indexed by $Q$. Two success leaves indexed by $q,q'\in Q$ conflict exactly when $q$ and $q'$ are adjacent in $\mathcal{F}_{\boldsymbol{a}}$. Moreover, fixed edges of the graph are realized by the corresponding original information sets $I_e$.  The optional edge $\{(\lambda_i^{(b)},c), (\mathsf T,c)\}$ is realized by $I_{i,b,c}$ exactly when $a_i=b$: the $\mathsf T$ branch always visits $I_{i,b,c}$, while the $(\lambda_i^{(b)},c)$ branch visits it precisely under selector action $b$.  Selector information sets create no conflict because all histories visiting $S_i$ require the same selector action $a_i$, and private tags have only one action.  Thus, $G_{\hatpistrat^{(\boldsymbol{a})}}\cong \mathcal{F}_{\boldsymbol{a}}$, giving
\[
        \max_{\hatpistrat}\chi_f(G_\hatpistrat)
        \ge
        \max_{\boldsymbol{a}}\chi_f(\mathcal{F}_{\boldsymbol{a}}).
\]

Conversely, consider an arbitrary refined pure strategy $\hatpistrat$.  The selector action chosen by $\hatpistrat$ at $S_i$ is denoted by $a_i$.  Since the root chance node has one branch for every $q\in Q$, and each branch reaches exactly one terminal leaf under $\hatpistrat$, the support $L(\hatpistrat)$ is in bijection with $Q$.  We map each reached terminal leaf to its root branch label.

Under this bijection, $G_\hatpistrat$ is a spanning subgraph of $\mathcal{F}_{\boldsymbol{a}}$.  A conflict cannot be caused by a selector, because all reached histories visiting $S_i$ require the same action $a_i$; tags have one action and are private. Hence, every conflict arises at a fixed information set $I_e$ or an optional information set $I_{i,b,c}$.  In the first case, $e$ is an edge of $\mathcal{F}(G_0)$ and hence of $\mathcal{F}_{\boldsymbol{a}}$.  In the second case, both endpoint leaves can visit $I_{i,b,c}$ only if the literal endpoint $(\lambda_i^{(b)},c)$ is routed there by selector action $a_i=b$; therefore the corresponding optional edge belongs to $\mathcal{F}_{\boldsymbol{a}}$. This shows that every edge of $G_\hatpistrat$ maps to an edge of $\mathcal{F}_{\boldsymbol{a}}$. Fractional chromatic number is monotone under edge addition. Therefore, $\chi_f(G_\hatpistrat)\le\chi_f(\mathcal{F}_{\boldsymbol{a}})$. Taking the maximum over $\hatpistrat$ gives the reverse inequality.
\end{proof}

\rsttheoremThmVorSigmatwoComplete*

\begin{proof}
Membership was established in~\Cref{prop:vor-in-sigma2}.  For hardness, we start with an instance $\exists X\,\forall Y\,\Phi(X,Y)$ of \textsc{ExistsForallDNFSAT}, which is $\class{\Sigma^p_2}$-complete~\cite{Stockmeyer76:Polynomial}. We may assume that every term of $\Phi$ is nonempty; if an empty term is present, the source instance is trivially YES and can be mapped to a one-branch chance-only decision problem with threshold $1$, a fixed YES instance of \textsc{VoR}-\textsc{Threshold}.

We apply~\Cref{lem:optional-edge-3col} to $\Phi$. This gives a graph $G_0=(W,E_0)$ and optional edges $e_i^{(b)}$ such that, for every $\boldsymbol{a}\in\{0,1\}^n$,
\[
        G_{\boldsymbol{a}}\text{ is not $3$-colorable}
        \quad\Longleftrightarrow\quad
        \forall Y\,\Phi(\boldsymbol{a},Y).
\]
For each $\boldsymbol{a}$, let $\mathcal{F}_{\boldsymbol{a}}\defeq \mathcal{F}(G_{\boldsymbol{a}})$. We construct the decision problem $\Gamma$ with no absentmindedness realizing this optional-edge graph family as in \Cref{lem:vor-graph-realization}.  Thus, by \Cref{thm:fractional-formula,lem:vor-graph-realization},
\[
        \operatorname{VoR}(\Gamma)
        =
        \max_{\boldsymbol{a}\in\{0,1\}^n}
        \chi_f(\mathcal{F}_{\boldsymbol{a}}).
\]
We let $r=|W|$ and set
\[
        \tau\defeq \tau_r
        =
        3+\frac{3}{2(r-1)}
        =
        \frac{6r-3}{2r-2}.
\]
By \Cref{lem:fglss-gap},
\[
        \chi_f(\mathcal{F}_{\boldsymbol{a}})\ge \tau
        \quad\Longleftrightarrow\quad
        G_{\boldsymbol{a}}\text{ is not $3$-colorable}.
\]
Indeed, if $G_{\boldsymbol{a}}$ is $3$-colorable, then $\chi_f(\mathcal{F}_{\boldsymbol{a}})=3<\tau$, while if it is not $3$-colorable, then $\chi_f(\mathcal{F}_{\boldsymbol{a}})\ge 3+3/(r-1)>\tau$. Combining these equivalences gives
\[
\begin{aligned}
        \operatorname{VoR}(\Gamma)\ge \tau
        &\Longleftrightarrow
        \exists\boldsymbol{a}\in\{0,1\}^n:
        \chi_f(\mathcal{F}_{\boldsymbol{a}})\ge \tau \\
        &\Longleftrightarrow
        \exists\boldsymbol{a}\in\{0,1\}^n:
        G_{\boldsymbol{a}}\text{ is not $3$-colorable} \\
        &\Longleftrightarrow
        \exists\boldsymbol{a}\in\{0,1\}^n\,
        \forall Y\,\Phi(\boldsymbol{a},Y).
\end{aligned}
\]
This is exactly the YES condition of the source instance. Moreover, the construction is polynomial, completing the proof.
\end{proof}

\end{document}